\newtheorem{thm}{Theorem}%[section]
\theoremstyle{definition}
\newtheorem{conj}[thm]{Conjecture}
\newtheorem{defn}[thm]{Definition}
\newtheorem{cor}[thm]{Corollary} 
\newtheorem{lem}[thm]{Lemma}
\theoremstyle{remark}
\newtheorem{rem}[thm]{Remark}
\newtheorem{conv}[thm]{Convention}
\DeclareMathOperator{\cl }{cl}
\DeclareMathOperator{\setint }{int}
\DeclareMathOperator{\A}{Area}
\DeclareMathOperator{\minEW}{minEW}
\DeclareMathOperator{\maxEW}{maxEW}
\DeclareMathOperator{\EW}{EW}
\DeclareMathOperator{\CW}{CW}
\renewcommand{\S}{S_{\rm gen}}
\newcommand{\emax}{e_{\rm max}}
\newcommand{\emin}{e_{\rm min}}
\newcommand{\smax}{\Sigma_{\rm max}}
\newcommand{\smin}{\Sigma_{\rm min}'}
\newcommand{\scri}{\mathscr{I}}
\newcommand{\hmingen}{H_{\rm min,gen}}
\newcommand{\hmg}{H_{\rm max,gen}}
\newcommand{\hmax}{H_{\rm max}}
\title{Fundamental Complement of a Gravitating Region}
\author{Raphael Bousso and Sami Kaya}
\affiliation{Center for Theoretical Physics and Department of Physics,\\
University of California, Berkeley, California 94720, U.S.A. %and \\
%Lawrence Berkeley National Laboratory, Berkeley, CA 94720, U.S.A.
} 
\emailAdd{bousso@berkeley.edu}
\emailAdd{samikaya@berkeley.edu}
\abstract{
Any gravitating region $a$ in any spacetime gives rise to a generalized entanglement wedge, the hologram $e(a)$. Holograms exhibit properties expected of fundamental operator algebras, such as strong subadditivity, nesting, and no-cloning.  But the entanglement wedge $\EW$ of an AdS boundary region $B$ with commutant $\bar B$ satisfies an additional condition, complementarity: $\EW(B)$ is the spacelike complement of $\EW(\bar B)$ in the bulk.

Here we identify an analogue of the boundary commutant $\bar B$ in general spacetimes: given a gravitating region $a$, its \emph{fundamental complement} $\tilde a$ is the smallest wedge that contains all infinite world lines contained in the spacelike complement $a'$ of $a$. We refine the definition of $e(a)$ by requiring that it be spacelike to $\tilde a$. We prove that $e(a)$ is the spacelike complement of $e(\tilde a)$ when the latter is computed in $a'$.

We exhibit many examples of $\tilde a$ and of $e(a)$ in de Sitter, flat, and cosmological spacetimes. We find that a Big Bang cosmology (spatially closed or not) is trivially reconstructible: the whole universe is the entanglement wedge of any wedge inside it. But de Sitter space is not trivially reconstructible, despite being closed. We recover the AdS/CFT prescription by proving that $\EW(B)=e($causal wedge of $B$).}
\gdef\@fpheader{\mbox{}}
\begin{document}
\maketitle
\section{Introduction}
The past few decades have witnessed a remarkable convergence between gravity, quantum field theory, and information theory, culminating in the development of the holographic principle~\cite{Bousso:2002ju}—a radical proposal that the information content of a region of spacetime is encoded not within its volume, but rather on its boundary surface. This idea was originally motivated by the Bekenstein-Hawking entropy formula for black holes~\cite{Bekenstein:1972tm}, and later refined and made concrete through the Bousso bound~\cite{Bousso:1999xy}, which generalizes the concept of entropy bounds to arbitrary null hypersurfaces in spacetime. 

The holographic principle thus suggests a profound reduction in the number of degrees of freedom needed to describe a gravitational system, pointing toward a fundamentally nonlocal and lower-dimensional structure underlying quantum gravity. A concrete realization is provided by the Anti-de Sitter/Conformal Field Theory (AdS/CFT) correspondence~\cite{Maldacena:1997re}, which posits quantum gravity in asymptotically AdS spacetime is a CFT  defined on its conformal boundary. 

A particularly rich area of research within AdS/CFT is the study of subregion duality~\cite{Bousso:2012sj,Czech:2012bh,Bousso:2012mh}. A given subregion $B$ of the boundary CFT encodes information about a specific region EW($B$) in the bulk, the entanglement wedge of $B$. This correspondence has far-reaching implications for our understanding of bulk locality, quantum error correction, and the emergence of spacetime geometry from patterns of entanglement. A further refinement\footnote{Readers unfamiliar with the max-/min-refinement can safely ignore it by replacing $\maxEW$ and $\minEW$ by $\EW$, and replacing $\emax$ and $\emin$ by $e$, throughout this paper, except in Fig.~\ref{fig:AdS_2sided} (left).} \cite{Akers:2020pmf,Akers:2023fqr} distinguishes between the max-entanglement wedge --- roughly, the region that can be reconstructed with high probability --- and the min-entanglement wedge, which can be reconstructed with small probability. This distinction arose first in the quantum communication task of state merging~\cite{RenWol04a}. It becomes important when the bulk contains incompressible quantum states, for which the von Neumann entropy is poorly suited. The entanglement wedges of AdS boundary regions obey the complementarity relation
\begin{equation}\label{eq:oldcomp}
    \minEW(B)' = \maxEW(\bar B)~.
\end{equation}
That is, the largest bulk region that cannot be reconstructed from $B$ even with small probability is precisely the bulk region that can be reconstructed from $\bar B$ with high probability.

The concept of entanglement wedges was recently extended to arbitrary spacetimes. The role of the CFT region $B$ is played by any bulk wedge\footnote{A wedge is a spacetime region $a$ that equals its double spacelike complement $a''$; see Definitions~\ref{def:sc} and \ref{def:covwedge}. Equivalently, a wedge is the full causal development of a partial Cauchy slice. See Fig.~\ref{fig:union}.} $a$. A specific prescription then identifies a max- and a min-entanglement wedge to $a$~\cite{Bousso:2022hlz,Bousso:2023sya}. 
An interesting feature of the Bousso–Penington proposal is that a distinction between the max- and min-holograms, $\emax(a)$ and $\emin(a)$, arises even at the classical level.

\begin{figure}[t]
\begin{center}
  \includegraphics[width=0.8\linewidth]{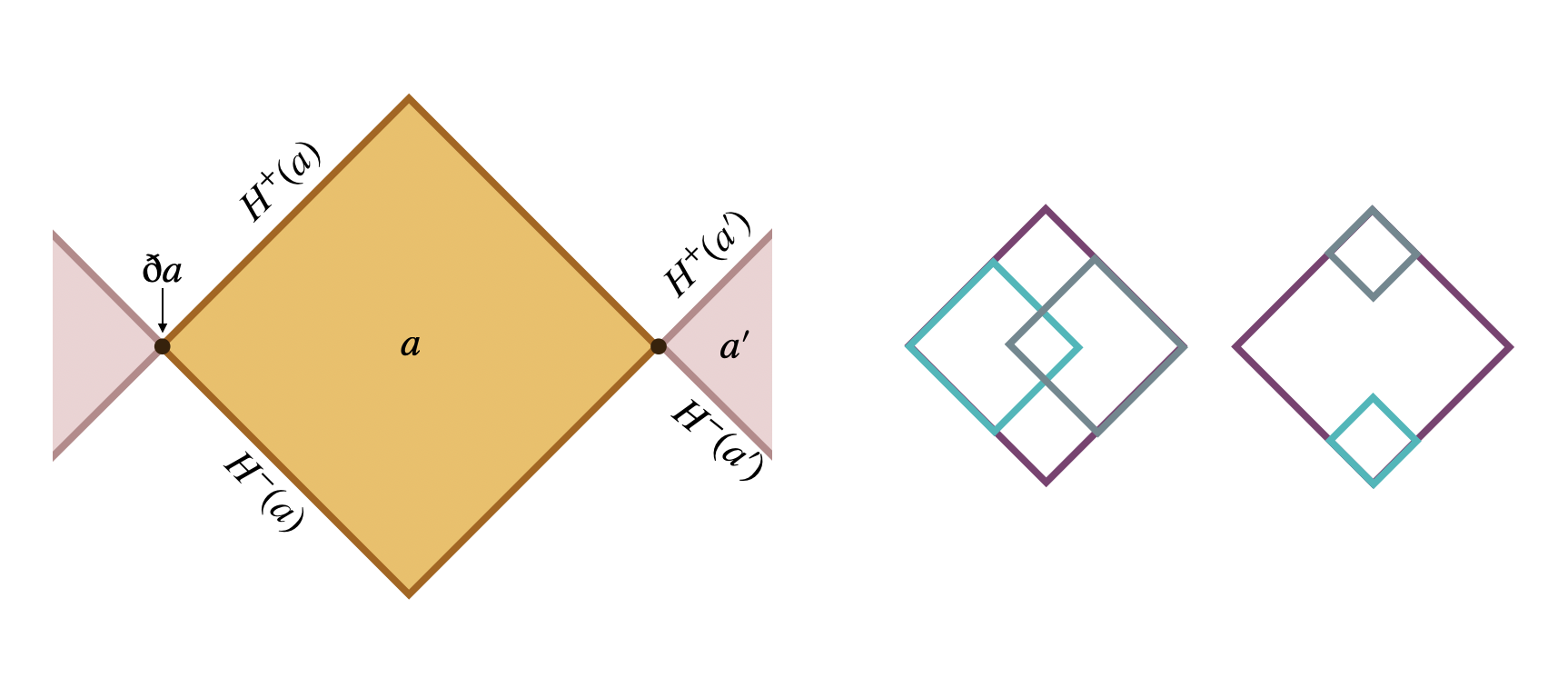}
\end{center} \vspace{-5ex}
\caption{Wedges in 1+1 dimensions. \emph{Left:} A wedge $a$ and its complement wedge $a'$, their common edge $\eth a=\eth a'$, and their past and future Cauchy horizons.
\emph{Right:} The wedge union (purple) of two wedges (turquoise, blue) is the smallest wedge that contains both.}
\label{fig:union}
\end{figure}

These generalized entanglement wedges, or \emph{holograms}, constrain how gravity processes information intrinsically within the bulk, without invoking a boundary dual. The proposal thus marks a first step toward a kind of gravity–gravity duality. The precise nature of this duality remains an open question. 

A possible interpretation is that semiclassical operators in a given input region $a$ generate the full algebra of the fundamental theory associated with its entanglement wedge. Evidence for this interpretation comes from the fact that holograms nontrivially obey strong subadditivity, nesting, and no-cloning properties that are indicative of a relation to a fundamental quantum algebra~\cite{Bousso:2023sya,Bousso:2024iry}. 

In this paper, we will show that holograms also obey a generalization of the complementarity relation \eqref{eq:oldcomp}. It is not obvious \emph{a priori} that such a generalization exists. A naive attempt would be to replace  $\minEW\to \emin$, $\maxEW\to \emax$, $B\to a$, and $\bar B \to a'$, in Eq.~\eqref{eq:oldcomp}, where $a'$ is the spacelike complement of $a$ in the arbitrary gravitating spacetime $M$. This would yield a false statement. For example, if $a$ is a thin shell in flat space, then $\emin(a)$ includes its interior, but so does $\emax(a')$. 

The problem is that the complement wedge $a'$ is obviously not \emph{fundamentally} independent of $a$, since a part of $a'$ lies inside of $\emin(a)$ and hence can be reconstructed from $a$. Clearly, therefore, the spacelike complement $a'$ is not an appropriate analogue of $\bar B$, the boundary commutant of the CFT region $B$. Perhaps the most surprising result in this paper is that a more appropriate analogue --- the \emph{fundamental} complement of a bulk region --- can be defined at all.

\paragraph{Summary and Outline} In Section~\ref{sec:def}, we refine the Bousso–Penington proposal by introducing the fundamental complement $\tilde a$ of any wedge $a$. Our approach can be motivated as follows. In AdS/CFT, $\maxEW(B)$ (and similarly $\minEW(B)$) is constrained by the condition that its conformal boundary at infinity must be $B$. This homology condition can be broken up into two separate conditions: the closure of $\maxEW(B)$ must contain $B$ and must not overlap with $\bar B$. For a bulk input region $a$, the analogous two conditions are that the entanglement wedge $\emax(a)$ must contain $a$, and that $\emax(a)$ must not overlap with the past or future of $\tilde a$. In short, the two conditions become the double inclusion requirement\footnote{Throughout the paper we evaluate the tilde before the prime: $\tilde a'\equiv (\tilde a)'$.} $a\subset \emax(a)\subset \tilde a'$ (and similarly for $\emin$).

In Sec.~\ref{sec:fcdef}, we define $\tilde a$ as the smallest wedge that contains all timelike curves that never enter the past or future of $a$ and have infinite proper duration in both the past- and future direction. Thus, $\tilde a$ can be thought of as a direct generalization, to arbitrary spacetimes, of the causal wedge of the boundary complement $\bar B$ in AdS/CFT. Note that the causal wedge of $\bar B$ is spacelike to $\minEW(B)$ and $\maxEW(B)$ by causal wedge inclusion~\cite{Wall:2012uf}. 

Physically, it is clear in arbitrary spacetimes that operators in $\tilde a$ cannot be holographically reconstructed from $a$, simply because all points in $\tilde a$ can be directly causally manipulated from conformal infinity. Hence, it is safe to exclude $\tilde a$ when we construct the entanglement wedge of $a$. In Sec.~\ref{sec:accessdef} we re-develop the notion of holographic accessibility from $a$ in a form that incorporates this requirement.

%in a form that invokes the fundamental complement $\tilde a$ to designate a region that is never accessible from $a$ --- much like the causal wedge of the boundary complement $\bar B$ in AdS/CFT is never accessible from $B$.

\begin{figure}
    \centering
    \includegraphics[width=1\linewidth]{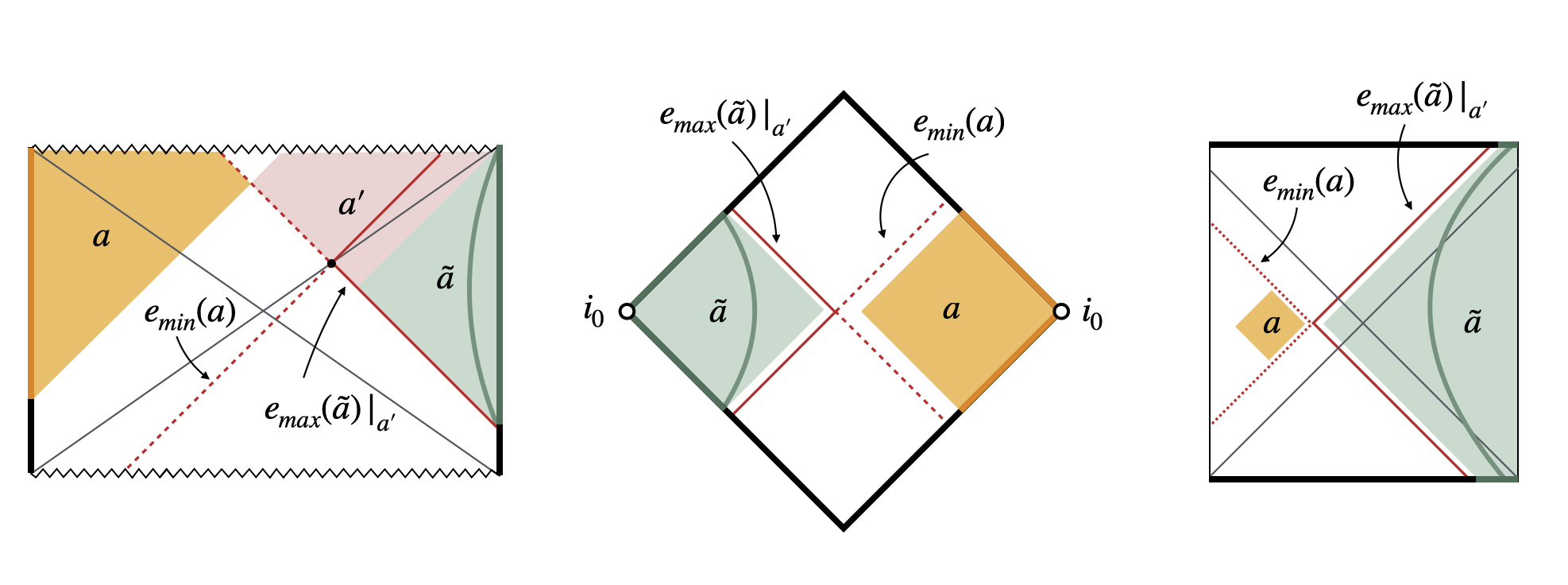}
    \caption{\emph{Left:} Two-sided AdS black hole with some matter present, which makes the apparent horizons (grey X) spacelike. The min-hologram $\emin(a)$ of the wedge $a$ is (roughly) the smallest wedge that includes $a$ and has positive null expansions (red solid lines). The fundamental complement $\tilde a$ (green shaded) of $a$ is the smallest wedge that contains all causal curves (such as the thick green curves) that are both past- and future-infinite in the spacelike complement $a'$ of $a$. The max-hologram of $\tilde a$ in $a'$, $\emax(\tilde a)|_{a'}$ is the largest wedge accessible from $\tilde a$ (red dashed lines); it is the spacelike complement of $\emin(a)$.  \emph{Middle and Right:} Examples of the fundamental complement, and of entanglement wedge complementarity, in asymptotically flat and asymptotically de Sitter space; see Sec.~\ref{sec:examples} for a detailed discussion. Note that $\emin(a)$ is not the entire spacetime even though de Sitter space is spatially closed.}  
    \label{fig:AdS_2sided}
\end{figure}

In Sec.~\ref{sec:edef}, we modify the Bousso–Penington proposal. As before, $\emax(a)$ as the union of all wedges that are accessible from $a$, but with our new notion of accessibility replacing that of Ref.~\cite{Bousso:2023sya}. Our choice is motivated by the physical considerations described in the previous paragraphs, and further supported by the complementarity theorem it allows us to prove.\footnote{\emph{Some} modification of Ref.~\cite{Bousso:2023sya} would have been required in any case, due to shortcomings of the original formulation. The minimal entanglement wedge $\emin(a)$ was defined as the intersection of certain sets whose conformal edge~\cite{Bousso:2023sya} coincides with that of $a$. This requirement, while motivated by the homology constraint on entanglement wedges in AdS/CFT, is actually inconsistent with the proof provided in~\cite{Bousso:2023sya} that $\emin$ exists. One could drop the requirement that the conformal edges agree, but this would lead to $\emin(a)=M$ in all cases, which is physically unreasonable. Separately, a problem arises for the max-entanglement wedge of a Rindler wedge in asymptotically flat space, which would also be the entire spacetime with the prescription in Ref.~\cite{Bousso:2023sya}.}

In Section~\ref{sec:prop}, we establish key properties of the fundamental complement, and we prove that our new definitions of $\emax$ and $\emin$ satisfy a complementarity theorem. 

In Sec.~\ref{sec:fcprop}, we show that $\tilde a$ is the exterior of a past and a future causal horizon and hence is antinormal. (In the classical limit, antinormal means that the null expansions orthogonal to its edge are nonpositive.) We also show that the fundamental complement of $\tilde a$ in the spacetime $a'$ vanishes. In Sec.~\ref{sec:comp}, we prove that the min-hologram of $a$ is complementary to the max-hologram of $\tilde a$ computed in the spacetime $a'$: 
\begin{equation}
    \emin(a)' = \emax(\tilde a)|_{a'} ~.
\end{equation}

In Section~\ref{sec:examples} we present explicit constructions of the fundamental complement $\tilde a$ in various kinds of spacetimes with varying asymptotic structure. 

We begin by noting, in Sec.~\ref{sec:cosmo} that $\tilde a=\varnothing$ in any Big Bang (or Big Crunch) cosmology. Such spacetimes are trivially reconstructible: $\emax(a)=M$ for any $a\subset M$. Interestingly, this is independent of the spatial topology. It requires only the absence of timelike curves that the are both past- and future-infinite. 

In Sec.~\ref{sec:flat} we discuss several interesting examples in Minkowski space. Here $\tilde a$ plays an important role in preventing a Rindler wedge from accessing the whole spacetime. We are also able to compute, for the first time, the entanglement wedge of a Rindler wedge with a local protrusion. 

In Sec.~\ref{sec:dS} we turn to asymptotically de Sitter spacetimes. Again, $\tilde a$ plays a nontrivial role. Even though de Sitter is spatially closed, we find that is not trivially reconstructible, because $\tilde a$ need not vanish.

In Section~\ref{sec:ads}, we recover the entanglement wedge prescriptions for AdS/CFT as a special case of our prescription. We show that 
\begin{equation}\label{eq:bb}
        \maxEW(B) = \emax[\CW(B)]~,~~~\minEW(B) = \emin[\CW(B)]~,
\end{equation}
where $\CW(B)$ is the causal wedge of the boundary region $B$ in the bulk.

In fact, the entanglement wedge of any subset of conformal infinity can defined by Eq.~\eqref{eq:bb}, regardless of asymptotic structure. For example, the entanglement wedge of two semi-infinite null generators of $\scri^\pm$ in asymptotically flat space is simply the entanglement wedge of a Rindler wedge, see Def.~\ref{def:rindler} in Sec.~\ref{sec:flat}. The entanglement wedge of timelike separated points $p\in \scri^+$, $q\in \scri^-$ in de Sitter space is the entanglement wedge of a causal diamond~\cite{Bousso:2000nf} and thus~\cite{Gao:2000ga} the whole spacetime.

That said, we would like to de-emphasize the idea that holographic reconstruction should originate with a conformal boundary. On the contrary, the perspective developed here and in Ref.~\cite{Bousso:2022hlz,Bousso:2023sya,Bousso:2024iry}, is that holography is fundamentally the reconstruction of a larger bulk region from (the full algebra generated by the semiclassical operators in) the smaller one.

In Appendix~\ref{app:oldthms} we show that our improved definitions of $\emax$ and $\emin$ satisfy key properties established (or claimed) in Ref.~\cite{Bousso:2023sya} (for the versions of $\emax$ and $\emin$ defined there). We provide a valid proof that $\emin(a)\supset a$. We prove nesting of $\emin$, a no-cloning theorem, and the inclusion of $\emax$ in $\emin$. We also prove the strong subadditivity of the generalized entropy of holograms, in the strong form established in Ref.~\cite{Bousso:2024iry}. In Appendix~\ref{app:largecomp} we describe a possible alternative definition of the fundamental complement, which to us appears less well motivated.

\emph{Note added:} as this manuscript was finalized, we became aware of Ref.~\cite{Gupta:2025jlq}, which associates entanglement wedges to non-AdS boundary regions (but not to bulk regions). Our proposal is more general, but this limit can be captured using Eq.~\ref{eq:bb}. We find that neither the ``orthodox'' nor the ``heterodox'' proposal of Ref.~\cite{Gupta:2025jlq} emerge from ours. However, the ``heterodox'' proposal appears to be equivalent to the modification of our proposal discussed in Appendix~\ref{app:oldthms}, when the latter is restricted to boundary input regions and evaluated via Eq.~\ref{eq:bb}. We have not yet attempted to prove this relation.

\section{Definitions}\label{sec:def}

Let $M$ be a globally hyperbolic Lorentzian spacetime with metric $g$. The chronological and causal future and past, $I^\pm$ and $J^\pm$,
the future and past domains of dependence and 
the future and past Cauchy horizons, 
$D^\pm$ and 
$H^\pm$, 
are defined as in Wald~\cite{Wald:1984rg}. In particular, the definitions are such that $p\notin I^+(\set{p})$ and $p\in J^+(\set{p})$.  We use $\subsetneq$ to denote a proper subset; $\subset$ includes equality. Intersections are taken before unions: $s\cap t\cup u\equiv (s\cap t)\cup u$. Given any set $s\subset M$, $\partial s$ denotes the boundary of $s$ in $M$, and $\cl s\equiv s\cup \partial s$ denotes the closure. If an operation is performed in a spacetime other than $M$, we indicate this through a subscript. (We will highlight this again below.)

\subsection{Wedges, Null Infinity, and the Fundamental Complement}
\label{sec:fcdef}

This subsection establishes definitions that use only the causal structure of $M$. Definitions~\ref{def:cw} and \ref{def:fc} are new and critical to this paper: the causal wedge inside a wedge, and the fundamental complement of a wedge.

\begin{defn}\label{def:sc}
The spacelike complement of a set $s\subset M$ is defined by 
\begin{equation}
    s'\equiv M\setminus \cl I(s) ~,
\end{equation}
where
\begin{equation}
    I(s)\equiv I^+(s)\cup I^-(s)~.
\end{equation}
\end{defn}

\begin{defn}\label{def:covwedge}
A {\em wedge} is a set $a\subset  M$ that satisfies $a=a''$ 
% (see Fig.~\ref{fig-wedges}, left). 
\end{defn}

\begin{defn}
The {\em edge} of a wedge $a$ is defined by  $\eth a \equiv \partial a \backslash I(a)$. The \emph{past and future Cauchy horizon} are $H^\pm(a)\equiv \partial a\cap I^\pm(a)$, and we define $H(a) \equiv H^+(a)\cup H^-(a).$
\end{defn}

\begin{rem}\label{wilem}
The intersection of two wedges $a,b$ is easily shown to be a wedge: $(a\cap b)'' = a\cap b$. Similarly, the \emph{complement wedge} $a'$ is a wedge: $a'''=a'$.
\end{rem}

\begin{defn}
The {\em wedge union} of wedges $a$ and $b$ is $a\Cup b\equiv (a'\cap b')'$; see Fig.~\ref{fig:union}. 
% \raphael{Can we expand Fig. 2 to contain (on the left) a simple diamond showing $a$, $H\pm(a)$, $\eth a$, and some small part of $a'$ as well? The wedge union pics can be much smaller and closer together, maybe you can do these from scratch?}
%Note that $a\Cup b$ is the smallest wedge that contains $a\cup b$.
\end{defn}

\
\begin{defn}\label{def:cw}
    Given a wedge $b$, let $C\subset b$ be the set of points that lie on timelike curves that are entirely contained in $b$ and have infinite proper time duration to the past and to the future. We define the \emph{causal wedge in $b$} as
    \begin{equation}
        b_C\equiv C''
    \end{equation}
    Note that $b_C\subset b$ and that $b_C$ is a wedge.
\end{defn}

\begin{defn}[Fundamental Complement]\label{def:fc}
    Given a wedge $a$, we define its \emph{fundamental complement} $\tilde a$ as the causal wedge in its spacelike complement $a'$:
    \begin{equation}
        \tilde a \equiv (a')_C~.
    \end{equation}
\end{defn}

\subsection{Discrete Nonexpansion and Accessibility}
\label{sec:accessdef}

This subsection establishes definitions involving the expansion of lightrays. We follow Refs.~\cite{Bousso:2024iry,Bousso:2025xyc}, except that Def.~\ref{def:accessible} will modify the definition of accessibility~\cite{Bousso:2023sya,Akers:2023fqr} by requiring inclusion in $\tilde a'$. %This also affects Def.~\ref{def:marg}, and it will be critical for the complementarity theorem~\ref{thm:maxmincomp}.

We adopt a simplified treatment for clarity and concision. We omit the smoothing parameter $\epsilon$ when discussing conditional max- and min-entropies. We assume that the area and matter contributions to generalized entropies are separable, and we neglect terms of order $G$ and higher. For precise definitions of smoothing and generalized max-entropy, see Ref.~\cite{Akers:2023fqr}. For a detailed discussion of the $G\hbar$ expansion, see Refs.~\cite{Shahbazi-Moghaddam:2022hbw, Bousso:2024iry}.

Readers unfamiliar with max- and min-entropies may easily eliminate these concepts without missing the main points of our work. Throughout the paper, conditional max- and min-entropies can be replaced for simplicity by generalized entropy differences, using Eq.~\eqref{eq:approx}, or even just area differences, using Eq.~\eqref{eq:approxx}. This does not trivialize our main results, such as Def.~\ref{def:fc} of the fundamental complement and the complementarity theorem~\ref{thm:maxmincomp}.

\begin{defn}
Given nested wedges $a\subset b$, the \emph{generalized smooth max-entropy} of $b$ conditioned on $a$ is given by
\begin{equation}\label{eq:hmgdef}
    \hmg(b|a) \approx \left[ \frac{\A(b)-\A(a)}{4G} + \hmax(b|a) + O(G)\right]~.
\end{equation}
Here $\A(a)$ is the area of $\eth a$, and $\hmax(b|a)$ is the (ordinary) smooth max-entropy of the quantum fields in $b$ conditioned on $a$~\cite{RenWol04a}. For compressible quantum states (which are often considered), this becomes a difference of von Neumann entropies:
\begin{equation}
    \hmax(b|a) \approx S(b|a) \equiv S(b)-S(a)~.
\end{equation}
In this approximation,
\begin{equation}\label{eq:approx}
    \hmg(b|a) \approx \S(b|a) \equiv \S(b)-\S(a)~,
\end{equation}
where 
\begin{equation}
    \S(a)\equiv \frac{\A(a)}{4G} + S(a) + O(G)
\end{equation}
is the generalized entropy~\cite{Bekenstein:1972tm} of the wedge $a$.
In many interesting settings, we will be able to neglect the contribution of bulk quantum fields entirely and keep only the areas of the edges:
\begin{equation}\label{eq:approxx}
    \hmg(b|a) \approx \S(b|a) \equiv \frac{\A(b)-\A(a)}{4G}~.
\end{equation}
\end{defn}

\begin{lem}[Strong Subadditivity of $\hmg$]\label{lem:ssa}
    Let $a\subset b$ be wedges. Then for any wedge $c$ spacelike to $b$,
    % \raphael{Do we refer to discrete subadditivity anywhere? Eliminate in favor of SSA everywhere?}
    \begin{equation}\label{eq:ssa}
        \hmg(b\Cup c|a\Cup c)\leq \hmg(b|a)~.
    \end{equation}
\end{lem}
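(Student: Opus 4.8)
The plan is to reduce the inequality \eqref{eq:ssa} to the ordinary (quantum) strong subadditivity of conditional max-entropy together with purely geometric facts about how areas of edges behave under the wedge union $\Cup$ and intersection. Writing everything out via \eqref{eq:hmgdef}, the claim becomes
\begin{equation}
    \frac{\A(b\Cup c)-\A(a\Cup c)}{4G} + \hmax(b\Cup c\,|\,a\Cup c)
    \;\leq\;
    \frac{\A(b)-\A(a)}{4G} + \hmax(b\,|\,a)~.
\end{equation}
So it suffices to establish the two statements separately: (i) the \emph{area inequality} $\A(b\Cup c)-\A(a\Cup c)\leq \A(b)-\A(a)$, and (ii) the \emph{matter inequality} $\hmax(b\Cup c\,|\,a\Cup c)\leq \hmax(b\,|\,a)$, at least to the order in $G$ we are keeping. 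Since $a\subset b$ are wedges and $c$ is spacelike to $b$ (hence spacelike to $a$), all four wedges $a,b,a\Cup c,b\Cup c$ are well-defined, and $a\Cup c\subset b\Cup c$, so both sides are genuine conditional max-entropies and the area differences are of the expected sign.

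For step (ii), I would invoke the standard information-theoretic strong subadditivity of the smooth conditional max-entropy: if the Hilbert space of $b\Cup c$ factorizes (in the semiclassical algebra sense) into the degrees of freedom in $b$ and the ``extra'' degrees of freedom in $c\setminus a$ that are disjoint from $b$, then conditioning on the larger system $a\Cup c$ (which contains $a$ together with those same extra degrees of freedom) can only decrease the max-entropy relative to conditioning on $a$ alone. Concretely one uses $\hmax(XY|ZY)\leq \hmax(X|Z)$ with $X\leftrightarrow b\setminus a$, $Y\leftrightarrow (c$-degrees of freedom spacelike to $b)$, $Z\leftrightarrow a$; this is the dual/data-processing form of SSA for max-entropy and is exactly the bulk field-theoretic input. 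The spacelike separation of $c$ from $b$ is what guarantees the relevant tensor-factor structure, so that the ``$Y$'' system is common to both conditionals.

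For step (i), the geometric heart of the argument is a submodularity-type relation among edge areas. Because $c$ is spacelike to $b$ and $a\subset b$, one shows that the edges satisfy, schematically, $\eth(b\Cup c)\cup \eth(a)\subset \eth(b)\cup\eth(a\Cup c)$ after accounting for the portions of $\eth c$ that get ``absorbed'' identically on both sides; the pieces of $\eth c$ contributing to $b\Cup c$ are the same as those contributing to $a\Cup c$ precisely because $a$ and $b$ have the same causal relationship to $c$ (both spacelike). Tracking which null congruences bound the wedge unions, the area contributions from $\eth c$ cancel between $\A(b\Cup c)$ and $\A(a\Cup c)$, while what remains on the left is at most the ``new'' area from $\eth b$ minus that from $\eth a$, giving (i). I expect this combinatorial/geometric bookkeeping — carefully identifying the common edge pieces and showing the leftover inequality holds with the correct sign — to be the main obstacle; it is the place where the precise definitions of $\Cup$, of $\eth$, and the spacelike-separation hypothesis all have to be used in concert. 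The quantum step (ii) is essentially a citation, and the reduction itself is immediate once (i) and (ii) are in hand.
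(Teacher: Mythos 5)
Your overall route is the same as the paper's: expand $\hmg$ via Eq.~\eqref{eq:hmgdef}, prove the area difference and the matter conditional max-entropy inequalities separately, cite the information-theoretic SSA/data-processing property of the smooth conditional max-entropy for the matter piece, and work only to the order in $G$ kept here (the paper likewise relegates higher orders to a conjecture). The paper's proof is in fact just this: it declares the area step to hold ``trivially'' and cites Ref.~\cite{Bousso:2023sya} for the matter step.

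The place where your write-up goes beyond the paper --- the geometric bookkeeping for $\A(b\Cup c)-\A(a\Cup c)\leq \A(b)-\A(a)$ --- is also where it has problems. First, the claim that the pieces of $\eth c$ contributing to $\eth(b\Cup c)$ are ``the same'' as those contributing to $\eth(a\Cup c)$ is false in general: since $a\Cup c\subset b\Cup c$ and hence $I(a\Cup c)\subset I(b\Cup c)$, one only gets a one-sided containment (the $\eth c$ points surviving on $\eth(b\Cup c)$ are a subset of those surviving on $\eth(a\Cup c)$). For example, in the Kruskal spacetime with $c$ the right exterior wedge, $b$ the left exterior wedge and $a\subsetneq b$ the left wedge retreated to $r\geq 2M+\delta$, one has $b\Cup c=M$ so all of $\eth c$ is absorbed on the $b$ side, while none of it is absorbed in $a\Cup c$. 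Fortunately this discrepancy only helps the inequality. Second, after using that containment, what actually remains to be shown is that the area of $\eth a$ that fails to appear on $\eth(a\Cup c)$ is bounded by the area of $\eth b$ that fails to appear on $\eth(b\Cup c)$ plus the extra absorbed $\eth c$ area; equivalently $\A(b\Cup c)+\A(a)\leq \A(b)+\A(a\Cup c)$. Your schematic inclusion $\eth(b\Cup c)\cup\eth a\subset\eth b\cup\eth(a\Cup c)$ is essentially a restatement of this, not a proof: $\eth a$ and $\eth b$ are different surfaces, so no pointwise argument compares their absorbed portions, and this is exactly the step you flag as ``the main obstacle'' without closing it. (A smaller caveat on the matter step: $b\Cup c$ and $a\Cup c$ can contain ``bridge'' degrees of freedom beyond $b\cup c$ and $a\cup c$, so identifying the common system $Y$ requires the wedge-union Cauchy-slice structure; this is part of what the cited wedge-SSA theorem of Ref.~\cite{Bousso:2023sya} supplies.) Since the paper itself treats the area step as trivial and outsources the matter step, your proposal matches the paper's proof up to precisely these gaps in the extra detail you attempted.
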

\begin{proof}
     Strong subadditivity holds trivially for the area terms, and as a mathematical theorem for the conditional max entropy of the quantum fields~\cite{Bousso:2023sya}:
    \begin{align}
    \A(b\Cup c)-\A(a\Cup c) & \leq \A(b)-\A(a) ~;\\
    \hmax(b\Cup c|a\Cup c) & \leq \hmax(b|a)~.
    \end{align}
    At higher orders in $G$, strong subadditivity is a conjecture~\cite{Bousso:2024iry}.
\end{proof}

\begin{defn}[Nonexpansion]\label{def:fne}
A wedge $a$ is said to be \emph{future-nonexpanding} at $p\in\eth a$ if there exists an open set $O$ containing $p$ such that~\cite{Bousso:2024iry} 
    \begin{align}\label{eq:futnonexp}
        \hmg(b|a) \leq 0 \text{~for~all~wedges~}b\supset a\text{~such~that~} 
        \eth b \subset \eth a \cup [H^+(a')\cap O]~.
    \end{align}
This definition is ``discrete'' in the sense that no numerical value is assigned to the expansion.\footnote{Discreteness makes the definition general enough to handle the (generic) case where $\eth a$ is not smooth. At smooth points and using the approximations of Eq.~\eqref{eq:approx} or \eqref{eq:approxx}, discrete nonexpansion implies~\cite{Bousso:2024iry} $\Theta\leq 0$ or $\theta\leq 0$, respectively, for the usual quantum or classical expansions~\cite{Bousso:2015mna}. Similarly, Discrete Max-Focusing (Conj.~\ref{conj:qfc}) reduces to quantum focusing ($\Theta'\leq 0$~\cite{Bousso:2015mna}) or classical focusing ($\theta'\leq 0$~\cite{Wald:1984rg}). These more familiar notions of expansion and focusing can be substituted throughout: they are entirely sufficient for understanding all examples in this paper.} \emph{Past-nonexpanding} is defined by substituting $H^-$ for $H^+$ in Eq.~\eqref{eq:futnonexp}.
\end{defn}
\begin{conj}[Discrete Max-Focusing]\label{conj:qfc}
 Let $\eth a^+$ be the set of points where a wedge $a$ is future-nonexpanding. The future-lightsheet of $a$ is the null hypersurface
    \begin{equation}
        L^+(a)\equiv H^+(a') \cap J^+(\eth a^+)~.
    \end{equation}
    (The past lightsheet is defined analogously.)\\
    Let $a$, $b$, and $c$ be wedges such that $a\subset b\subset c$, and suppose that $\eth b \cup \eth c \subset \eth a \cup L^+(a)$. Then
    \begin{equation}\label{eq:qfc}
        \hmg(c|b)\leq 0~.
    \end{equation}
    The same statement holds in the past direction, i.e., if $\eth b \cup \eth c \subset \eth a \cup L^-(a)$.
\end{conj}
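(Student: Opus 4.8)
The plan is to reduce the finite, discrete inequality \eqref{eq:qfc} to a pointwise focusing statement along the null hypersurface $L^+(a)$ and then integrate it. The starting observation is that the null generators ruling $L^+(a)\subset H^+(a')$ all emanate from the common edge $\eth a=\eth a'$, and that at each point of $\eth a^+$ the (quantum) expansion of this congruence is nonpositive: this is just Def.~\ref{def:fne} read off at smooth points via the approximations of Eqs.~\eqref{eq:approx}--\eqref{eq:approxx}, using a perturbing wedge $b$ in \eqref{eq:futnonexp} whose edge is pushed infinitesimally up a single generator. So two things remain: (i) propagate ``$\Theta\le 0$'' from the initial cut $\eth a^+$ along the generators to all of $L^+(a)=H^+(a')\cap J^+(\eth a^+)$; and (ii) show that $\Theta\le 0$ on the strip of $L^+(a)$ between the cuts $\eth b$ and $\eth c$ forces $\hmg(c|b)\le 0$.

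For step (ii), I would write $\hmg(c|b)$ as a difference of generalized entropies of the cuts $\eth c$ and $\eth b$ (via Eqs.~\eqref{eq:approx}--\eqref{eq:approxx}, handling the matter contribution through monotonicity and subadditivity of $\hmax$, exactly as in the proof of Lemma~\ref{lem:ssa}), and then express that difference as an integral of the generalized expansion along the generators of $L^+(a)$ between the two cuts. The only extra input needed is that $\eth c$ lies weakly to the future of $\eth b$ on each generator; this should follow from $b\subset c$ together with $\eth b,\eth c\subset\eth a\cup L^+(a)$ and the fact that $L^+(a)$ is part of a \emph{future} Cauchy horizon, and I would isolate it as a short lemma. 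For step (i), the classical, area-only version is precisely the Raychaudhuri focusing theorem: the null energy condition makes $\theta$ nonincreasing along each generator, so $\theta\le 0$ at $\eth a^+$ propagates to all of $J^+(\eth a^+)\cap H^+(a')$, and generators that run into caustics or otherwise leave the horizon only strengthen the area inequality. The discrete, $\hmg$-based formulation is chosen precisely so that such endpoints, and the generic non-smoothness of wedge edges, cause no trouble: one either reduces to the smooth case by a limiting argument or chains the discrete nonexpansion inequalities directly.

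The main obstacle is step (i) beyond the leading area term. At $\Order{G^0}$ and higher, ``$\Theta$ nonincreasing along the generators'' is the Quantum Focusing Conjecture, which is not known to follow from established results; indeed Conj.~\ref{conj:qfc} is itself a discrete repackaging of that conjecture, so a fully rigorous proof is available only for the classical area contribution (or, for compressible matter, via the quantum null energy condition in regimes where it has been proven). For this reason I would continue to state and use the result as a conjecture, while recording a rigorous proof of the classical specialization $\theta'\le 0$ mentioned in the footnote, which is all that the examples in this paper require. A secondary, purely classical, difficulty is that future Cauchy horizons can be geometrically pathological, so ``propagate $\theta\le 0$ along $L^+(a)$'' and ``the cuts $\eth b,\eth c$ meet each generator once'' need care with the generator structure of $H^+(a')$; this is again why the discrete formulation, rather than a pointwise one, is the appropriate vehicle.
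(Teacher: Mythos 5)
The paper offers no proof of this statement: it is deliberately stated as a conjecture --- a discrete generalization of the Quantum Focusing Conjecture --- and the footnote to Def.~\ref{def:fne} only records that at smooth points, under the approximations of Eqs.~\eqref{eq:approx} and \eqref{eq:approxx}, it reduces to quantum focusing ($\Theta'\leq 0$) or classical focusing ($\theta'\leq 0$). Your assessment therefore coincides with the paper's treatment: your Raychaudhuri/NEC argument along the generators of $H^+(a')$ correctly establishes only the classical area contribution, the quantum part is precisely the unproven QFC, and your recommendation to retain the statement as a conjecture is exactly what the authors do.
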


\begin{defn}[Antinormal]
A wedge $a$ is \emph{antinormal}\,
% \footnote{Again we drop the common modifier ``quantum-'' (\emph{quantum-normal} etc.)\ because it would become too cumbersome; see footnote~\ref{noquantum}.}
at $p\in \eth a$ if it is both future- and past nonexpanding at $p$.
\end{defn}
% \begin{defn}[Normal]
% The wedge $a$ is called \emph{normal} at $p$ if its complement wedge $a'$ is antinormal at $p$.
% \end{defn}

% \begin{rem}
%     If a wedge has some property for all $p\in \eth a$, we omit the argument $p$. For example, if $a$ is both future and past nonexpanding at $p$ for all $p\in \eth a$, we say that $a$ is antinormal. --- The following definition is made only for the entire edge, not for individual points:
% \end{rem}

\begin{defn} \label{def:accessible}
     Given a wedge $a$, the wedge $f\supset a$ is said to be \emph{accessible from $a$} if it satisfies the following properties:
\begin{enumerate}[I.]
    \item $a\subset f\subset \tilde a '$;
    \item $f$ is antinormal at points $p\in \eth f\setminus\eth a$;
    \item $f\cap a'$ admits a Cauchy slice $\Sigma$ such that for any wedge $h\subsetneq f$ with $h\supset a$, $\eth h\subset \Sigma\cup \eth a$, and $\eth h\setminus \eth f$ compact in $M$,
    \begin{equation}
        \hmg(f|h)\leq 0~.
    \end{equation}
\end{enumerate}
Property III implies that the area of any intermediate edge $\eth h$ suffices as an entanglement resource for performing quantum state merging from $f$ to $h$~\cite{Akers:2020pmf}. 
\end{defn}

\subsection{Max- and Min-Holograms}
\label{sec:edef}

In this subsection, we modify the definition of generalized entanglement wedges, or holograms of bulk regions. Instead of holding fixed the ``conformal edge''~\cite{Bousso:2023sya}, we require that both the max- and min-hologram be accessible from $a$ using the new definition of accessibility given above. That is, both the max- and the min-hologram must be contained in $\tilde a'$, the complement of the conformal complement of $a$. 

\begin{defn}[Max-Hologram]\label{def:emax}
Given a wedge a, let $F(a)$ be the set of all wedges accessible from $a$.
The {\em generalized max-entanglement wedge} of $a$, or \emph{max-hologram} of $a$, $\emax(a)$, is their wedge union:
   \begin{equation}\label{eq:def:emax}
       \emax(a) \equiv \Cup_{f\in F(a)}\, f~.
   \end{equation}
\end{defn}

\begin{defn}[Min-Hologram]\label{def:emin} 
  Given a wedge $a$, let $G(a)\equiv \set{g: \mathrm{i}\, \wedge\, \mathrm{ii} \,\wedge\, \mathrm{iii}}$ be the set of all wedges that satisfy the following properties:\footnote{To make this look more like Def.~\ref{def:emax}, we could have defined a notion of min-accessibility that parallels Def.~\ref{def:accessible} of (max-)accessibility~\cite{Akers:2023fqr}. But in fact, the notions of min-accessibility can be largely eliminated~\cite{Bousso:2024iry} except here; and the complementarity theorem~\ref{thm:maxmincomp} will allow us to eliminate it altogether. We introduce the present definition to make contact with the prior literature, but it should ultimately be replaced by Eq.~\eqref{eq:comp}.}
  \begin{enumerate}[i.]
  \item  $a\subset  g\subset \tilde a'$;
  \item $g'$ is antinormal; 
  \item $g' \cap \tilde a'$ admits a Cauchy slice $\Sigma'$ such that for any wedge $h \neq g$ such that $g \subset h\subset \tilde a'$, $\eth h\subset \Sigma'$, and $\eth h\setminus \eth g$ is compact, 
  \begin{equation}
      \hmg(g'|h') \leq 0
  \end{equation}
  \end{enumerate}
  The {\em generalized min-entanglement wedge} or \emph{min-hologram} of $a$, $\emin(a)$, is their intersection:
  \begin{equation}\label{eq:emindef}
      \emin(a)\equiv \cap_{g\in G(a)}\, g~.
  \end{equation}
\end{defn}

\begin{figure}
    \centering
    \includegraphics[width=1\linewidth]{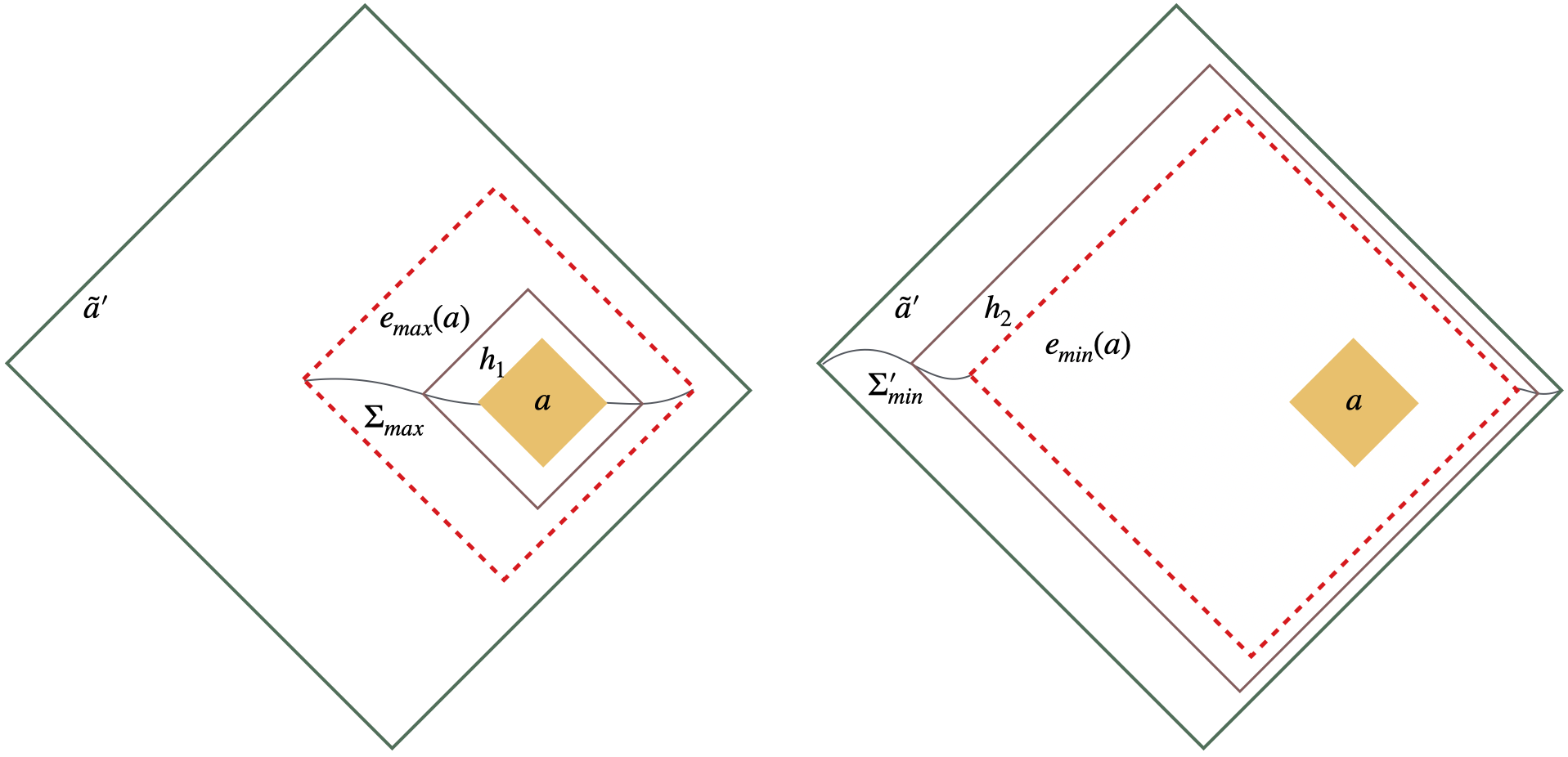}
    \caption{
    Depiction of property III of Def.~\ref{def:accessible} (left) and iii of  Def.~\ref{def:emin} (right). Min and max entanglement wedges are shown in red. On the right figure, the complement of the fundamental complement, $\tilde a'$, is outlined in green. }
    \label{fig:condition-iii-III}
\end{figure}

\section{Complementarity}\label{sec:prop}

\subsection{Properties of the Fundamental Complement}
\label{sec:fcprop}

\begin{lem}\label{lem:ta_anormal}
    For any wedge $a$, the fundamental complement $\tilde a$ is antinormal.
\end{lem}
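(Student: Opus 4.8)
The plan is to show that $\tilde a$ is future- and past-nonexpanding at every point of its edge, by exhibiting $\tilde a$ as the exterior of a Cauchy horizon. Recall $\tilde a = (a')_C = C''$, where $C$ is the set of points on inextendible-duration timelike curves lying entirely in $a'$. The key structural claim to establish first is that $\tilde a$ is the spacelike complement, \emph{within $a'$}, of the union of the future and past of $\eth\tilde a$ — i.e.\ that $\eth\tilde a$ is generated by null congruences that, followed into $a'$, form (pieces of) the Cauchy horizons $H^\pm(\tilde a')$. The reason is that a point $p\in a'\setminus\tilde a$ fails to lie on any bi-infinite timelike curve in $a'$; following the obstruction one finds that $p$ is either causally cut off from the ``bulk'' of $a'$ by a null hypersurface emanating from $\eth\tilde a$, or lies in $I(a)$, which is excluded. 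So $H^\pm(\tilde a')$ (computed in $M$, but with the relevant portions lying in $a'$) are null, and $\tilde a$ sits on their antinormal side.

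Concretely, the steps I would carry out are: (1) Show $\tilde a'\supset \cl I(a)$, equivalently $\tilde a\subset a'$, which is immediate from Def.~\ref{def:cw}; and show $\eth\tilde a\subset a'\cup\eth a$ — the edge of the fundamental complement cannot penetrate $I(a)$. (2) Argue that through any $p\in\eth\tilde a$ that is not on $\eth a$, the past and future Cauchy horizons $H^\pm(\tilde a')$ are achronal null hypersurfaces: a point just outside $\tilde a$ near $p$ is \emph{not} on a bi-infinite timelike curve confined to $a'$, so some causal curve from it must exit $a'$ (enter $I(a)$) within finite time in at least one direction; the boundary between ``escapes to the future'' and ``does not'' is swept out by null generators, giving null $H^+$, and likewise $H^-$. (3) Conclude that $\tilde a$ lies on the nonexpanding side of both $H^+(\tilde a')$ and $H^-(\tilde a')$. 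For the area term this is the statement that the generators of $H^\pm$ have the appropriate sign of expansion orthogonal to $\eth\tilde a$ (antinormal = $\theta\le 0$ on outgoing lightsheets), which follows from the fact that these are Cauchy horizons of a globally hyperbolic region — generically they are generated by past/future-complete null geodesics with no caustics to the relevant side, and the focusing theorem then controls the sign. For the discrete/quantum version, I would invoke Def.~\ref{def:fne} directly: any wedge $b\supset\tilde a$ with $\eth b\subset\eth\tilde a\cup[H^+(\tilde a')\cap O]$ is obtained by moving $\eth\tilde a$ \emph{along} the horizon generators, and generalized entropy decreases under this deformation by the null energy / QFC-type input already assumed (Conj.~\ref{conj:qfc} or its classical shadow), giving $\hmg(b|\tilde a)\le 0$; the past direction is symmetric.

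I expect the main obstacle to be step (2): rigorously establishing that the portions of $\partial\tilde a$ not shared with $\eth a$ are \emph{null} hypersurfaces (Cauchy horizons) rather than, say, timelike pieces — and handling the points where $\eth\tilde a$ meets $\eth a$, where the two boundaries could interact. The cleanest route is probably to characterize $\tilde a$ as a limit/intersection of domains of dependence: $C$ is a union of timelike curves, its causal development-type closure $C''$ is a wedge, and one shows $M\setminus\cl I(\tilde a)$ relative to $a'$ reproduces $\cl I(a)$ plus the two horizon-bounded ``caps,'' using global hyperbolicity of $M$ (hence of the open region $\setint a'$) to guarantee that achronal boundaries are null-generated and inextendible. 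Once the null character is in hand, antinormality is essentially the definition of which side of a Cauchy horizon one is on, combined with the focusing input already granted in the paper.
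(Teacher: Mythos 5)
Your overall strategy---identify the non-edge part of $\partial\tilde a$ with null horizons and read off nonexpansion---points in the right direction, but the step you yourself flag as the crux is where the argument genuinely fails as written. Being a Cauchy horizon of $\tilde a'$ is not enough: Cauchy horizons of generic wedges are neither generated by complete null geodesics nor nonexpanding in the needed direction, so ``these are Cauchy horizons of a globally hyperbolic region, hence the focusing theorem controls the sign'' does not go through. What is actually needed is that the relevant boundary pieces are \emph{causal} horizons, i.e.\ boundaries of the past/future of timelike curves of infinite proper duration (or of points at $\scri^\pm$), because it is for causal horizons that the Generalized Second Law (classically, the area-theorem argument) supplies nonexpansion. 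Your appeal to Conj.~\ref{conj:qfc} cannot substitute for this: Discrete Max-Focusing \emph{presupposes} nonexpansion at the starting points and only controls further deformations along the resulting lightsheet, so using it to establish that $\tilde a$ is FNE/PNE in the first place is circular.

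The paper avoids both this issue and the global/edge-matching difficulties you worry about by a different decomposition: for each timelike curve $\tau$ that is past- and future-infinite in $a'$, the wedge $\tau''$ is the exterior of a past and a future causal horizon, hence antinormal by the GSL; since $\tilde a$ is (the wedge hull of) the union of such $\tau''$, antinormality of $\tilde a$ follows from the theorem that a union of antinormal wedges is antinormal (Theorem 46 of Ref.~\cite{Bousso:2024iry}). If you want to salvage your direct route, you would have to prove that $H^\pm(\tilde a')$ are themselves portions of causal horizons of the bi-infinite curves in $a'$ (the paper only notes this as a consequence), and then invoke the GSL rather than focusing/QFC; the per-curve decomposition plus the union theorem is the cleaner path and handles the points where $\eth\tilde a$ meets $\eth a$ automatically.
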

\begin{proof}
    Let $\tau$ be a timelike curve that is future- and past-infinite in $a'$. Then the wedge $\tau''$ is the exterior of a past and a future causal horizon~\cite{Bousso:2025xyc}. By the Generalized Second Law~\cite{Bekenstein:1972tm,Wall:2010jtc,Bousso:2025xyc}, $\tau''$ is both FNE and PNE, i.e., $\tau''$ is antinormal. By its definition, $\tilde a$ is the union of such sets, so by Theorem 46 of Ref.~\cite{Bousso:2024iry}, $\tilde a$ is antinormal. (It also follows that $\tilde a$ itself is the exterior of a past and a future causal horizon.)
\end{proof}

\begin{lem} \label{tta_empty}
      Let $\tilde a$ be the fundamental complement of $a$ in $M$, and let $\left. \widetilde{(\tilde a)}\right|_{a'}$ be the fundamental complement of $\tilde a$ in $a'$. Then 
      \begin{equation}
          \left. \widetilde{(\tilde a)}\right|_{a'} = \varnothing. 
      \end{equation}
\end{lem}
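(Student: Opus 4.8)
The plan is to unpack the definition of the fundamental complement applied twice and show that no timelike curve can qualify for the inner application. By Definition~\ref{def:fc}, $\left. \widetilde{(\tilde a)}\right|_{a'}$ is the causal wedge of $\tilde a$ inside the spacetime $a'$ — more precisely, one first forms the spacelike complement of $\tilde a$ \emph{within $a'$}, call it $(\tilde a)'|_{a'}$, and then takes the causal wedge $(\,\cdot\,)_C$ of that set inside $a'$. So the goal reduces to: there is no timelike curve $\tau$ lying entirely in $(\tilde a)'|_{a'}$ that is both past- and future-infinite (with respect to $a'$, regarded as a spacetime in its own right). I would argue for this by contradiction: suppose such a $\tau$ exists.

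The key observation is that $\tau$ is then a timelike curve in $M$ as well (since $a'$ is an open subset of $M$), and, being spacelike to $\tilde a$ inside $a'$, it in particular never enters $I^+(\tilde a)\cup I^-(\tilde a)$ computed in $a'$. The first substantive step is to promote this to a statement in $M$: because $\tilde a$ is antinormal and is itself the exterior of a past and future causal horizon (Lemma~\ref{lem:ta_anormal}), and because $\tilde a\subset a'$, I expect that $\tau$ avoiding $I(\tilde a)$ in $a'$ forces $\tau$ to avoid $I(a)$ in $M$ — intuitively, the region ``between'' $\tilde a$ and $a$ inside $a'$ is swept out by the Cauchy horizons, and a curve that stays spacelike to $\tilde a$ in $a'$ and is inextendible in $a'$ must actually stay spacelike to $a$ in $M$. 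Equivalently: $(\tilde a)'|_{a'}$, viewed inside $M$, is contained in $a'$ and is in fact ``inside'' $a$ in the sense of being spacelike-separated from $a'\supset\tilde a$, i.e. it lies in $a''=a$. This is the heart of the matter.

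Granting that, $\tau$ is a timelike curve contained in $a$ (or at least in $a'\cap a$-adjacent region; more carefully, in the closure issues one must track whether $\tau\subset a$ strictly). But a curve in $(\tilde a)'|_{a'}$ that is past- and future-infinite in $a'$ would, by the very definition of the causal wedge $(a')_C$ in Definition~\ref{def:cw}, be a curve contributing to $C$ and hence force $\tau''\subset (a')_C=\tilde a$. Then $\tau$ lies in $\tilde a$. But $\tau$ was assumed spacelike to $\tilde a$ in $a'$, so $\tau$ cannot meet $\tilde a$ — a contradiction. One has to be slightly careful: being ``spacelike to $\tilde a$'' means lying in $(\tilde a)'|_{a'}=a'\setminus\cl I_{a'}(\tilde a)$, which is disjoint from $\tilde a$ itself since $\tilde a\subset \cl I_{a'}(\tilde a)$ (points of $\tilde a$ lie in their own causal past). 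So the contradiction is clean once the curve is shown to be past/future-infinite in $a'$ — and that is immediate since it was assumed to be.

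The step I expect to be the main obstacle is the global-hyperbolicity and inextendibility bookkeeping inside $a'$: one must verify that ``infinite proper duration in both directions relative to $a'$'' for a curve in $(\tilde a)'|_{a'}$ really does coincide with the condition defining $(a')_C$, i.e. that $\tau$, being inextendible within $a'$ and contained in the complement of $\tilde a$ in $a'$, is genuinely one of the curves in the set $C$ of Definition~\ref{def:cw} applied to $a'$. The subtlety is that $C$ for $a'$ consists of curves contained in $a'$ with infinite duration in both directions; our $\tau$ is such a curve by hypothesis, so really the only thing to check is that $\tau\subset a'$ (clear) and that it has the duration property (assumed). Thus the contradiction is with the hypothesis that $\tau$ both is one of the defining curves of $\tilde a$ and simultaneously lies outside $\tilde a$. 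I would also double-check the degenerate case $\tilde a=\varnothing$, where $(\tilde a)'|_{a'}=a'$ and the statement reduces to $(a')_C$ having empty fundamental complement, i.e. $\widetilde{(a')_C}|_{a'}=\varnothing$ — which is the $\tilde a\neq\varnothing$ case applied with $a\leftrightarrow$ (something), or can be seen directly from the same argument. Finally, I would remark that this lemma is exactly the ``no further complement'' statement needed so that the complementarity theorem does not iterate nontrivially.
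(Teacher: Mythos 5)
Your closing contradiction is exactly the paper's (one-line) proof: by Definition~\ref{def:cw} applied to $b=a'$, every timelike curve contained in $a'$ with infinite proper duration to both past and future lies in the set $C$ and hence in $\tilde a=C''$; since the spacelike complement $(\tilde a)'|_{a'}$ is disjoint from $\tilde a$, it contains no such curve, so the causal wedge computed inside it --- which is $\left.\widetilde{(\tilde a)}\right|_{a'}$ --- is empty. (The small implicit fact $C\subset C''$, used equally by the paper, holds because every point of $C$ has points of $C$ to its chronological past and future along its curve.) So the proposal is correct and takes essentially the same route as the paper.

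That said, the middle paragraph you flag as ``the heart of the matter'' is both false and unnecessary: $(\tilde a)'|_{a'}$ is a subset of $a'$ and is disjoint from $a$, so it cannot lie in $a''=a$; and no appeal to Lemma~\ref{lem:ta_anormal}, antinormality, or the causal-horizon structure of $\tilde a$ is needed anywhere. The only inputs to your contradiction are that $\tau\subset(\tilde a)'|_{a'}\subset a'$ and that $\tau$ has infinite proper duration in both directions (an intrinsic property of the curve, independent of which ambient spacetime it is regarded in); these already place $\tau$ inside $\tilde a$, contradicting its being spacelike to $\tilde a$. Deleting that detour (and the superfluous worry about inextendibility, which does not appear in Definition~\ref{def:cw}) leaves precisely the paper's argument.
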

\begin{proof}
By its definition, $\tilde a$ contains all timelike curves that are both past- and future-infinite in $a'$. Hence its complement in $a'$, $(\tilde a)'|_{a'}$, contains no such curves.
\end{proof}

\subsection{Complementarity of Holograms}
\label{sec:comp}

\begin{defn}[Restricted Max-Hologram] \label{def:restrictedemax}
    Let $a\subset b$ be wedges. We define $\emax(a)|_b$ as $\emax(a)$ computed in the spacetime $b$, except that expansions are evaluated within the spacetime $M$. More explicitly, let $F(a|b)$ denote the set of all wedges satisfying the following properties:
    % one uses null infinity $\scri|_b = \scri\cap [\cl b]_{\bar M}$ to compute the fundamental complement of $a$ in $b$, $\tilde a|_b$, which then enters property (I) in Def.~\ref{def:emax}.
    \begin{enumerate}[I.]
    \item $a\subset f\subset (\tilde a|_b)' \cap b$;
    \item $f$ is antinormal at points $p\in \eth f\setminus\eth a$ in $M$;
    \item $f\cap a'$ admits a Cauchy slice $\Sigma$ such that for any wedge $h\subsetneq f$ with $h\supset a$, $\eth h\subset \Sigma\cup \eth a$, and $\eth h\setminus \eth f$ compact in $M$,
    \begin{equation}
        \hmg(f|h)\leq 0~.
    \end{equation}
\end{enumerate}
where in property (I) one uses null infinity $\scri|_b = \scri\cap [\cl b]_{\bar M}$ to compute the fundamental complement of $a$ in $b$, $\tilde a|_b$.
\end{defn}
The  \emph{restricted max-hologram} of $a$, $\emax(a)|_b$, is the union:
  \begin{equation}\label{eq:emindef}
      \emax(a)|_b\equiv \Cup_{f\in F(a|b)}\, f~.
  \end{equation}

\begin{thm}[Complementarity of holograms]\label{thm:maxmincomp} 
Let $a$ be a wedge in $M$, let $a'$ be its complement wedge in $M$, and let $\tilde a$ be the fundamental complement of $a$ in $M$; see Definitions~\ref{def:sc} and \ref{def:fc}. Then the min-hologram $\emin(a)$ is the complement, in $M$, of the restricted max-hologram $\emax(\tilde a)|_{a'}$ (see Def.~\ref{def:restrictedemax}):
\begin{equation}\label{eq:comp}
    \emin(a)' = \emax(\tilde a)|_{a'} 
    \end{equation}
\end{thm}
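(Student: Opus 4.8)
The plan is to reduce the theorem, via a De~Morgan law for wedges and Lemma~\ref{tta_empty}, to the statement that the family of wedges entering $\emin(a)$ and the family entering $\emax(\tilde a)|_{a'}$ are interchanged by the spacelike complement. First I would record that for any family of wedges $\{w_i\}$,
\begin{equation*}
 \Bigl(\bigcap_i w_i\Bigr)' = \Cup_i w_i'~,
\end{equation*}
since $\bigcap_i w_i$ is a wedge (Remark~\ref{wilem}) and both sides are, by the order-reversal $s\subset t\Leftrightarrow t'\subset s'$ valid for wedges, the smallest wedge containing every $w_i'$. Applied to Def.~\ref{def:emin}, this gives $\emin(a)'=\Cup_{g\in G(a)}g'$, while $\emax(\tilde a)|_{a'}=\Cup_{f\in F(\tilde a\,|\,a')}f$ by Def.~\ref{def:restrictedemax}. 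So it suffices to show
\begin{equation*}
 g\in G(a)\ \Longleftrightarrow\ g'\in F(\tilde a\,|\,a')~.
\end{equation*}
Here I would first use Lemma~\ref{tta_empty}: the fundamental complement of $\tilde a$ computed in $a'$ is empty, so its spacelike complement is $M$, and property~I of Def.~\ref{def:restrictedemax} (with input $\tilde a$ and ambient $a'$) collapses to $\tilde a\subset f\subset a'$.

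Next I would match the three conditions under $f=g'$, using $\eth g'=\eth g$ throughout. Property~I, $\tilde a\subset g'\subset a'$, is literally property~i, $a\subset g\subset\tilde a'$. For property~iii versus property~III, the key point is $f\cap\tilde a'=g'\cap\tilde a'$, so both may use the same Cauchy slice, and $h\mapsto h'$ is an edge-preserving, order-reversing bijection between $\{h:\ g\subsetneq h\subset\tilde a'\}$ and $\{\bar h:\ \tilde a\subset\bar h\subsetneq g'\}$ that turns $\hmg(g'|h')\le0$ into $\hmg(f|\bar h)\le0$ and $\eth h\setminus\eth g$ into $\eth\bar h\setminus\eth f$. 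So iii and III say the same thing \emph{except} that III also tests interpolating wedges $\bar h$ whose edge is permitted to meet $\eth\tilde a$ ($\eth\bar h\subset\Sigma\cup\eth\tilde a$ rather than $\subset\Sigma$). Likewise, property~ii (antinormality of $g'$ on all of $\eth g'$) trivially gives property~II (antinormality of $f=g'$ on $\eth f\setminus\eth\tilde a$), the only gap being the overlap $\eth f\cap\eth\tilde a$. Hence the forward implication needs the iii$\Rightarrow$III step and the backward implication needs the II$\Rightarrow$ii step; everything else is immediate.

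Thus the whole theorem reduces to one geometric lemma: \emph{for any wedge $f$ with $\tilde a\subset f\subset a'$, $f$ is automatically antinormal at every point of $\eth f\cap\eth\tilde a$, and $\hmg(f|\bar h)\le0$ holds automatically for every interpolating wedge $\bar h$ with $\tilde a\subset\bar h\subsetneq f$ whose edge differs from $\eth f$ only on a compact set contained in $\Sigma\cup\eth\tilde a$.} I expect this to be the main obstacle. The ingredients I would marshal are Lemma~\ref{lem:ta_anormal} --- which gives that $\eth\tilde a$ is the edge of a past and a future causal horizon, not merely that $\tilde a$ is antinormal --- the Generalized Second Law (or, beyond leading order in $G$, Discrete Max-Focusing, Conj.~\ref{conj:qfc}), and the defining property of $\tilde a=(a')_C$ that no past- and future-infinite timelike curve in $a'$ escapes $\tilde a$. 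Concretely I would argue ``no bulge past a causal horizon'': near a point $p\in\eth f\cap\eth\tilde a$ the relevant parts of $H^\pm(f')$ lie to the causal future/past of the corresponding causal horizon of $\tilde a$, whose generators focus, so a wedge $\bar h$ that coincides with $\tilde a$ near $\eth\tilde a$ and with $f$ away from it can be deformed to $f$ through a monotone family, each step nonincreasing $\hmg$ by the GSL, which simultaneously forces the nonpositive expansion of $f$ at $p$. Granting this lemma, ii$\Leftrightarrow$II and iii$\Leftrightarrow$III, hence $g\in G(a)\Leftrightarrow g'\in F(\tilde a\,|\,a')$, and the De~Morgan identity of the first paragraph yields $\emin(a)'=\Cup_{g\in G(a)}g'=\Cup_{f\in F(\tilde a\,|\,a')}f=\emax(\tilde a)|_{a'}$, as claimed.
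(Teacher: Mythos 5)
Your architecture is exactly the paper's: pass to complements via the wedge De~Morgan identity, use Lemma~\ref{tta_empty} to trivialize property~I, and show that priming exchanges the defining families, i.e.\ $g\in G(a)\Leftrightarrow g'\in F(\tilde a\,|\,a')$, matching i$\leftrightarrow$I, ii$\leftrightarrow$II, iii$\leftrightarrow$III. You also correctly locate where the nontrivial content sits (II$\Rightarrow$ii at $\eth f\cap\eth\tilde a$, and iii$\Rightarrow$III for test wedges whose edge is allowed to meet $\eth\tilde a$). The problem is that you then bundle precisely these steps into a single ``geometric lemma'' that you do not prove --- you explicitly call it the main obstacle and offer only a heuristic ``no bulge past a causal horizon'' deformation sketch. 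As it stands, the proposal is therefore a reduction, not a proof.

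Moreover, the antinormality half of your lemma is stronger than what is needed and is not what the paper uses: you claim that \emph{any} wedge $f$ with $\tilde a\subset f\subset a'$ is automatically antinormal on $\eth f\cap\eth\tilde a$, with no further hypotheses. In the discrete (nonsmooth) setting of Def.~\ref{def:fne} this is not evidently true and would itself require an argument about arbitrary edge behavior at the contact set. The paper instead gets property~ii for $f'$ in one line from ingredients you already listed but did not deploy this way: $\tilde a$ is antinormal (Lemma~\ref{lem:ta_anormal}), $f$ is antinormal away from $\eth\tilde a$ (property~II), and writing $f=f\Cup\tilde a$ the wedge-union/strong-subadditivity machinery (Lemma~\ref{lem:ssa}, Theorem~46/Corollary~48 of Ref.~\cite{Bousso:2024iry}) promotes this to antinormality on all of $\eth f$. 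Replacing your unproven lemma by that argument closes the II$\Rightarrow$ii direction. For the iii$\Rightarrow$III direction, your worry about interpolating wedges with edge on $\eth\tilde a\setminus\Sigma$ is legitimate bookkeeping --- the paper simply asserts that the Cauchy slice guaranteed by property~iii also verifies property~III --- and your proposed GSL/monotone-deformation treatment along $H(\tilde a)$ is in the spirit of arguments the paper uses elsewhere (e.g.\ in the nesting and strong-subadditivity proofs), but it is not carried out here, so this step too remains open in your write-up. (Minor slip: with $\left.\widetilde{(\tilde a)}\right|_{a'}=\varnothing$, its complement is taken in $a'$, so property~I collapses to $\tilde a\subset f\subset a'$ directly; the complement is $a'$, not $M$.)
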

\begin{proof}
Note that we have 
    \begin{equation}
    \begin{split}
         \emin(a)' = \Cup_{g\in G(a)}\, g' \: , \quad (\emax(\tilde a)|_{a'})' = \cap_{f\in F(\tilde a |a')}\, f' \: .
    \end{split}
\end{equation}

First we will show that $(\emax(\tilde a)|_{a'})' \supset \emin(a)$. Since $\emin(a)=\cap_{g\in G(a)} g$ and $(\emax(\tilde a)|_{a'})' = \cap_{f\in F(\tilde a |a')}\, f'$, it suffices to show that for every $f\in F(\tilde a |a')$, $f'$ is an element of $G(a)$, i.e., that $f'$ satisfies the properties i--iii that characterize $G(a)$ in Def.~\ref{def:emin}. 

Property i: By property I of $f$, $f \supset \tilde a $, and so $f' \subset \tilde a'$. Since $F(\tilde a |a')$ is constructed in the spacetime $a'$, we have $f\subset a'$ which implies $f'\supset a$ in the spacetime $M$ in which $G(a)$ is constructed. Property ii: By Lemma \ref{lem:ta_anormal}, $\tilde a$ is antinormal, so $f=f''$ is antinormal by Property II and strong subadditivity. 
%It follows that $f'$ is normal. 
Property iii: The Cauchy slice $\Sigma$ of $f\cap a'$ guaranteed by property III of Def.~\ref{def:accessible} (with the substitution $a\to\tilde a$ in that definition) satisfies property $iii$ of Def.~\ref{def:emin} (with the substitution $g\to f'$ in that definition).

Next we will show that $\emin(a)'\subset \emax(\tilde a)|_{a'}$. Since $\emin(a)' = \cup_{g\in G(a)}\, g'$ and $\emax(\tilde a)|_{a'} = \cup_{f\in F(\tilde a|a')} f$, it suffices to show that
for every $g\in G(a)$,  $g'\in F(\tilde a|a')$, i.e., that $g'$ satisfies the properties I--III that characterize $F(a|a')$ in Def.~\ref{def:accessible} and \ref{def:restrictedemax}. 

Property I: By property i of $g$, $a'\supset g'\supset \tilde a$. By Lemma \ref{tta_empty}, $\left. \widetilde{(\tilde a)}\right|_{a'}$ is empty, so $g'\subset \left. \widetilde{(\tilde a)}\right|_{a'}'$ trivially. Property II: $g'$ is antinormal by property ii. Property III: the Cauchy slice of $g'\cap\tilde a'$ guaranteed by property iii of Def.~\ref{def:emin} satisfies property $\mathrm{III}$ of Def.~\ref{def:emax} (after substituting $f\to g'$ and $a\to \tilde a$ in the latter definition).
\end{proof}

\section{Examples of Fundamental Complements and Holograms} \label{sec:examples}

In this section, we will display many examples of fundamental complements, entanglement wedges, and complementarity in various classes of spacetimes. We will begin by discussing Big Bang (or Big Crunch) cosmologies; then we will turn to asymptotically flat and asymptotically de Sitter spacetimes; and finally, we will recover entanglement wedges of AdS boundary regions as a special case. 

We will only consider examples where $\emin(a)=\emax(a)$ and $\emin(\tilde a) = \emax(\tilde a)$, and we will therefore abbreviate $\emin$ and $\emax$ as $e$. For an example with $\emin(a)\supsetneq\emax(a)=a$, see Fig.~\ref{fig:AdS_2sided}.

The notion of null infinity, $\scri$, will be useful to our discussion, so we begin by reviewing it here. We emphasize, however, that our definition of the fundamental complement $\tilde a$, and thus of $\emin(a)$ and $\emax(a)$, is entirely intrinsic to the spacetime $M$. It does not require the explicit construction of a conformal boundary.

\begin{defn}
    The unphysical spacetime $(\bar M,\bar g)$  is the conformal completion~\cite{Wald:1984rg} of $(M,g)$ by addition of a \emph{conformal boundary}, $\partial_{\bar M} M\equiv \bar M\setminus M$.
\end{defn}

\begin{defn}
    \emph{Future infinity}, $\scri^+$, is the subset of $\partial M$ consisting of the future endpoints in $\bar M$ of null geodesics of future-infinite affine length in $M$. \emph{Past infinity}, $\scri^-$, is defined similarly. \emph{Null infinity} is their union:
    \begin{equation}
        \scri \equiv \scri^+ \cup \scri^-~.
    \end{equation}
\end{defn}

\begin{rem}
    When $\scri^\pm$ are defined as above, and thus in all examples in this section, the fundamental complement of $a$ is given by 
    \begin{equation}
        \tilde a = \left([I^+(\scri^-|_ {a'})]\cap [I^-(\scri^+|_ {a'})]\right)''~,
    \end{equation}
    where the subscripts remind us to include only the null infinities of $a'$.
\end{rem}

\subsection{Big Bang and Big Crunch Cosmologies}
\label{sec:cosmo}

Here we show that any universe with a Big Bang (or with a Big Crunch) is trivially reconstructible, regardless of its spatial topology. Trivial reconstructibility (``a one-dimensional Hilber space'') is a property traditionally associated with spatially closed universes in the literature~\cite{Marolf:2020xie, Usatyuk:2024mzs, McNamara:2020uza, Abdalla:2025gzn, 
 Almheiri:2019hni, Harlow:2025pvj}. But in fact, it suffices for $M$ to contain no timelike curves that have infinite duration both to the past and to the future. Perhaps just as surprisingly, we will find in Sec.~\ref{sec:dS} that de Sitter space, despite being spatially closed, is \emph{not} trivially reconstructible.

\begin{thm}\label{thm:ta_empty}
    If $\tilde a= \varnothing$, then $\emax(a)=\emin(a)=M$.
\end{thm}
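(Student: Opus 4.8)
The plan is to show that if $\tilde a = \varnothing$, then the entire spacetime $M$ satisfies the three defining properties of accessibility from $a$ (Def.~\ref{def:accessible}), so that $M \in F(a)$; since $\emax(a)$ is the wedge union of all accessible wedges and every wedge is contained in $M$, this forces $\emax(a) = M$. Then, because $\emax(a) \subset \emin(a)$ (a prior property, established in Appendix~\ref{app:oldthms}) and $\emin(a) \subset M$ trivially, we also get $\emin(a) = M$. So the crux is verifying that $M$ is accessible from $a$.

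For property I of Def.~\ref{def:accessible}, I need $a \subset M \subset \tilde a'$. Since $\tilde a = \varnothing$, we have $I(\tilde a) = \varnothing$, so $\tilde a' = M \setminus \cl I(\tilde a) = M$; hence $M \subset \tilde a'$ holds trivially, and $a \subset M$ is automatic. For property II, I must check that $M$ is antinormal at every point of $\eth M \setminus \eth a$. But $M$ is a wedge with empty boundary in $M$ (it has no edge: $\eth M = \partial M \setminus I(M) = \varnothing$ since $\partial M = \varnothing$), so the set $\eth M \setminus \eth a$ is empty and property II is vacuous. For property III, I need $M \cap a' = a'$ to admit a Cauchy slice $\Sigma$ such that for every wedge $h \subsetneq M$ with $h \supset a$, $\eth h \subset \Sigma \cup \eth a$, and $\eth h \setminus \eth M = \eth h$ compact in $M$, we have $\hmg(M | h) \le 0$. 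Here I would pick $\Sigma$ to be (the portion in $a'$ of) a Cauchy slice of $M$ whose part inside $a$ is a Cauchy slice of $a$ with edge $\eth a$ — this exists by global hyperbolicity. The content then reduces to: for such $h$, $\hmg(M|h) \le 0$.

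The main obstacle is precisely this last inequality, $\hmg(M|h) \le 0$ for intermediate wedges $h$ with compact edge. I expect the argument to invoke Discrete Max-Focusing (Conj.~\ref{conj:qfc}) together with the hypothesis $\tilde a = \varnothing$. The intuition: if $\tilde a = \varnothing$, there are no past-and-future-infinite timelike curves in $a'$, which (by the causal-horizon characterization used in Lemma~\ref{lem:ta_anormal}) means every point of $a'$ can be reached by a lightsheet fired from $\eth h$ — the compact edge $\eth h$ has lightsheets that sweep out all of $M \setminus h$, so one can flow $h$ out to $M$ along a sequence of lightsheets with nonpositive generalized-entropy change at each step. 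Concretely, I would argue that $M \setminus h$ lies on $L^+(h) \cup L^-(h)$ (or is covered by iterating the lightsheet construction), so that Conj.~\ref{conj:qfc} applied with $a \to h$, $b \to$ intermediate, $c \to M$ gives $\hmg(M|h) \le 0$. The delicate points are (i) justifying the covering claim from $\tilde a = \varnothing$ — that the absence of eternal timelike curves forces the lightsheets of any compact surface to be complete/to reach conformal infinity — and (ii) handling the possibility that a single pair of lightsheets does not suffice, requiring a transfinite or limiting argument over a nested family of wedges interpolating from $h$ to $M$, each step controlled by Lemma~\ref{lem:ssa} (strong subadditivity of $\hmg$) and Conj.~\ref{conj:qfc}. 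I would also double-check the edge case $h = a$ itself, and the degenerate possibility $a = M$ (then $a' = \varnothing$, $\tilde a = \varnothing$ automatically, and the statement is trivial).
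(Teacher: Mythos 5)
Your overall strategy matches the paper's: show $M\in F(a)$ so that $\emax(a)=M$, then use $\emax(a)\subset\emin(a)$ (Theorem~\ref{thm:emaxemin}) to conclude $\emin(a)=M$. Your verification of properties I and II is correct and is exactly where the hypothesis $\tilde a=\varnothing$ is used (property I gives $\tilde a'=M$; property II is vacuous since $\eth M=\varnothing$). The problem is property III, which you explicitly leave unproven: you flag $\hmg(M|h)\leq 0$ as ``the main obstacle,'' propose to derive it from Discrete Max-Focusing plus a lightsheet-covering claim (that $\tilde a=\varnothing$ forces the lightsheets of any compact $\eth h$ to sweep out all of $M\setminus h$), and then admit that both the covering claim and the possible need for an iterated/limiting construction are unresolved. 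That is a genuine gap: the covering claim is neither justified nor obviously true (absence of past-and-future-infinite timelike curves in $a'$ says nothing direct about the global reach of lightsheets fired from an arbitrary compact edge), so as written the key inequality is not established.

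Moreover, the machinery you reach for is not needed. The paper's proof disposes of property III in one line: for any wedge $h$, $\hmg(M|h)\leq 0$. The reason is elementary at the level of Eq.~\eqref{eq:hmgdef}: $\eth M=\varnothing$, so $\A(M)=0$ and the leading contribution is $[\A(M)-\A(h)]/4G=-\A(h)/4G\leq 0$, which dominates the $O(1)$ matter term in the semiclassical expansion (and note that property III with $f=M$ only ever tests wedges $h$ with $\eth h$ compact). In particular this step makes no use of $\tilde a=\varnothing$ and requires no focusing conjecture or covering argument. So the fix is to replace your speculative lightsheet construction with this direct area-dominance observation; with that substitution your argument coincides with the paper's (the paper also offers an alternative direct computation of $G(a)=\{M\}$ for the min-hologram, but your route through Theorem~\ref{thm:emaxemin} is the paper's primary one).
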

\begin{proof}
    Note that $M\in F(a)$: accessibility condition I is satisfied since $M\supset a $; II is satisfied because $M$ is trivially antinormal; and condition III is satisfied because for any wedge $h$, $\hmg(M|h)\leq 0$. Since $\emax$ is the union of all wedges in $F$, it follows that $\emax(a)=M$. By Theorem~\ref{thm:emaxemin}, $\emin(a)=M$. [This can also be shown directly: $M\in G(a)$ because $\tilde a =\varnothing$, $M'=\varnothing$ is trivially antinormal and satisfied condition iii of Def.~\ref{def:emin}. But condition iii fails for any other wedge $g\supset a$, since $\hmg(g'|M')>0$. Hence $G(a)=\{M\}$ and $\emin(a)=M$.]
\end{proof}

\begin{cor}\label{cor:trivial}
    Suppose that the spacetime $M$ contains no future-infinite timelike curve, or that $M$ contains no past-infinite timelike curve. (Loosely speaking this means that $M$ starts with a Big Bang, or ends with a Big Crunch.) Let $a\subset M$ be a wedge. Then $\emax(a)=\emin(a)=M$. 
\end{cor}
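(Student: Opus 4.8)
The plan is to deduce Corollary~\ref{cor:trivial} from Theorem~\ref{thm:ta_empty} by showing that the stated hypothesis forces $\tilde a = \varnothing$ for every wedge $a \subset M$. Recall that $\tilde a \equiv (a')_C$, the causal wedge in the complement wedge $a'$, and by Definition~\ref{def:cw} this is $C''$ where $C \subset a'$ is the set of points lying on timelike curves contained entirely in $a'$ that have infinite proper duration both to the past and to the future. So the whole content is: if $M$ has no timelike curve that is infinite in \emph{both} time directions, then $C = \varnothing$, hence $\tilde a = C'' = \varnothing'' = \varnothing$, and Theorem~\ref{thm:ta_empty} applies.

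First I would observe that the hypothesis of the corollary is exactly the statement that $M$ contains no timelike curve of infinite proper duration to the future \emph{or} no such curve to the past; in either case there is certainly no timelike curve of infinite proper duration in \emph{both} directions. (A curve infinite in both directions would in particular be future-infinite, and also past-infinite, contradicting whichever clause of the hypothesis holds.) Second, any timelike curve contained in $a' \subset M$ is a fortiori a timelike curve in $M$, so if $M$ admits no bi-infinite timelike curve then neither does $a'$; therefore the set $C$ appearing in Definition~\ref{def:cw} applied to $b = a'$ is empty. Third, $\tilde a = C'' = \varnothing$, since the double spacelike complement of the empty set is empty (from Definition~\ref{def:sc}, $\varnothing' = M \setminus \cl I(\varnothing) = M \setminus \varnothing = M$, and $M' = M \setminus \cl I(M) = \varnothing$ because every point of $M$ lies in $I(M)$). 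Finally, applying Theorem~\ref{thm:ta_empty} with this $a$ gives $\emax(a) = \emin(a) = M$.

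A small remark on the parenthetical ``Big Bang / Big Crunch'' gloss: I would not attempt to make this precise beyond the inline comment already present, since the mathematical hypothesis is the absence of bi-infinite timelike curves and that is all the proof uses; the cosmological terminology is just motivational. I expect no real obstacle here — the only thing to be careful about is the logical direction of the hypothesis (it is an ``or'' of two one-sided conditions, each of which individually suffices), and the trivial but worth-stating fact that $\varnothing'' = \varnothing$, which keeps the reduction to Theorem~\ref{thm:ta_empty} airtight.
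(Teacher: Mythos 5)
Your proof is correct and follows essentially the same route as the paper's: reduce to Theorem~\ref{thm:ta_empty} by noting that the hypothesis excludes bi-infinite timelike curves in $M$, hence in $a'$, so $\tilde a=\varnothing$. The extra verification that $\varnothing''=\varnothing$ is a harmless elaboration of a step the paper leaves implicit.
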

\begin{proof}
    $M$ contains no timelike curve that is both future-infinite and past-infinite, so neither does $a'$. Therefore $\tilde a=\varnothing$ and the result follows from the preceding theorem.
\end{proof}

\subsection{Asymptotically Flat Spacetimes}
\label{sec:flat}

In asymptotically flat spacetimes, the restriction to $\tilde a'$ gives rise to interesting constraints. Examples are shown in Figures~\ref{fig:AFlat-annulus}-\ref{fig:AFlat-rindler-inters}. In particular, we obtain a physically sensible entanglement wedge for Rindler wedges and local deformations of Rindler wedges:
\begin{defn}\label{def:rindler}
Let $a\neq \varnothing$ be a wedge in an asymptotically flat spacetime $M$. We call $a$ a \emph{Rindler wedge} if $a=\{p,q\}''$, where $p\in\scri^-$ and $q\in\scri^+$.
\end{defn}

\begin{figure}[htbp]
    \centering
    \includegraphics[width=1\linewidth]{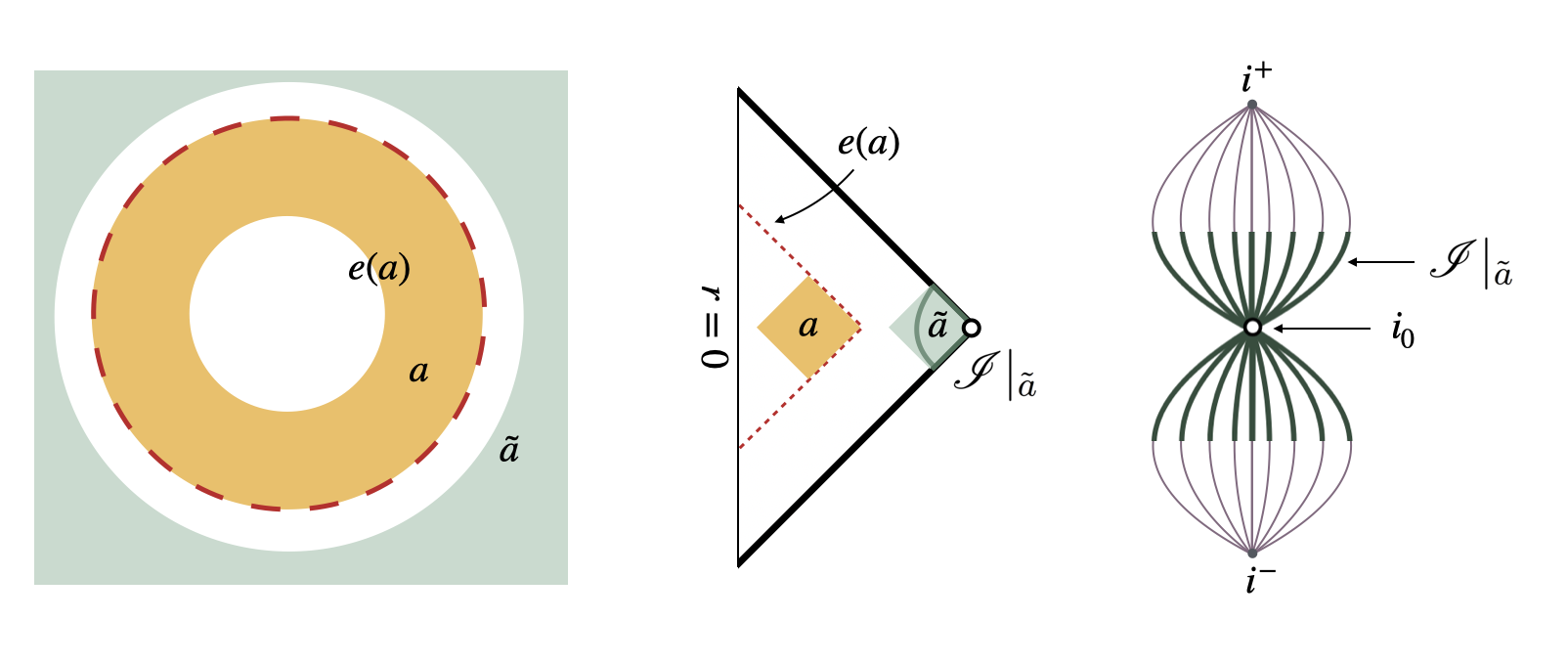}
    \caption{ The entanglement wedge $e(a)$ of a shell $a$ in asymptotically flat spacetime includes its interior. The only role of $\tilde a$ to prevent $e(a)=M$. \emph{Left:} Cauchy slice containing the edges of $a$, $e(a)$, and $\tilde a$. Middle: Penrose diagram of the configuration. \emph{Right:} The view of $\scri$. Each curve connecting $i_0$ to $i^\pm$ represents  a null generator on $\scri$. In the presence of generic matter, focusing leads to $\tilde a' \supsetneq e(a)$.}
    \label{fig:AFlat-annulus}
\end{figure}

\begin{figure}[H]
    \centering
    \includegraphics[width=1\linewidth]{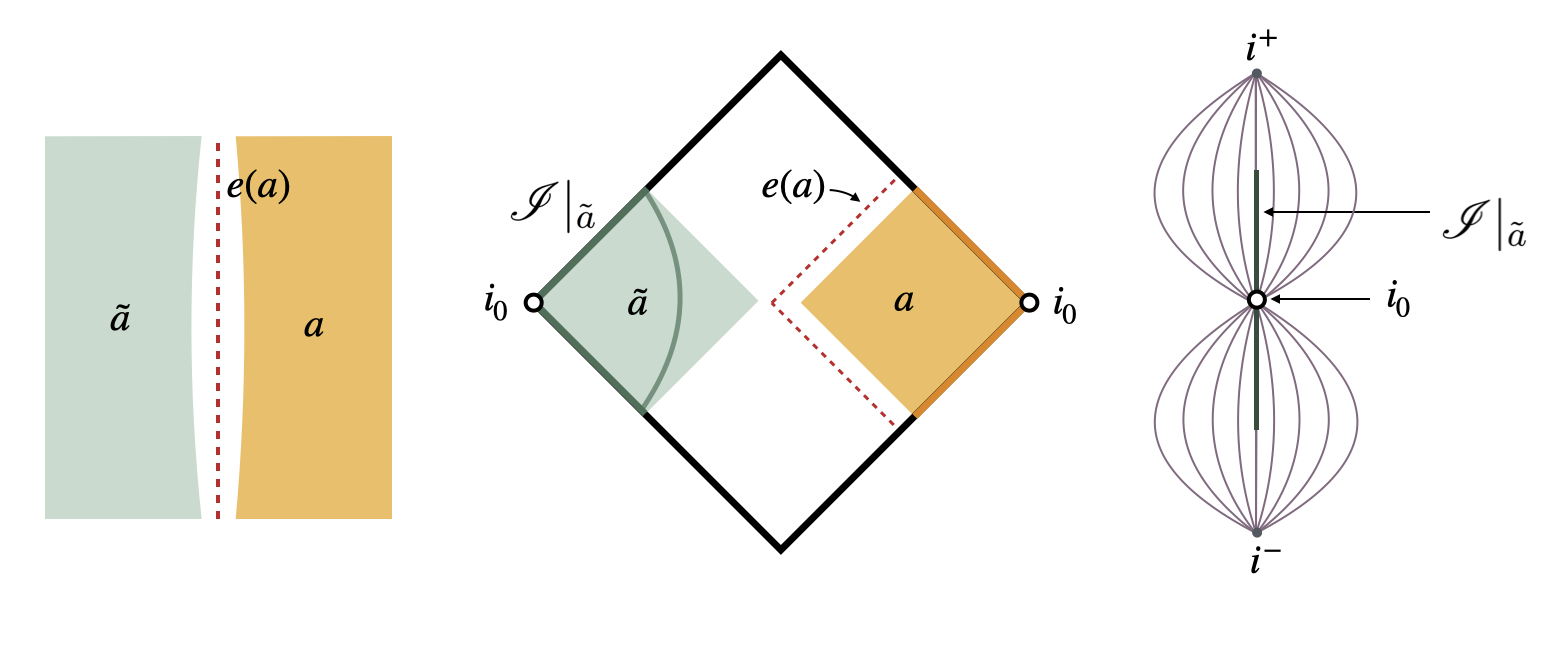}
    \caption{Here $a$ is a Rindler wedge in asymptotically flat spacetime with generic matter content. The matter focuses the Rindler horizons, making $a$ properly antinormal (concave), and causing the gaps between the edges of $\tilde a$, $e(a)$, and $a$. Only $e(a)$ is bounded by an extremal surface; $\tilde a$ and $a$ are both antinormal. In exact vacuum Minkowski space, the gap would close, so that $e(a) = a =\tilde a'$. \emph{Left:} Portion of a Cauchy slice containing all edges. \emph{Middle:} Penrose diagram. \emph{Right:} Conformal infinity. Each curve connecting spacelike infinity $i_0$ to the timelike infinities $i^\pm$ is a generator of null infinity $\scri$. }
    \label{fig:AFlat-Rindler}
\end{figure}

\begin{figure}[H]
    \centering
    \includegraphics[width=1\linewidth]{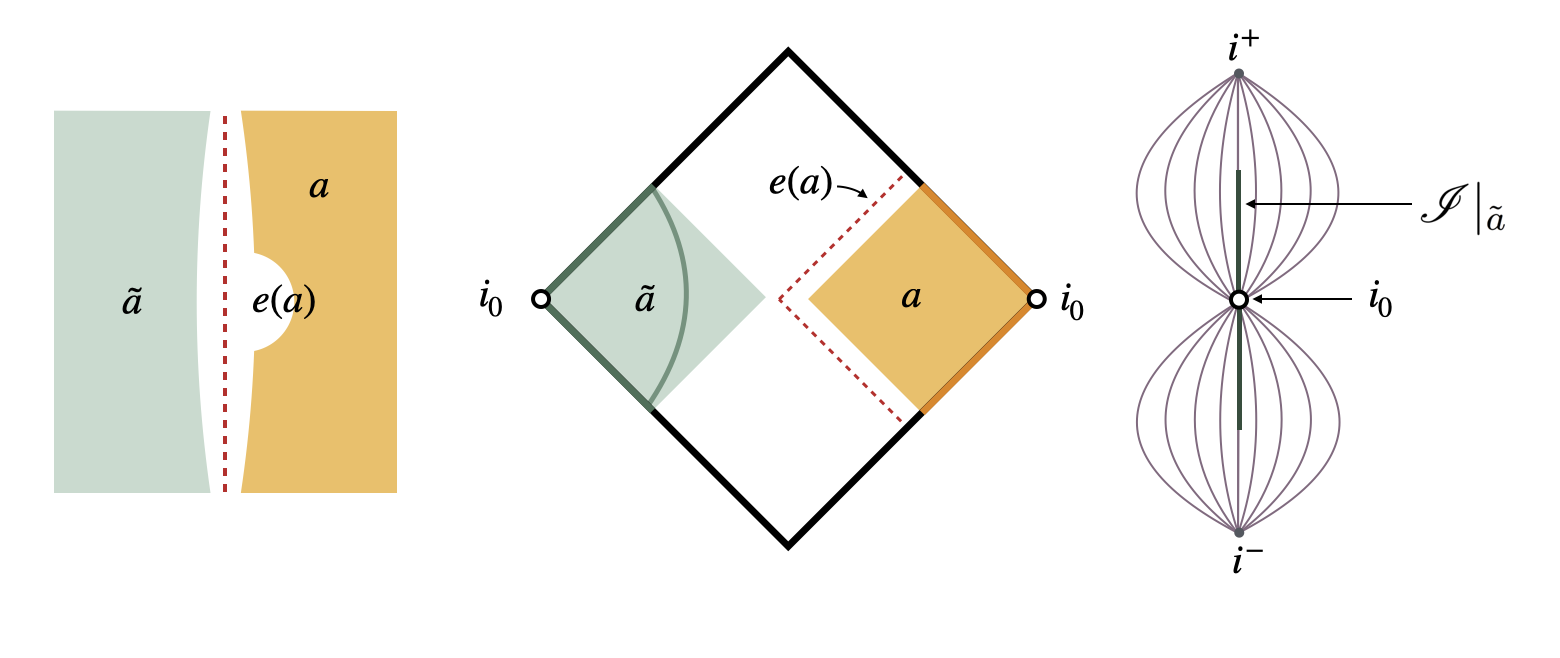}
    \caption{Here $a$ is a Rindler wedge with a notch (a spacelike inward deformation. Again the gaps are generic, but $e(a)$ always fills in the notch. In exact Minkowski, $\tilde a'=e(a)$ would be an undeformed Rindler wedge. (The Penrose diagram does not show the notch.)
    %\emph{Left:} Cauchy slice containing all edges. \emph{Middle:} Penrose diagram. \emph{Right:} Conformal Infinity. 
    }
    \label{fig:AFlat-Rindler-notch}
\end{figure}
 
\begin{figure}[H]
    \centering
    \includegraphics[width=1\linewidth]{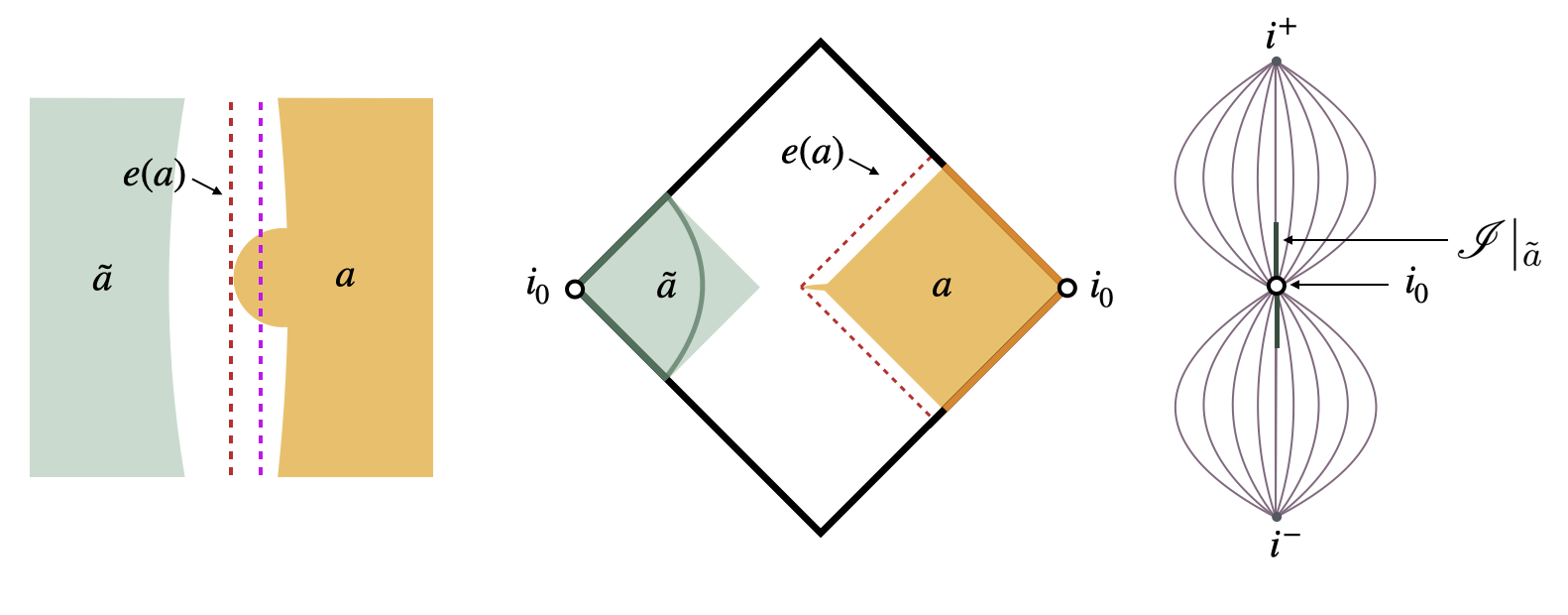}
    \caption{
    Here $a$ is a Rindler wedge with a bump (an outward deformation). Here $\eth e(a)$ is an extremal surface that touches the bump. The pink dashed line indicates the entanglement wedge that $a$ would have had if the bump was removed. The Penrose diagram is somewhat schematic due to lack of transverse symmetry; the bump is symbolized by the line protruding from $a$. On the right, note that the conformal boundary of $\tilde a$ is shortened compared to the previous two examples.}
    \label{fig:AFlat-Rindler-bump}
\end{figure}

\begin{figure}[H]
    \centering
    \includegraphics[width=0.7\linewidth]{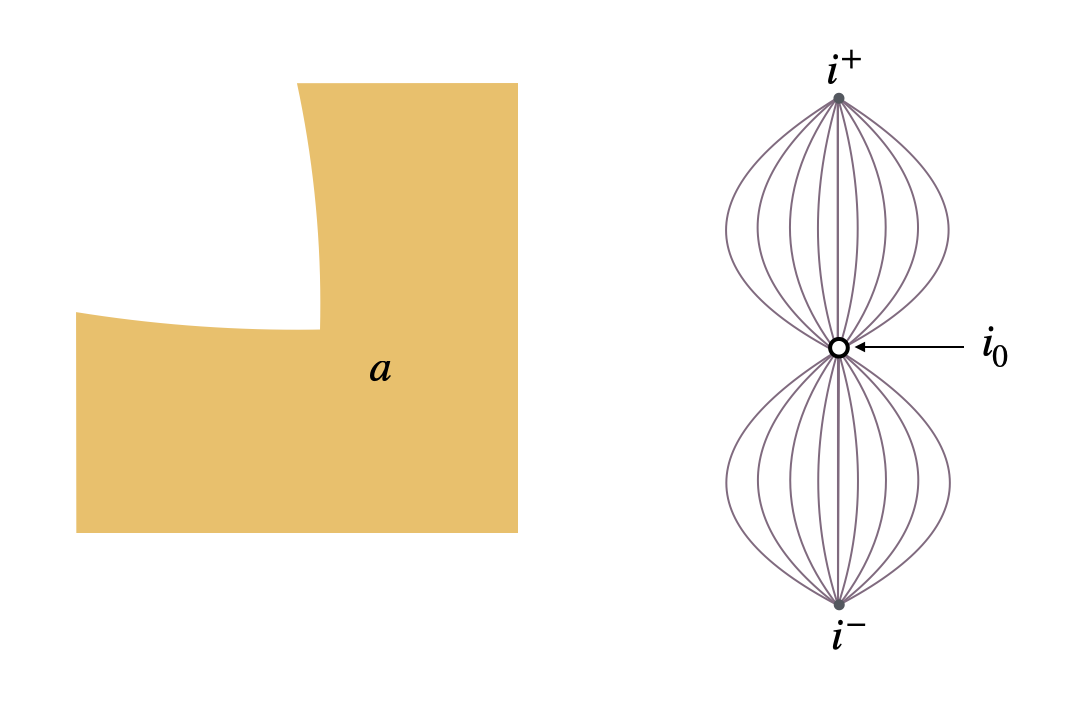}
    \caption{Here $a$ is the union of two Rindler wedges. All infinite timelike curves enter the past and future $I(a)$, so $\tilde a=\varnothing$ and $e(a)=M$.}
    \label{fig:AFlat-rindler-union}
\end{figure}

\begin{figure}[H]
    \centering
    \includegraphics[width=0.7\linewidth]{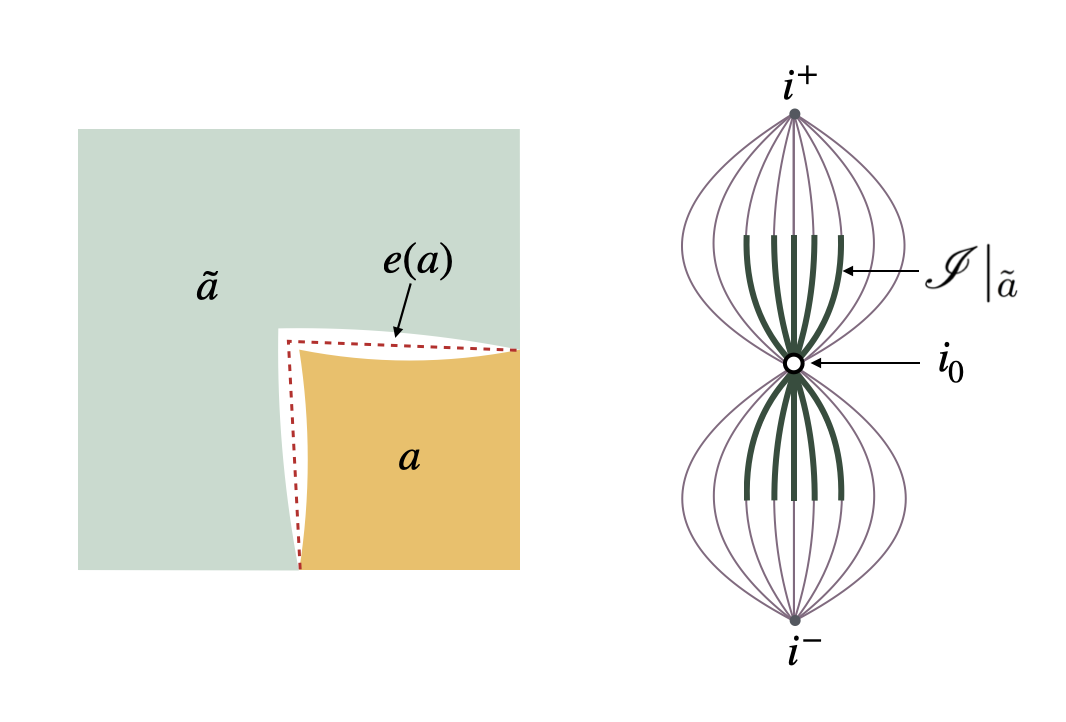}
    \caption{Here $a$ is the intersection of two Rindler wedges. In this case $\tilde a$ is the wedge union of a whole ``arc'' of Rindler wedges. More precisely, $\scri|_{\tilde a}$ is generated by an arc (or solid angle) of half-generators of $\scri^\pm$. 
    }
    \label{fig:AFlat-rindler-inters}
\end{figure}

\subsection{Asymptotically  de Sitter Spacetimes}
\label{sec:dS}

We will now exhibit a few examples in asymptotically de Sitter spacetimes. For a closed recollapsing universe $M$ in a pure state, $\emax(a)=M$ for any nonempty $a$. However in de Sitter, we will see that the presence of $\scri$ can prevent this trivial outcome. See Figs.~\ref{fig:dS-examples} and \ref{fig:dS-sch-ds} for some interesting examples. 
\begin{figure}[ht!]
    \centering
    \includegraphics[width=1\linewidth]{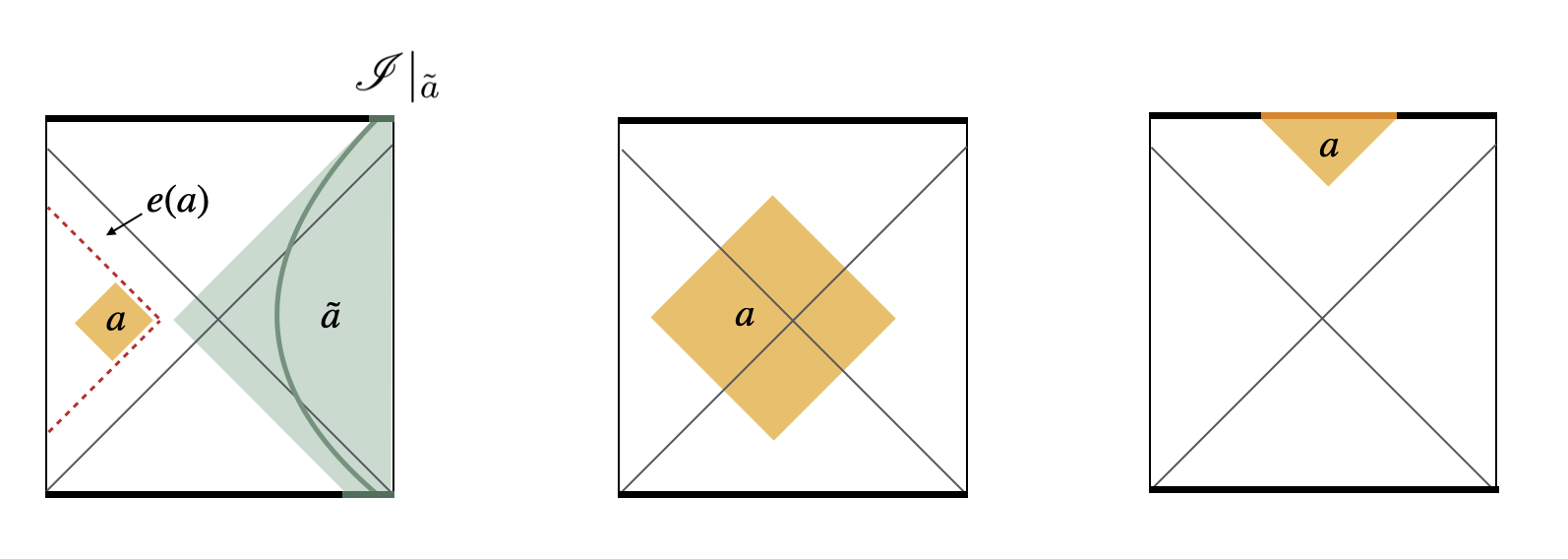}
    \caption{\emph{Left:} $a$ is a shell contained in a single static patch of de Sitter space. The fundamental complement $\tilde a$ prevents $e(a)$ from including the entire universe.  \emph{Middle and Right:} If the past and future of $a$ includes past or future infinity, then $\tilde a=\varnothing$ and $e(a)=M$. %Interestingly, the large fundamental complement (defined in Appendix \ref{sec:largedef}) will be non-empty, giving rise to an alternative version of our proposal with a different prediction in this case. See Appendix \ref{app:largecomp} for more details on the large fundamental complement. 
    }
    \label{fig:dS-examples}
\end{figure}

\begin{figure}[ht!]
    \centering
    \includegraphics[width=1\linewidth]{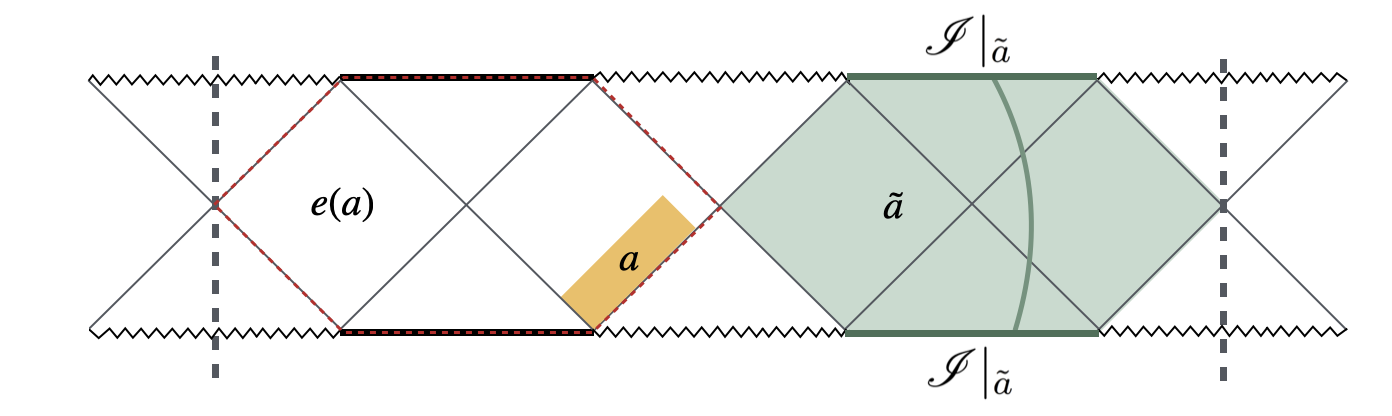}
    \caption{Schwarzschild-de Sitter spacetime; the wedge $a$ lies between the black hole and cosmological horizons. The figure shows the exact vacuum case where $e(a) = \tilde a'$. With generic matter added, the apparent horizons would become spacelike and a gap would open up. %Interestingly, the large fundamental complement (defined in Appendix \ref{sec:largedef}) differs from $\tilde a$ and will give rise to an alternative version of our proposal with a different prediction in this case. See Appendix \ref{app:largecomp} for more details on the large fundamental complement.
    }
    \label{fig:dS-sch-ds}
\end{figure}

\section{AdS Entanglement Wedges as a Special Case}
\label{sec:ads}

In this section, we will describe how the entanglement wedges of an AdS boundary region $B$ can be recovered from the prescription given in Sec.~\ref{sec:edef}, which takes a bulk region $a$ as input. The answer is to apply our prescription to the bulk wedge $\CW(B)$, the causal wedge of $B$.

The max- and min-entanglement wedges, $\maxEW(B)$ and $\minEW(B)$, of an AdS boundary wedge $B\subset \scri$ are defined in Ref.~\cite{Akers:2023fqr}, building on extensive earlier work~\cite{Ryu:2006bv,Hubeny:2007xt,Faulkner:2013ana,Engelhardt:2014gca,Akers:2020pmf}. For full generality, we will not assume that all of $\scri$ is present but only some globally hyperbolic subset of $\scri$. For example, only a small timeband may be present.

\begin{defn}\label{def:ew}
Let $M$ be asymptotically AdS, and let $B\subset \scri$ be a wedge on conformal infinity. Let $\bar B$ be the spacelike complement of $B$ on $\scri$. Let $F(B)\equiv \set{f:\mathrm{I}\,\wedge\,\mathrm{II}\,\wedge\,\mathrm{III}}$ be the set of all wedges in $M$ that satisfy the following properties:
\begin{enumerate}[1.]
\item $B$ is the conformal infinity of $f$;
\item $f$ is antinormal;
\item $f$ admits a Cauchy slice $\Sigma_{\maxEW}$ such that for any wedge $h \neq f$ with conformal boundary $B$ and $\eth h\subset \Sigma$,
    \begin{equation}
        \hmg(h|f)\leq 0~.
    \end{equation}
\end{enumerate}
The max-entanglement wedge $\maxEW(B)$ is their wedge union:
   \begin{equation}
       \maxEW(B) \equiv \Cup_{f\in F(B)}\, f~.
   \end{equation}
Let $G(B)\equiv \set{g: \mathrm{i}\, \wedge\, \mathrm{ii} \,\wedge\, \mathrm{iii}}$ be the set of all wedges that satisfy the following properties: 
    \begin{enumerate}[i.]
        \item $B$ is the conformal infinity of $g$\,;
        \item $g'$ is antinormal;
        \item $g'$ admits a Cauchy slice $\Sigma'_{\minEW}$ such that for any wedge $h \neq g$ with conformal boundary $B$ and $\eth h\subset \Sigma'$, 
        \begin{equation}
            \hmg(g'|h')\leq 0~.
        \end{equation}.
    \end{enumerate} 
  The min-entanglement wedge $\minEW(B)$ is their intersection:
   \begin{equation}
       \minEW(B) \equiv \cap_{g\in G(B)}\, g~.
   \end{equation} 
\end{defn}

%These definitions are easily seen to be a special case of $\emin$ and $\emax$, as follows.

\begin{defn}[Causal wedge of a boundary region]
    Given a wedge $B\subset \scri$ in asymptotically AdS,  
    the \emph{causal wedge} of $B$ is the double spacelike complement of $B$ in the bulk:
    \begin{equation}
        \CW(B)\equiv (B'_{\bar M}\cap M)'
    \end{equation}
    To be precise, one takes the spacelike complement of $B$ in the conformal completion $\bar M$ of $M$; then one retains only its intersection with $M$ (and thus one discards $\bar B$); and finally one takes another spacelike complement in $M$; this yields $\CW(B)$. (The fact that this yields a wedge follows from Corollary~\ref{cor:comp_open} in Appendix~\ref{app:largecomp}.)
\end{defn}

\begin{thm}
    Let $M$ be asymptotically AdS with null infinity $\scri$. ($\scri$ can be a globally hyperbolic subset of the full conformal boundary of AdS.) Let $B\subset \scri$ be a boundary wedge. Then
    \begin{equation}
        \maxEW(B) = \emax[\CW(B)]~,~~~\minEW(B) = \emin[\CW(B)]~.
    \end{equation}
\end{thm}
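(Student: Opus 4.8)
The plan is to prove the two equalities $\maxEW(B) = \emax[\CW(B)]$ and $\minEW(B) = \emin[\CW(B)]$ by showing that the defining families of wedges coincide: $F(B) = F(\CW(B))$ for the max case, and $G(B) = G(\CW(B))$ for the min case. Since in both cases the entanglement wedge is obtained from the family by the same operation (wedge union for max, intersection for min), matching the families suffices. The work therefore reduces to comparing the three defining properties in Definition~\ref{def:ew} (for a boundary region $B$) with the three properties of accessibility in Definition~\ref{def:accessible} and of $G(a)$ in Definition~\ref{def:emin}, with $a$ specialized to $\CW(B)$.

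The key geometric input is an identification of the fundamental complement of the causal wedge with the causal wedge of the boundary complement: I would first prove the lemma $\widetilde{\CW(B)} = \CW(\bar B)$, where $\bar B$ is the spacelike complement of $B$ on $\scri$. The idea is that $\CW(B)' = B'_{\bar M}\cap M$ up to closure issues (this is essentially the definition of $\CW(B)$ together with Remark~\ref{wilem}), and that within $B'_{\bar M}\cap M$ the timelike curves that are past- and future-infinite are precisely the inextendible causal curves reaching $\bar B\subset\scri$; taking the double spacelike complement of the set of such curves gives $\CW(\bar B)$. Given this, property I of accessibility, $a\subset f\subset \tilde a'$, becomes $\CW(B)\subset f\subset \CW(\bar B)'$. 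Using the standard AdS/CFT fact that a bulk wedge $f$ has conformal infinity $B$ if and only if it contains $\CW(B)$ and is spacelike to $\CW(\bar B)$ (the homology/causal-wedge-inclusion package cited around Eq.~\eqref{eq:bb} and~\cite{Wall:2012uf}), property I matches property 1 of Def.~\ref{def:ew}. Properties II and 2 (antinormality) are literally identical once the domains agree, and likewise ii and ii. For properties III and 3, I would observe that the Cauchy-slice condition ``$f\cap a'$ admits a Cauchy slice $\Sigma$ such that $\hmg(f|h)\le 0$ for intermediate $h$ with $\eth h\setminus\eth f$ compact'' reduces, when $a=\CW(B)$ and $a' = \CW(B)'\supset$ the region outside the causal wedge, to exactly the state-merging condition defining property 3 of Def.~\ref{def:ew}; the compactness qualifier is automatically satisfied because intermediate edges share the boundary anchor $B$ at infinity and differ only in the bulk. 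The min case (iii vs.\ iii) is the mirror image, using $g'$ and $\Sigma'$ in place of $f$ and $\Sigma$, together with the same identification $\widetilde{\CW(B)} = \CW(\bar B)$ so that ``$g'\cap\tilde a'$'' becomes ``$g'\cap\CW(\bar B)'$'', which contains a neighborhood of $B$ at infinity.

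The main obstacle I expect is the lemma $\widetilde{\CW(B)} = \CW(\bar B)$, and more precisely the careful handling of closures and of which portions of $\scri$ are retained. The definition of $\CW(B)$ involves taking the spacelike complement in $\bar M$, then intersecting with $M$ (discarding $\bar B$), then complementing again in $M$; proving that the causal-wedge-in-$b$ construction of Def.~\ref{def:cw} applied to $\CW(B)'$ reproduces $\CW(\bar B)$ requires knowing that every point of $\CW(\bar B)$ lies on an inextendible timelike curve that stays in $\CW(B)'$ and reaches $\bar B$ with infinite proper time — which is true in asymptotically AdS because timelike curves reach $\scri$ in infinite proper time — and conversely that no such curve can escape into $\CW(B)$. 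Subtleties about $\scri$ being only a globally hyperbolic subset of the full boundary (e.g.\ a timeband) must be threaded through here; this is exactly why Def.~\ref{def:restrictedemax} was set up with $\scri|_b$, and I would lean on Corollary~\ref{cor:comp_open} (cited in the $\CW$ definition) to guarantee the relevant sets are open/are wedges. A secondary, more routine obstacle is checking that the ``compact in $M$'' qualifiers in properties III and iii are harmless in the AdS setting — i.e.\ that dropping or keeping them does not change the resulting families — which should follow because all competitor wedges are anchored to the same $B\subset\scri$ and hence their edges differ only on a set with compact closure in $M$.
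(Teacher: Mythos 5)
Your overall strategy is stronger than what the paper proves and, as sketched, has real gaps. You propose to show that the defining \emph{families} coincide, $F(B)=F(\CW(B))$ and $G(B)=G(\CW(B))$, by matching the three properties one-to-one. The paper does not do this (and it is doubtful the families are equal): it only shows mutual inclusion of the final objects, i.e.\ that $\maxEW(B)$ satisfies properties I--III of Def.~\ref{def:accessible} with $a=\CW(B)$ (hence $\maxEW(B)\subset\emax[\CW(B)]$) and that $\emax[\CW(B)]$ satisfies properties 1--3 of Def.~\ref{def:ew} (hence the reverse inclusion), and similarly for min. The properties do \emph{not} ``literally'' match: property II only demands antinormality on $\eth f\setminus\eth\CW(B)$ while property 2 demands it everywhere (the paper bridges this by proving $\CW(B)$ itself is antinormal via the GSL on its causal horizons, and then uses SSA and Theorem~\ref{thm:emaxaccessible} for the union); and property III concerns a Cauchy slice of $f\cap\CW(B)'$ with competitors $h\supset\CW(B)$ having compact $\eth h\setminus\eth f$, while property 3 concerns a Cauchy slice of all of $f$ with competitors anchored to $B$ whose edges may dip \emph{inside} $\CW(B)$. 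Converting between these is the nontrivial part of the paper's proof: one must adjoin $H^+[\CW(B)]$ (resp.\ $H^+[\CW(\bar B)]$) to the Cauchy slices and invoke the generalized second law along these causal horizons together with strong subadditivity; your claim that III ``reduces exactly'' to 3, with compactness ``automatically satisfied,'' skips precisely this step. For example, $\CW(B)$ is trivially in $F(\CW(B))$ (properties II and III are vacuous for $f=a$), but its membership in $F(B)$ is a nontrivial statement, so equality of families is not something you can get by inspection.

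A second gap is your handling of properties 1/I. You correctly identify the key lemma $\widetilde{\CW(B)}=\CW(\bar B)$ (which the paper also uses), but you then invoke as a ``standard fact'' that a bulk wedge has conformal infinity $B$ \emph{iff} it contains $\CW(B)$ and is spacelike to $\CW(\bar B)$. The forward direction of that biconditional is false for arbitrary wedges: a wedge anchored to $\partial B$ but nested strictly inside the causal wedge also has conformal infinity $B$ without containing $\CW(B)$. What is true, and what the paper uses, is causal wedge inclusion applied to the \emph{entanglement wedges themselves} ($\maxEW(B),\minEW(B)\supset\CW(B)$, due to Wall and Akers--Penington), combined with the old boundary complementarity relation~\eqref{eq:oldcomp} to get $\maxEW(B),\minEW(B)\subset\CW(\bar B)'$. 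So the correct route is to verify the three properties for the four specific wedges $\maxEW(B)$, $\minEW(B)$, $\emax[\CW(B)]$, $\emin[\CW(B)]$ (using GSL, SSA, Discrete Max-Focusing, causal wedge inclusion, and both complementarity theorems), not to attempt a wedge-by-wedge identification of the families.
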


\begin{proof}
    We will show that $\maxEW(B)$ satisfies properties I-III of Def.~\ref{def:emax}, which implies $\maxEW(B)\subset \emax[\CW(B)]$. Conversely, we will show that $\emax[\CW(B)]$ satisfies properties I-III of Def.~\ref{def:ew}, which implies $\emax[\CW(B)]\subset \maxEW(B)$. Similarly, we will prove mutual inclusion for $\emin[\CW(B)]$ and $\minEW(B)$.

    \emph{Properties I and i}: The fundamental complement of $\CW(B)$ in the bulk is easily seen to satisfy 
    \begin{equation}\label{eq:cwb}
        \widetilde{\CW(B)}  = \CW(\bar B)
    \end{equation}
    by Def.~\ref{def:fc}.
    This ensures that $\emax[\CW(B)]$ and $\emin[\CW(B)]$ both have conformal boundary $B$ and thus satisfy the homology conditions I and i required of $\maxEW(B)$ and $\minEW(B)$ in Def.~\ref{def:ew}. 
    
    Conversely, causal wedge inclusion~\cite{Wall:2012uf,Akers:2023fqr} implies that $\maxEW(B)$ and $\minEW(B)$ both contain $\CW(B)$. When combined with traditional entanglement wedge complementarity, Eq.~\ref{eq:oldcomp}, causal wedge inclusion also implies that $\maxEW(B)$ and $\minEW(B)$ are both contained in $\CW(\bar B)'$. Hence $\maxEW(B)$ satisfies property I of Def.~\ref{def:emax} of $\emax[\CW(B)]$, and $\minEW(B)$ satisfies property i of Def.~\ref{def:emin} of $\emin[\CW(B)]$. 
    
    \emph{Properties II and ii}: $H^+(B)$ is a future causal horizon and $H^-(B)$ is a past causal horizon; thus $\CW(B)$ is antinormal by the Generalized Second Law~\cite{Bekenstein:1972tm,Wall:2010jtc,Bousso:2025xyc}. By Lemma~\ref{lem:ssa} and Theorem~\ref{thm:emaxaccessible},  $\emax[\CW(B)]$ is antinormal.
    
    Conversely, $\maxEW(B)$ is antinormal by a straightforward analogue~\cite{Akers:2023fqr} of Theorem~\ref{thm:emaxaccessible}. 

    Using Eq.~\eqref{eq:cwb}, the analogous results for $\emin[\CW(B)]$ and $\minEW[\CW(B)]$ follow from the complementarity relations~\eqref{eq:oldcomp} and \eqref{eq:comp}.
    %in Def.~\ref{def:ew} are the same as in Def.~\ref{def:emax} and Def.~\ref{def:emin}, respectively, for all $f \supset \CW(B)$ and for all $g\subset \CW(\bar B)$; and no other $f$ or $g$ need to be considered due to causal wedge inclusion.

    \emph{Properties III and iii:} The future Cauchy horizon of a causal wedge is a causal horizon and satisfies the generalized second law~\cite{Bousso:2025xyc}. By strong subadditivity, Eq.~\eqref{eq:ssa}, the Cauchy surfaces $\Sigma$ and $\Sigma'$ appearing in Def.~\ref{def:emax} of $\emax[\CW(B)]$ and in Def.~\ref{def:emin} of $\emin[\CW(B)]$, respectively, can be completed to the Cauchy slices appearing in properties III and iii of Def.~\ref{def:ew} by adjoining $H^+[CW(B)]$ to $\Sigma$ and $H^+[\CW(\bar B)]$ to $\Sigma'$.

    Conversely, given the Cauchy slices $\Sigma_{\maxEW}$ and $\Sigma_{\minEW}$ guaranteed to exist by properties III and iii of Def.~\ref{def:ew}, let $\Sigma = \Sigma_{\maxEW}\cap \CW(B)'\cup J^-[\Sigma_{\maxEW}] \cap H^+[\CW(B)'] \cup J^-[\Sigma_{\maxEW}] \cap H^+[\CW(B)']$, and let $\Sigma' = \Sigma_{\minEW}\cap \CW(\bar B)'\cup J^-[\Sigma_{\minEW}] \cap H^+[\CW(\bar B)'] \cup J^-[\Sigma_{\minEW}] \cap H^+[\CW(\bar B)']$. By strong subadditivity and the generalized second law, $\Sigma$ satisfies property III of Def.~\ref{def:emax}, and $\Sigma'$ satisfies property iii of Def.~\ref{def:emin}.
\end{proof}

Interesting examples are shown Figures \ref{fig:AdS-reflecting} and \ref{fig:AdS-non-reflectting}.

\begin{figure}[ht!]
    \centering
    \includegraphics[width=1\linewidth]{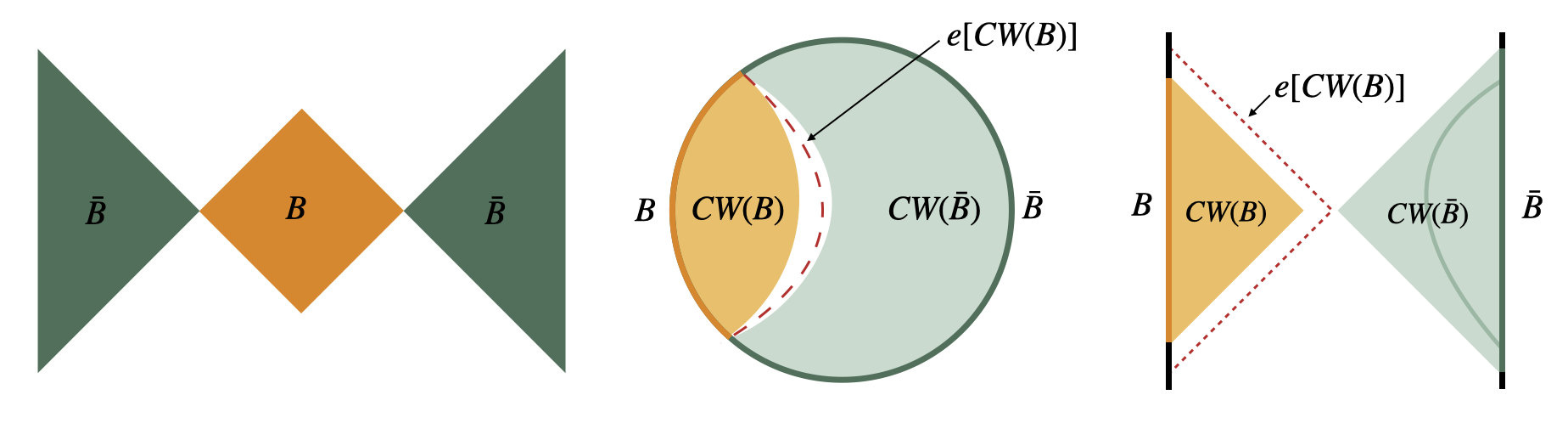}
    \caption{
    The entanglement wedge of an AdS boundary region $B$ can be recovered from our more general definition: $\EW(B) = e[\CW(B)]$, where $\CW$ is the causal wedge of $B$ in the bulk. The fundamental complement of $\CW(B)$ is the causal wedge of $\bar B$: $\widetilde{\CW(B)}=\CW(\bar B)$. As usual, we show a generic case with matter present, so that the entanglement wedge is larger than $\CW(B)$, and $\CW(\bar B)'$ is yet larger.}
    \label{fig:AdS-reflecting}
\end{figure}
% $\scri|_{\tilde a}$

\begin{figure}[H]
    \centering
    \includegraphics[width=1\linewidth]{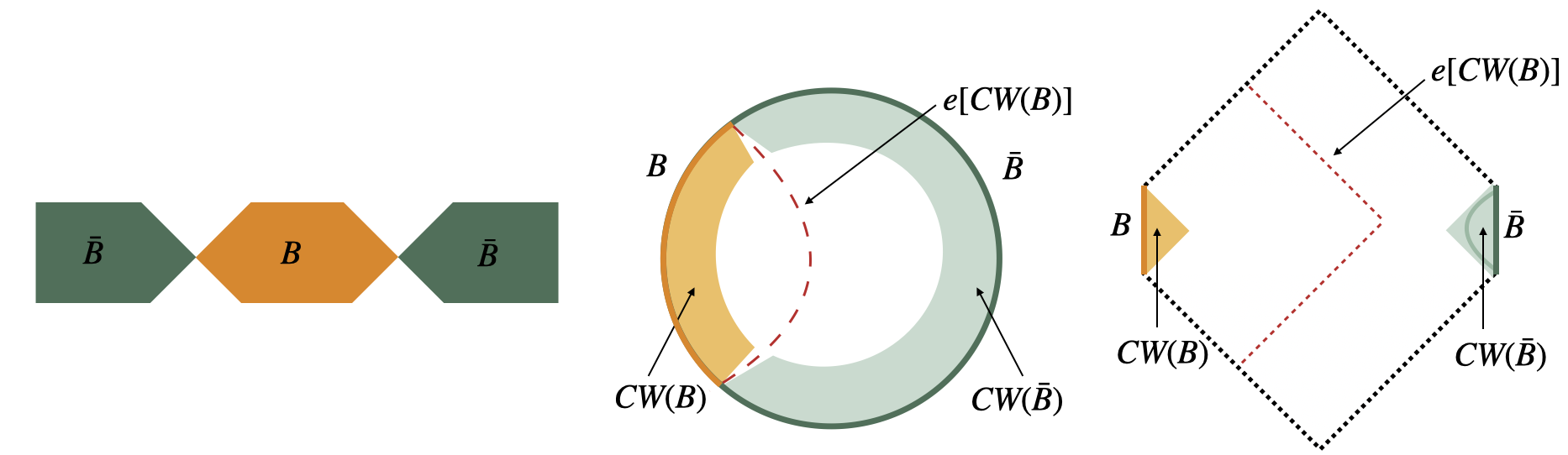}
    \caption{
    Only a timeband of $\scri$ is present in this example (left), so the bulk dual is an extendible manifold bounded by the dotted lines (right). Again, the ``traditional'' AdS/CFT entanglement wedge $\maxEW(B)$ can be recovered from our prescription as $\emax[\CW(B)]$. 
    }
    \label{fig:AdS-non-reflectting}
\end{figure}

\subsection*{Acknowledgments}
This work was supported in part by the Berkeley Center for Theoretical Physics; and by the Department of Energy, Office of Science, Office of High Energy Physics under Award DE-SC0025293.
 
\appendix

\section{Nesting, No-Cloning, and Strong Subadditivity of Holograms}
\label{app:oldthms}

In this appendix, we use the new definitions~\ref{def:accessible}, \ref{def:emax}, and \ref{def:emin} that involve the fundamental complement $\tilde a$, in order to re-prove important properties of bulk entanglement wedges that were established in Refs.~\cite{Bousso:2023sya,Bousso:2024iry} for the original definitions.\footnote{Throughout this Appendix, the large fundamental complement $\hat a$ (see Appendix~\ref{app:largecomp}) could be substituted for $\tilde a$; all proofs remain valid.} If a proof is unmodified compared to Ref.~\cite{Bousso:2023sya} or~\cite{Bousso:2024iry}, we provide only the reference. If a proof must be modified, then we produce the entire proof even if the modifications are minor. 

We begin by providing additional definitions that are needed only in this Appendix.

\begin{defn}\label{def:hmin_gen}
The \emph{generalized smooth min-entropy} of $b$ conditioned on $a$ is defined by
\begin{equation}
      \hmingen(b|a) = - \hmg(a'|b') ~.
\end{equation}
\end{defn}

\begin{defn}[Noncontracting Wedges]
    A wedge $a$ is said to be \emph{future-noncontracting} (FNC) if there exists an open set $O\supset\eth a$ such that no proper past-directed outward null deformation of $a$ with compact support within $O$ is FNE on all new edge points. That is, no wedge $b\supsetneq a$ with $\eth b\subset \eth a \cup [H^-(a')\cap O]$ is FNE on all points in $\eth b\setminus\eth a$. (Heuristically, in sufficiently smooth settings, FNC implies nonnegativity of the outward future quantum expansion of $a$; moreover, positivity of the quantum expansion implies FNC.) \emph{Past-noncontracting} (PNC) is defined analogously.   
\end{defn}

\begin{defn}[Marginal accessibility from $a$]
Let $a\subset k$ be wedges. $k$ is called \emph{future-marginally accessible} from $a$ if $k$ is accessible from $a$ and $k$ is FNC. \emph{Past-marginally accessible} is defined analogously. The wedge $k$ is called a \emph{throat accessible from $a$} if $k$ is future- and past-marginally accessible from $a$.     
\end{defn}

We now turn to deriving key properties of $\emin$ and $\emax$.

\begin{thm}
    [Chain rules]\label{conj:chain}
    Let $a\supset b\supset c$. The generalized max and min entropies obey the following inequalities:
    \begin{align}
        \hmg(a|c) &\leq\hmg(a|b)+\hmg(b|c) \label{eq:chainMaxMaxMax}\\
        \hmingen(a|c) &\geq \hmingen(a|b)+\hmingen(b|c) \label{eq:chainMinMinMin}\\
        \hmingen(a|c) &\leq \hmg(a|b) + \hmingen(b|c)\label{eq:chainMinMaxMin}\\
        \hmg(a|c) &\geq \hmg(a|b) + \hmingen(b|c)\label{eq:chainMaxMaxMin}~.
    \end{align}
\end{thm}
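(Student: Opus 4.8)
The plan is to split every generalized conditional entropy into an area contribution and a matter (quantum-field) contribution via Eq.~\eqref{eq:hmgdef}, check each of the four inequalities separately for the two contributions, and add them. For nested wedges $y\subset x$ one has $\hmg(x|y)=\tfrac{\A(x)-\A(y)}{4G}+\hmax(x|y)+O(G)$ and, using Def.~\ref{def:hmin_gen} together with $\A(y')=\A(y)$ (the edge of a wedge coincides with the edge of its complement) and purification duality for the ordinary smooth entropies of the fields, $\hmingen(x|y)=\tfrac{\A(x)-\A(y)}{4G}+\hmin(x|y)+O(G)$, with $\hmin$ the ordinary smooth conditional min-entropy. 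So each of \eqref{eq:chainMaxMaxMax}--\eqref{eq:chainMaxMaxMin} becomes an inequality among area differences plus the corresponding inequality among ordinary smooth conditional entropies of the fields.

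The area terms are immediate: since $a\supset b\supset c$, $\A(a)-\A(c)=[\A(a)-\A(b)]+[\A(b)-\A(c)]$, so the area part of each of the four statements holds with \emph{equality}. For the matter terms, fix a Cauchy slice meeting all three nested wedges and, using a type-I factorization of the field algebras exactly as in the proofs of strong subadditivity of $\S$ in Refs.~\cite{Bousso:2023sya,Bousso:2024iry}, label the factors $A,B,C,D$ with $C$ the factor of $c$, $B$ the factor of $b$ outside $c$, $A$ the factor of $a$ outside $b$, and $D$ the factor of the exterior $a'$. Then $\hmg(a|c)$, $\hmg(a|b)$, $\hmg(b|c)$ have matter parts $\hmax(AB|C)$, $\hmax(A|BC)$, $\hmax(B|C)$ respectively, and (purifying against $D$) $\hmingen(a|c)$, $\hmingen(a|b)$, $\hmingen(b|c)$ have matter parts $\hmin(AB|C)$, $\hmin(A|BC)$, $\hmin(B|C)$. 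With these identifications the matter parts of \eqref{eq:chainMaxMaxMax}--\eqref{eq:chainMaxMaxMin} are exactly the four standard chain rules for smooth conditional min-/max-entropy known from quantum information theory --- two of them following from the other two by the same min/max purification duality that underlies Def.~\ref{def:hmin_gen}, combined with the symmetry under exchanging the first two subsystems. Adding the (equality) area parts to the (inequality) matter parts yields the four stated inequalities.

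The step carrying the content is thus the import of the quantum-information chain rules; I do not expect a genuinely new obstacle, only the bookkeeping recurring elsewhere in the paper. Two caveats should be flagged: first, the smooth-entropy chain rules shift the smoothing parameter $\epsilon$ and carry additive corrections logarithmic in $1/\epsilon$, which we suppress as throughout; second, at higher orders in $G$ the chain rules are a conjecture, just as strong subadditivity is beyond leading order (cf.\ the proof of Lemma~\ref{lem:ssa}), whereas at the order we work to --- area plus matter --- the four inequalities are theorems.
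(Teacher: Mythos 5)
Your proposal is correct and follows essentially the same route as the paper: the area differences telescope exactly, the matter parts are the standard chain rules for smooth conditional min-/max-entropies from quantum information theory (the paper cites Vitanov et al.\ for these), and at higher orders in $G$ the statements become conjectural. Your extra bookkeeping (the type-I factorization and purification duality, plus the smoothing caveat) just makes explicit what the paper leaves to the cited reference.
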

\begin{proof}
        These relations hold trivially for the area terms in Eq.~\eqref{eq:hmgdef}; and they hold by Ref.~\cite{Vitanov_2013} for the conditional max and min matter entropies. At higher orders in $G$, the chain rules would need to be stated as conjectures~\cite{Bousso:2024iry}. 
\end{proof}

\begin{thm}\label{thm:emaxaccessible} 
$\emax(a)$ is accessible from $a$.
\end{thm}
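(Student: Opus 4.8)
The plan is to verify the three defining conditions of accessibility (Def.~\ref{def:accessible}) for the wedge $f=\emax(a)$. Condition~I is pure wedge algebra. By Def.~\ref{def:emax}, $\emax(a)=\Cup_{f\in F(a)}f=(\cap_{f\in F(a)}f')'$, which is a wedge since an intersection of complement wedges is a wedge and so is its complement (Remark~\ref{wilem}); and since $\tilde a$, $a$, and $\tilde a'$ are wedges (Def.~\ref{def:cw}, Remark~\ref{wilem}), every $f\in F(a)$ with $a\subset f\subset\tilde a'$ complements to $\tilde a\subset f'\subset a'$, hence $\tilde a\subset\cap_f f'\subset a'$, and complementing back gives $a\subset\emax(a)\subset\tilde a'$. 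Along the way one notes $F(a)\neq\varnothing$: the wedge $a$ itself lies in $F(a)$, because $a\subset a\subset\tilde a'$, condition~II is vacuous at $\eth a\setminus\eth a=\varnothing$, and condition~III is vacuous since no wedge $h$ with $a\subset h\subsetneq a$ exists.

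For condition~II, I would invoke the fact that a wedge union of wedges that are antinormal at their new edge points is again antinormal at its new edge points; this is the refinement of Theorem~46 of Ref.~\cite{Bousso:2024iry} already used to prove Lemma~\ref{lem:ta_anormal}, and its only analytic input is strong subadditivity of $\hmg$ (Lemma~\ref{lem:ssa}). Concretely, a point $p\in\eth\emax(a)\setminus\eth a$ that locally coincides with a new edge point of some $f\in F(a)$ inherits antinormality from $f$, while at ``crossing'' points of the edges in the family antinormality of the union follows from Lemma~\ref{lem:ssa}. No new work is required here beyond the cited results, because the extra constraint $f\subset\tilde a'$ appearing in Def.~\ref{def:accessible} is irrelevant to antinormality.

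Condition~III is the main obstacle, and I expect the bulk of the argument to reside there, following Ref.~\cite{Bousso:2023sya}. One must exhibit a single Cauchy slice $\Sigma$ of $\emax(a)\cap a'$ with $\hmg(\emax(a)|h)\le 0$ for every intermediate wedge $h$ satisfying $a\subset h\subsetneq\emax(a)$, $\eth h\subset\Sigma\cup\eth a$, and $\eth h\setminus\eth\emax(a)$ compact in $M$. The strategy is: using Lemma~\ref{lem:ssa}, choose the slices of the various $f\cap a'$ coherently on the family $F(a)$ (which is directed under $\Cup$, since $f_1\Cup f_2$ again satisfies I--III), take $\Sigma$ to be the resulting limiting slice, and then, given an intermediate $h$, use compactness of $\eth h\setminus\eth\emax(a)$ to reduce to an appropriate finite subunion $f\in F(a)$; condition~III of that $f$ controls $\hmg(f|h)$, Discrete Max-Focusing (Conj.~\ref{conj:qfc}) together with Lemma~\ref{lem:ssa} gives $\hmg(\emax(a)|f)\le0$ since $\eth f$ lies on $\eth\emax(a)$ together with its lightsheet, and the chain rule~\eqref{eq:chainMaxMaxMax} assembles the bound. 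The delicate point, exactly as in the prior literature, is the uniform handling of the infinite family of candidate edges; the compactness hypothesis on $\eth h\setminus\eth\emax(a)$ and Conj.~\ref{conj:qfc} are precisely what make this work, and the new restriction to $\tilde a'$ causes no additional difficulty since it holds automatically for $\emax(a)$ by condition~I.
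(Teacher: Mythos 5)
Your treatment of Properties I and II matches the paper's: $a\in F(a)$ makes $F(a)$ nonempty, the double inclusion $a\subset\emax(a)\subset\tilde a'$ follows from wedge algebra, and antinormality at new edge points is inherited under wedge union by the result of Ref.~\cite{Bousso:2024iry} (the paper cites Corollary~48, the ``new edge points'' refinement you allude to) plus induction. The problem is Property III, which you correctly identify as the heart of the matter but then sketch with a decomposition that does not work as stated. You propose to bound $\hmg(\emax(a)|h)$ by $\hmg(\emax(a)|f)+\hmg(f|h)$ for a suitable finite subunion $f\in F(a)$, with the second term ``controlled by condition III of $f$.'' But condition III of $f$ only applies to intermediate wedges $h$ with $a\subset h\subsetneq f$ and $\eth h\subset\Sigma_f\cup\eth a$, where $\Sigma_f$ is $f$'s own slice. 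A generic $h$ in the definition of accessibility of $\emax(a)$ is neither contained in $f$ nor has its edge on $\Sigma_f$: its edge lies on the slice $\Sigma$ of $\emax(a)\cap a'$, which in the paper's construction contains null segments of Cauchy horizons $H^{\pm}(f')$ and portions of \emph{other} wedges' slices, not $\Sigma_f$ itself. Likewise, your claim that $\hmg(\emax(a)|f)\le 0$ follows ``since $\eth f$ lies on $\eth\emax(a)$ together with its lightsheet'' is not an a priori fact for an arbitrary $f\in F(a)$; that geometric relation is an output of the inductive slice construction, not an input.

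The paper's route supplies exactly the missing piece: a two-wedge induction in which, for $f_1,f_2\in F(a)$ with slices $\Sigma_1,\Sigma_2$, one \emph{explicitly} builds the slice of $(f_1\Cup f_2)\cap a'$ as
\begin{equation}
  \Sigma_3=\Sigma_1\cup[H^+(f_1')\cap J^-(\Sigma_2)]\cup[H^-(f_1')\cap J^+(\Sigma_2)]\cup[\Sigma_2\cap f_1']~,
\end{equation}
and then, for any $h$ with $\eth h\subset\Sigma_3\cup\eth a$, does \emph{not} try to dominate $h$ by a single accessible wedge. Instead it applies property III of $f_1$ and $f_2$ to the pieces $h\cap\Sigma_1$ and $h\cap\Sigma_2$ (whose edges do lie on the respective slices), lifts these via strong subadditivity (Lemma~\ref{lem:ssa}) to $\hmg[h\Cup\Sigma_1|h]\le 0$ and $\hmg[\Sigma_3|h\Cup(\Sigma_3\setminus\Sigma_2)]\le 0$, bridges the null segment of $H^{\pm}(f_1')$ with Discrete Max-Focusing to get $\hmg[h\Cup(\Sigma_3\setminus\Sigma_2)|h\Cup\Sigma_1]\le 0$, and sums the three steps with the chain rule~\eqref{eq:chainMaxMaxMax}. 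Your compactness remark addresses passing from finite unions to the full union, but it does not repair the edge-placement problem above; without the explicit $\Sigma_3$ construction and the $h\cap f_i$ decomposition, the chain you describe cannot be assembled.
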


\begin{proof} 
We must show that $\emax(a)\in F(a)$, i.e., that $\emax(a)$ satisfies properties I--III listed in Def.~\ref{def:accessible}.

\emph{Property I:}
Since $a\subset \tilde a'$, $f=a$ satisfies properties I--III with any choice of Cauchy slice. Hence $F(a)$ is nonempty, and Eq.~\eqref{eq:def:emax} implies that $\tilde a'  \supset \emax(a) \supset a$. 

\emph{Property II:} Proceeding inductively, let $f_1,f_2\in F(a)$ and $f_3=f_1\Cup f_2$. By Corollary~48 of Ref.~\cite{Bousso:2024iry}, $f_3$ is antinormal at points $p \in \eth f_3 \setminus \eth a$. By induction, $\emax(a)$ is antinormal at points $p \in \eth \emax(a) \setminus \eth a$.

\emph{Property III:} Again proceeding inductively, let $f_1,f_2\in F(a)$, with property III satisfied by Cauchy slices $\Sigma_1$ and $\Sigma_2$, respectively. Then $f_3=f_1\Cup f_2$ admits the Cauchy slice
\begin{equation}\label{eq:unionslice}
  \Sigma_3 = \Sigma_1 \cup  [H^+(f_1')\cap J^-(\Sigma_2)] 
  \cup [H^-(f_1')\cap J^+(\Sigma_2)] \cup [\Sigma_2 \cap f_1']~.
\end{equation}
Let $h\supset a$, $\eth h\subset \Sigma_3$. By property III of $\Sigma_1$ and $\Sigma_2$, 
\begin{align}
    \hmg[\Sigma_1|h\cap\Sigma_1] &\leq 0~, \\
    \hmg[\Sigma_2|h\cap\Sigma_2] & \leq 0~.
\end{align}
Then strong subadditivity of the generalized conditional max entropy (Lemma~\ref{lem:ssa}) implies
\begin{align}
    \hmg[h\Cup \Sigma_1|h] &\leq 0~, \\
    \hmg[\Sigma_3|h \Cup (\Sigma_3\setminus \Sigma_2)] &\leq 0~.
\end{align}
By Discrete Max-Focusing, Conj.~\ref{conj:qfc}, 
\begin{equation}
    \hmg[h \Cup (\Sigma_3\setminus \Sigma_2)|h\Cup \Sigma_1] \leq 0~.
\end{equation}
Finally, by the chain rule Eq.~\eqref{eq:chainMaxMaxMax},
\begin{align}
    \hmg[\Sigma_3|h] &\leq \hmg[\Sigma_3|h \Cup (\Sigma_3\setminus \Sigma_2)] \\
    &\qquad + \hmg[h \Cup (\Sigma_3\setminus \Sigma_2)|h\Cup \Sigma_1] + \hmg[h\Cup \Sigma_1|h] \nonumber\\
    &\leq 0\nonumber~.
\end{align}
\end{proof}

\begin{thm}\label{lem:eminpropshare}
$\emin(a)\in G(a)$.
\end{thm}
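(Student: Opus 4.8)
The plan is to mirror the structure of the proof of Theorem~\ref{thm:emaxaccessible}, showing that $\emin(a)$ satisfies the three defining properties i--iii of $G(a)$ in Def.~\ref{def:emin} by an inductive argument over finite intersections, plus a limiting argument. First I would establish that $G(a)$ is nonempty: one checks that a sufficiently small wedge works, or at least that the argument closes on an appropriate candidate; in fact, once complementarity (Theorem~\ref{thm:maxmincomp}) is available, membership $\emin(a)\in G(a)$ is equivalent to the statement that $\emin(a)'=\emax(\tilde a)|_{a'}$ is accessible from $\tilde a$ in $a'$, which is exactly Theorem~\ref{thm:emaxaccessible} applied in the spacetime $a'$ with input $\tilde a$. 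So the cleanest route is: invoke Theorem~\ref{thm:maxmincomp} to write $\emin(a)' = \emax(\tilde a)|_{a'}$, then apply the ``restricted'' analogue of Theorem~\ref{thm:emaxaccessible} to conclude $\emax(\tilde a)|_{a'}\in F(\tilde a|a')$, and finally run the equivalence established in the proof of Theorem~\ref{thm:maxmincomp} (that $f\in F(\tilde a|a')\iff f'\in G(a)$) in the direction $f=\emax(\tilde a)|_{a'}\Rightarrow f'=\emin(a)\in G(a)$.

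Carrying this out, the key steps in order are: (1) State that $\emin(a)' = \emax(\tilde a)|_{a'}$ by Theorem~\ref{thm:maxmincomp}. (2) Observe that Theorem~\ref{thm:emaxaccessible}, whose proof uses only strong subadditivity (Lemma~\ref{lem:ssa}), Discrete Max-Focusing (Conj.~\ref{conj:qfc}), the chain rule (Theorem~\ref{conj:chain}), and Corollary~48 of Ref.~\cite{Bousso:2024iry}, goes through verbatim for the restricted max-hologram: none of those ingredients cares whether the ambient manifold is $M$ or the globally hyperbolic subspacetime $a'$, and the ``expansions evaluated in $M$'' clause in Def.~\ref{def:restrictedemax} is exactly what Lemma~\ref{lem:ssa} and Conj.~\ref{conj:qfc} use. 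Hence $\emax(\tilde a)|_{a'}$ satisfies properties I--III of Def.~\ref{def:restrictedemax}, i.e. $\emax(\tilde a)|_{a'}\in F(\tilde a|a')$. (3) In the proof of Theorem~\ref{thm:maxmincomp} it is shown that for every $f\in F(\tilde a|a')$, $f'\in G(a)$ (this is the first half of that proof, checking i from I plus Lemma~\ref{tta_empty}, ii from II, and iii from III). Apply this with $f=\emax(\tilde a)|_{a'}$ to get $f' = \emin(a)\in G(a)$, which is the claim.

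Alternatively, for readers who want to avoid the logical dependence on Theorem~\ref{thm:maxmincomp}, one can give the direct inductive proof: show $G(a)$ is closed under pairwise intersection (if $g_1,g_2\in G(a)$ then $g_3 = g_1\cap g_2\in G(a)$, equivalently $g_3' = g_1'\Cup g_2'$, using that antinormality of $g_1',g_2'$ passes to $g_1'\Cup g_2'$ by Corollary~48 of Ref.~\cite{Bousso:2024iry}, and that property iii's Cauchy slice $\Sigma'$ is built from $\Sigma_1'$ and $\Sigma_2'$ by the wedge-union slice formula analogous to Eq.~\eqref{eq:unionslice}, with the $\hmg(g_3'|h')\leq0$ estimate following from strong subadditivity, Discrete Max-Focusing, and the chain rule exactly as in Property III of Theorem~\ref{thm:emaxaccessible}), and then take the limit over the (directed) family of all finite intersections to conclude the infinite intersection $\emin(a)=\cap_{g\in G(a)}g$ itself lies in $G(a)$. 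This is the same computation as Theorem~\ref{thm:emaxaccessible} read through the spacelike complement.

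The main obstacle I expect is the limiting step — passing from ``$G(a)$ closed under finite intersections'' to ``the full intersection is in $G(a)$'' — since property iii must be verified for the limiting wedge with a single Cauchy slice $\Sigma'$ that works simultaneously against \emph{all} competitor wedges $h$, and one must argue that the slices produced at finite stages converge to (or can be replaced by) such a $\Sigma'$; the compactness clause on $\eth h\setminus\eth g$ in property iii is what makes this tractable, but it needs care. Given that Theorem~\ref{thm:maxmincomp} is already proved in the excerpt, though, I would present the short route (1)--(3) above as the actual proof and relegate the direct inductive argument to a remark, since the short route inherits whatever limiting argument is implicitly used in Theorem~\ref{thm:emaxaccessible}.
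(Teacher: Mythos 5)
Your main route is correct but genuinely different from the paper's. The paper proves $\emin(a)\in G(a)$ directly: it first shows $G(a)\neq\varnothing$ by exhibiting the explicit witness $g=\tilde a'$ (whose complement $\tilde a$ is antinormal by Lemma~\ref{lem:ta_anormal} and satisfies iii vacuously since $g'\cap\tilde a'=\varnothing$) --- note this witness is the \emph{largest} allowed wedge, not a ``sufficiently small'' one as you suggest, since elements of $G(a)$ need an antinormal complement; it then gets property ii for $\emin(a)'$ from Theorem~46 of Ref.~\cite{Bousso:2024iry} applied to the union $\Cup_g g'$, and property iii by the pairwise-intersection Cauchy-slice construction $\Sigma_3'$ (the mirror of Eq.~\eqref{eq:unionslice}), with SSA, Discrete Max-Focusing and the chain rule --- i.e.\ exactly your ``alternative'' direct argument, which you relegated to a remark. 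Your preferred short route instead invokes Theorem~\ref{thm:maxmincomp} to write $\emin(a)'=\emax(\tilde a)|_{a'}$, a restricted analogue of Theorem~\ref{thm:emaxaccessible} to place $\emax(\tilde a)|_{a'}$ in $F(\tilde a|a')$, and the first half of the complementarity proof to pull back to $G(a)$. This is logically admissible (Theorem~\ref{thm:maxmincomp} and Theorem~\ref{thm:emaxaccessible} are proved without appeal to this statement, so there is no circularity), and it is economical in that all entropy inequalities are used only once, inside Theorem~\ref{thm:emaxaccessible}; but it buys this at the cost of a lemma the paper never states --- that Theorem~\ref{thm:emaxaccessible} holds verbatim for the \emph{restricted} max-hologram of Def.~\ref{def:restrictedemax}, with the wedge union of wedges inside $a'$ taken consistently and with Lemma~\ref{tta_empty} trivializing the containment condition I --- which you assert but would need to spell out (the paper does lean on a similar restricted statement only later, in Corollary~\ref{thm:eminmarginal}). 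The paper's direct proof, by contrast, is self-contained given the appendix machinery and additionally produces the explicit Cauchy slice $\smin(a)$ that is then used as an ingredient in Theorems~\ref{thm:emaxemin}, \ref{thm:nesting} and \ref{thm:nocloning}; your short route yields existence of such a slice only indirectly through the restricted-$\emax$ construction. The finite-to-infinite limiting issue you flag is real but is equally implicit in the paper's own induction, so it does not disadvantage your version.
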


\begin{proof} 
We must show that $\emin(a)$ satisfies properties i--iii listed in Def.~\ref{def:emin}.

\emph{Property i:} Consider $g=\tilde a '$. Since $\tilde a\subset a'$ by definition, $g$
satisfies property i. By Lemma \ref{lem:ta_anormal},   
$\tilde a$ is antinormal and satisfies property ii. Since $g\cap \tilde a' =\varnothing$, property iii is trivially satisfied.
Thus, $G(a)$ is nonempty. Property i then implies $\tilde a'\supset \emin(a)\supset a$.

\emph{Property ii:} The union of antinormal wedges is antinormal by Theorem 46 in Ref.~\cite{Bousso:2024iry}. Hence $\emin(a)'=(\Cup g')'$ is antinormal by Eq.~\eqref{eq:emindef}.

\emph{Property iii:} Let $g_1,g_2\in G(a)$ with property iii satisfied by Cauchy slices $\Sigma'_1$ and $\Sigma'_2$, respectively; and let $g_3=g_1\cap g_2$. Then $g_3'$ admits the Cauchy slice
\begin{equation}\label{eq:sminunion}
  \Sigma'_3 = \Sigma'_1 \cup  [H^+(g_1)\cap J^-(\Sigma'_2)] 
  \cup [H^-(g_1)\cap J^+(\Sigma'_2)] \cup [\Sigma'_2 \cap g_1]~.
\end{equation}
Let $h\supset g_3$, $\eth h\subset \Sigma'_3$. By property iii of $\Sigma'_1$ and $\Sigma'_2$, 
\begin{align}
    \hmg[\Sigma_1'|h'\cap\Sigma_1'] & \leq 0~,\\
    \hmg[\Sigma'_2|h'\cap\Sigma'_2] & \leq 0~.
\end{align}
Then  strong subadditivity of the generalized conditional max entropy (Lemma~\ref{lem:ssa}) implies
\begin{align}
    \hmg[h'\Cup \Sigma_1'|h'] & \leq 0~, \\
    \hmg[\Sigma'_3|h' \Cup (\Sigma_3'\setminus \Sigma_2')] & \leq 0~.
\end{align}
By Discrete Max-Focusing (Conj.~\ref{conj:qfc}),
\begin{equation}
    \hmg[h' \Cup (\Sigma_3'\setminus \Sigma_2')|h'\Cup \Sigma_1'] \leq 0~.
\end{equation}
Summing the above three inequalities and using the chain rule Eq.~\eqref{eq:chainMaxMaxMax}, we obtain $\hmg[\Sigma'_3|h']\leq 0$.
% \begin{align*}
%     \hmg[\Sigma'_3|h'] &\leq \hmg[h'\Cup \Sigma_1'|h'] + \hmg[h' \Cup (\Sigma_3'\setminus \Sigma_2')|h'\Cup \Sigma_1'] + \hmg[\Sigma'_3|h' \Cup (\Sigma_3'\setminus \Sigma_2')] \\
%     &\leq 0~.
% \end{align*}
Hence $g_3$ satisfies property iii.
\end{proof}

\begin{conv}
Let $\smin(a)$ denote a Cauchy slice of $\emin'(a)$ that satisfies property iii, and which exists by the preceding Theorem. 
\end{conv}

\begin{thm}\label{thm:emaxthroat}
$\emax(a)$ is a throat in the spacetime $\tilde a'$.
\end{thm}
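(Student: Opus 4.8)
The statement is that $\emax(a)$ is a throat in $\tilde a'$, i.e.\ it is accessible from $a$ (already established in Theorem~\ref{thm:emaxaccessible}) and in addition is both future- and past-noncontracting (FNC and PNC). So the only work is to upgrade ``antinormal at new edge points'' to ``noncontracting.'' The plan is to argue by contradiction: suppose $\emax(a)$ fails to be FNC. Then by the definition of FNC there is an open set $O\supset\eth\,\emax(a)$ and a proper past-directed outward null deformation producing a wedge $b\supsetneq\emax(a)$ with $\eth b\subset\eth\,\emax(a)\cup[H^-(\emax(a)')\cap O]$ that is FNE at all points of $\eth b\setminus\eth\,\emax(a)$. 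The idea is then to show $b$ is itself accessible from $a$ --- which forces $b\subset\emax(a)$ by Def.~\ref{def:emax}, contradicting $b\supsetneq\emax(a)$. The symmetric argument in the past direction gives PNC, and together with accessibility this is exactly the throat property.

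\textbf{Key steps.} First I would verify that $b$ satisfies accessibility property~I: since the deformation is supported in $O$ and keeps $\eth b$ on $\eth\,\emax(a)\cup H^-(\emax(a)')$, and since $\emax(a)\subset\tilde a'$ with $\tilde a$ antinormal (Lemma~\ref{lem:ta_anormal}), a small enough outward deformation stays inside $\tilde a'$; one should note here that $\tilde a$ being the exterior of a past and future causal horizon means its Cauchy horizons are not being crossed by an \emph{outward} deformation of $\emax(a)$ in the relevant direction. Next, property~II: $b$ must be antinormal at points of $\eth b\setminus\eth a$. At the \emph{new} edge points $\eth b\setminus\eth\,\emax(a)$ the deformation was chosen to be FNE; PNE at those points should follow from the fact that these points lie on $H^-(\emax(a)')$, which is a piece of a past Cauchy horizon, together with the Generalized Second Law / the antinormality structure inherited from $\emax(a)$ (this is the analogue of the ``exterior of a causal horizon is antinormal'' argument used in Lemma~\ref{lem:ta_anormal}). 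At the old edge points $\eth b\cap\eth\,\emax(a)\setminus\eth a$, $b$ agrees with $\emax(a)$ locally and inherits antinormality from Theorem~\ref{thm:emaxaccessible}. Finally, property~III: I would build the required Cauchy slice for $b$ by taking the Cauchy slice $\Sigma$ of $\emax(a)\cap a'$ guaranteed by Theorem~\ref{thm:emaxaccessible} and extending it across the thin past-directed region between $\eth\,\emax(a)$ and $\eth b$; then for any intermediate wedge $h$ with $a\subset h\subsetneq b$ and $\eth h$ on this extended slice, I would split the inequality $\hmg(b|h)\le 0$ using the chain rule Eq.~\eqref{eq:chainMaxMaxMax} through the intermediate wedge $h\Cup(\Sigma\cap\cdots)$: one factor is controlled by property~III of $\emax(a)$ and strong subadditivity (Lemma~\ref{lem:ssa}), and the other factor, which lives on the past lightsheet region, is controlled by Discrete Max-Focusing (Conj.~\ref{conj:qfc}) --- exactly the pattern used in the proof of Theorem~\ref{thm:emaxaccessible}.

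\textbf{Main obstacle.} The delicate point is property~II at the new edge points: showing that the FNE deformation producing $b$ is \emph{also} PNE there, so that $b$ is genuinely antinormal rather than merely future-nonexpanding. The cleanest route is to observe that $H^-(\emax(a)')$ is a past causal horizon (because $\emax(a)$, being accessible, is antinormal and its complement's past Cauchy horizon is generated by null geodesics that either reach $\scri$ or remain on the horizon forever, i.e.\ it is a past causal horizon in the sense of Ref.~\cite{Bousso:2025xyc}), hence the wedge bounded by a cut of it is PNE by the Generalized Second Law; the deformation is arranged so that $\eth b\setminus\eth\,\emax(a)$ is such a cut. One must be a little careful that the deformation is ``proper'' and compactly supported in $O$ so that these horizon arguments apply locally and that the max-focusing hypotheses ($\eth h\subset\eth a\cup L^-(a)$) are actually met; this is where I expect the bulk of the technical care to go. If a direct argument is awkward, the fallback is to cite the analogous result for $\emin$/$\maxEW$ in Ref.~\cite{Akers:2023fqr} and for unions of antinormal wedges (Theorem~46 of Ref.~\cite{Bousso:2024iry}), adapting the contradiction to rule out a noncontracting \emph{deformation} rather than a non-antinormal \emph{point}.
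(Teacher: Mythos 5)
Your overall strategy is the same as the paper's: assume $\emax(a)$ fails to be FNC in $\tilde a'$, take the resulting past-outward deformation, show that it is accessible from $a$, and contradict the maximality built into Def.~\ref{def:emax}; your Cauchy-slice extension for Property III is essentially the paper's choice $\Sigma_1=\smax\cup\left(H^-(\emax(a)')\cap f_1\right)$, and the past-direction argument for PNC is symmetric, exactly as in the paper. Your concern about Property I is also moot for the right reason: the throat property is being tested in the spacetime $\tilde a'$, so the competing deformation is by construction a wedge of $\tilde a'$ and the containment $a\subset f_1\subset\tilde a'$ is automatic.

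The genuine gap is the step you yourself flag as the main obstacle: establishing PNE of the deformed wedge at the new edge points. Your proposed justification — that $H^-(\emax(a)')$ is a past \emph{causal} horizon so the Generalized Second Law applies — is not valid in general: $H^-(\emax(a)')$ is merely the past Cauchy horizon of a wedge complement, and there is no reason its generators should be the boundary of the future of $\scri^-$ or of a past-infinite worldline, which is what ``causal horizon'' means in Ref.~\cite{Bousso:2025xyc} (contrast Lemma~\ref{lem:ta_anormal}, where $\tilde a$ is bounded by causal horizons by construction, precisely because it is built from infinite timelike curves). The fallback you mention (Theorem~46 of Ref.~\cite{Bousso:2024iry} on unions of antinormal wedges) does not repair this either, since the deformed wedge is not a union of wedges already known to be antinormal. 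The paper closes this step with a one-line appeal to discrete focusing: by Theorem~\ref{thm:emaxaccessible}, $\emax(a)$ is antinormal — in particular past-nonexpanding — on $\eth\,\emax(a)\setminus\eth a$, and the new edge is reached by sliding to the past along $H^-(\emax(a)')$, i.e.\ along the past lightsheet of $\emax(a)$, so Conj.~\ref{conj:qfc} (in its past-directed form) guarantees that the deformation $f_1$ is PNE there, while FNE holds by the contradiction hypothesis. With that substitution for your GSL argument, the rest of your proof goes through and coincides with the paper's.
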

\begin{proof} 
Suppose, for contradiction, that $\emax(a)$ is not FNC in $\tilde a'$. Then there exists a past outward deformation of $\emax(a)$ in $\tilde a'$, denoted $f_1$,  that is FNE. By discrete focusing, $f_1$ is also PNE. Hence, this deformation satisfies Properties~I and II of Definition~\ref{def:accessible}. 

$f_1$ also satisfies property III. To verify this, consider the Cauchy surface $\Sigma_1 = \smax \cup \left(H^-(\emax(a)') \cap f_1\right)$, where $\smax$ is the Cauchy surface for $\emax(a)$ on which Property~III holds. Therefore, $f_1 \in F(a)$, which contradicts the definition of $\emax$ in Definition~\ref{def:emax}.

Similarly, one can show that $\emax(a)$ must be PNC by considering a future outward deformation. Then, by  \ref{thm:emaxaccessible}, $\emax(a)$ is future- and past-marginal accessible from $a$.
\end{proof}

\begin{cor}\label{thm:eminmarginal}
$\emin(a)'$ is future- and/or past-marginal on $\eth \emin(a)\setminus\eth a$. Specifically,
\begin{itemize}
    \item $\emin(a)'$ is future-marginal on $\eth \emin(a) \cap H^+(a')$.
    \item $\emin(a)'$ is past-marginal on $\eth \emin(a) \cap   H^-(a')$.
    \item $\emin(a)'$ is a throat on $\eth \emin(a) \cap a'$.
\end{itemize}
\end{cor}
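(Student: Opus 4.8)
The plan is to obtain the three statements from the complementarity theorem and the throat property of $\emax$, rather than re-running a deformation argument directly for $\emin$. First I would record the two facts that already hold on \emph{all} of $\eth\emin(a)$. By Theorem~\ref{lem:eminpropshare}, $\emin(a)\in G(a)$, so property ii of Definition~\ref{def:emin} says $\emin(a)'$ is antinormal, i.e. FNE and PNE at every point of $\eth\emin(a)$. Next, Theorem~\ref{thm:maxmincomp} gives $\emin(a)'=\emax(\tilde a)|_{a'}$, so in particular $\eth\emin(a)=\eth\bigl(\emax(\tilde a)|_{a'}\bigr)\subset\cl a'$, whence
\begin{equation}
    \eth\emin(a)\setminus\eth a \;\subset\; a'\,\cup\,H^+(a')\,\cup\,H^-(a')~,
\end{equation}
which is the trichotomy behind the three bullet points.

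The second ingredient is that $\emax(\tilde a)|_{a'}$ is a throat in the spacetime $a'$ \emph{itself}. By Lemma~\ref{tta_empty} the fundamental complement of $\tilde a$ computed in $a'$ is empty, so the ambient spacetime used in Definition~\ref{def:restrictedemax} to build $\emax(\tilde a)|_{a'}$ is all of $a'$. The proofs of Theorems~\ref{thm:emaxaccessible} and \ref{thm:emaxthroat} use only Lemma~\ref{lem:ssa}, Conjecture~\ref{conj:qfc}, the chain rule, and the gluing of Cauchy slices, none of which cares whether wedge operations are performed in $M$ or in $a'$ while expansions are kept in $M$; they therefore go through verbatim with $M\to a'$ and yield that $\emax(\tilde a)|_{a'}$ is future- and past-marginally accessible from $\tilde a$ in $a'$, hence FNC and PNC as a wedge in $a'$. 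Combined with the antinormality recorded above, this makes $\emin(a)'$ FNE, PNE, FNC and PNC at every edge point lying in the \emph{open} set $a'$, i.e. a throat on $\eth\emin(a)\cap a'$ — the third bullet.

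On the Cauchy-horizon portions only ``half'' of the throat survives. At $p\in\eth\emin(a)\cap H^+(a')$, the outward null deformation of $\emin(a)'$ that probes its future-outward expansion stays on the $a'$ side of $H^+(a')$ — it pushes the edge toward the interior of $a'$, away from $a'$'s future boundary — so the FNC half of the throat-in-$a'$ conclusion transfers; together with FNE this says $\emin(a)'$ is future-marginal at $p$. Past-outward deformations at $p$ would cross $H^+(a')$ and leave $a'$, so PNC is not available and no throat is claimed. The case $p\in\eth\emin(a)\cap H^-(a')$ is the time-reverse, giving past-marginality.

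The step I expect to be the real obstacle is precisely this last reduction at $H^\pm(a')$: one must pin down the local geometry of $\eth\emin(a)$ against the null boundary $H^+(a')$ of the spacetime $a'$ — showing that a genuine edge point of $\emin(a)'$ on $H^+(a')$ but off $\eth a$ can only touch $H^+(a')$, not lie along it, and that then exactly the past-outward null congruence is the one realized by deformations internal to $a'$. This is likely to require the fact, already invoked in Lemma~\ref{lem:ta_anormal} via Ref.~\cite{Bousso:2025xyc}, that $H^\pm(a')$ are past/future causal horizons, so that the remaining (along-horizon) null expansion is controlled by the generalized second law. Once that local picture is secured, the rest is bookkeeping of the kind already done in the proof of Theorem~\ref{thm:maxmincomp}.
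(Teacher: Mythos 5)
Your proposal follows essentially the same route as the paper: the paper's proof likewise combines Lemma~\ref{tta_empty} with the preceding throat theorem to conclude that $\emax(\tilde a)|_{a'}$ is a throat in $a'$, and then invokes the complementarity theorem $\emin(a)'=\emax(\tilde a)|_{a'}$, deferring the remaining case analysis to Theorem~64 of Ref.~\cite{Bousso:2024iry}. Your extra discussion of how the throat property restricts on $H^\pm(a')$ versus the interior of $a'$ is just an explicit filling-in of what the paper delegates to that citation, so the argument is correct and not a genuinely different approach.
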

\begin{proof}
This was previously proven as Theorem 64 of Ref.~\cite{Bousso:2024iry}. Here we note that the preceding theorem and Lemma~\ref{tta_empty} imply that $\emax(\tilde a)|_{a'}$ is a throat in $a'$. The result then follows immediately from the complementarity theorem~\ref{thm:maxmincomp}, $\emin(a)' = \emax(\tilde a)|_{a'}$.
\end{proof}

\begin{thm}\label{thm:emaxemin}
$\emax(a)\subset\emin(a)$.
\end{thm}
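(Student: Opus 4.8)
The plan is to prove the equivalent statement that every wedge $f$ accessible from $a$ is contained in every $g\in G(a)$; this gives the theorem, since $\emax(a)=\Cup_{f\in F(a)}f$ and $\emin(a)=\cap_{g\in G(a)}g$, so that each such $g$, being a wedge containing every $f$, contains $\emax(a)$, whence $\emax(a)\subset\cap_g g=\emin(a)$. In fact it is enough to treat $f=\emax(a)$, which is accessible from $a$ by Theorem~\ref{thm:emaxaccessible}, together with the canonical $g=\emin(a)\in G(a)$ of Theorem~\ref{lem:eminpropshare}. Passing to complements and using the bijection $g\mapsto g'$ between $G(a)$ and $F(\tilde a|a')$ established in the proof of the complementarity theorem~\ref{thm:maxmincomp}, the claim reads: $\emax(a)$ is spacelike to $\emin(a)'=\emax(\tilde a)|_{a'}$. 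This is the expected no-cloning statement --- a region cannot be holographically reconstructed both from $a$ and from its fundamental complement $\tilde a$ (evaluated within $a'$) --- and it is the form I would contradict.

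So suppose $\emax(a)\not\subset\emin(a)$, and set $w:=\emax(a)\cap\emin(a)$ and $u:=\emax(a)\Cup\emin(a)$, which are wedges (Remark~\ref{wilem}) obeying $a\subset w\subsetneq\emax(a)$, $\emin(a)\subsetneq u\subset\tilde a'$, $w'=\emax(a)'\Cup\emin(a)'$, and $u'=\emax(a)'\cap\emin(a)'$. The first step is to glue the Cauchy slice $\Sigma$ of $\emax(a)\cap a'$ supplied by property III of Definition~\ref{def:accessible} to the Cauchy slice $\Sigma'$ of $\emin(a)'\cap\tilde a'$ supplied by property iii of Definition~\ref{def:emin}, across the relevant Cauchy horizons, as in Eqs.~\eqref{eq:unionslice} and \eqref{eq:sminunion}, so that $\eth w$, $\eth u$, and the edges of $a$, $\emax(a)$, and $\emin(a)$ all lie on one Cauchy slice. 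Property III of $\emax(a)$ applied to the inward deformation $\emax(a)\to w$ gives $\hmg(\emax(a)\,|\,w)\le0$; property iii of $\emin(a)$ applied to the outward deformation $\emin(a)\to u$ gives $\hmg(\emin(a)'\,|\,u')\le 0$, i.e.\ $\hmingen(u\,|\,\emin(a))\ge0$ via $\hmingen(b|a)=-\hmg(a'|b')$. Pushing these two bounds onto the common slice using strong subadditivity (Lemma~\ref{lem:ssa}), Discrete Max-Focusing (Conjecture~\ref{conj:qfc}) along the Cauchy horizons of $w$ and $u$, and the chain rules (Theorem~\ref{conj:chain}) --- the same mechanics as in the Property~III step of Theorem~\ref{thm:emaxaccessible}, now run between the shrinking of $\emax(a)$ and the growth of $\emin(a)$ --- I expect to obtain the complementary bound $\hmg(w'\,|\,\emax(a)')\le0$, i.e.\ $\hmingen(\emax(a)\,|\,w)\ge0$. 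Together with $\hmg(\emax(a)\,|\,w)\le0$ and the general inequality $\hmg(\emax(a)|w)\ge\hmingen(\emax(a)|w)$, this forces both conditional entropies to vanish; since the slab between $\eth w$ and $\eth\emax(a)$ is nonempty by assumption, Discrete Max-Focusing (which is strict across any nontrivial lightsheet) is then violated --- a contradiction, so $\emax(a)\subset\emin(a)$.

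The heart of the work is the middle paragraph: one must check that the intermediate wedges $w$ and $u$ actually meet the hypotheses of properties III and iii --- their edges lie on the constructed slice, $\eth w\setminus\eth\emax(a)$ and $\eth u\setminus\eth\emin(a)$ are compact, and the null congruences between $\eth w$ and $\eth\emax(a)$, and between $\eth u$ and $\eth\emin(a)$, are genuine lightsheets so that Conjecture~\ref{conj:qfc} applies --- and then carry out the entropy chase that converts the $\emin(a)$-side bound into $\hmg(w'|\emax(a)')\le0$. The crucial point, and the reason bare strong subadditivity does not suffice, is that $\emax(a)$ and $\emin(a)$ need a priori be neither nested nor spacelike, so there is no submodularity relation to exploit between them directly; Discrete Max-Focusing, which transports the inequalities along Cauchy horizons onto a common Cauchy slice, is what does the job. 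Keeping every intermediate wedge inside $\tilde a'$ is automatic, since $\emax(a),\emin(a)\subset\tilde a'$ and $\tilde a'$ is a wedge. Finally, if a no-cloning theorem in the style of Ref.~\cite{Bousso:2023sya} is available, one may instead quote it directly in the form dictated by Theorem~\ref{thm:maxmincomp}; the physical content and technical input are identical.
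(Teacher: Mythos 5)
Your overall strategy---a contradiction argument running properties III and iii of $\emax(a)$ and $\emin(a)$ against each other via Discrete Max-Focusing, strong subadditivity and the chain rules until a chain of inequalities pinches---is the same in spirit as the paper's proof, but the step that carries all the weight is both left undone (``I expect to obtain'') and, as you have set it up, not licensed by the definitions. Property III of Def.~\ref{def:accessible} constrains only wedges $h$ with $\eth h\subset\Sigma\cup\eth a$ for the \emph{specific} Cauchy slice $\Sigma=\smax$ whose existence accessibility guarantees, and likewise property iii of Def.~\ref{def:emin} refers to the specific slice $\smin$; neither applies to $w=\emax(a)\cap\emin(a)$ or $u=\emax(a)\Cup\emin(a)$, whose edges have no reason to lie on those slices. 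Your proposed repair---regluing $\smax$ and $\smin$ across Cauchy horizons so that ``$\eth w$, $\eth u$, and the edges of $a$, $\emax(a)$, $\emin(a)$ all lie on one Cauchy slice''---fails in general: precisely when $\emax(a)\not\subset\emin(a)$, portions of $\eth\emax(a)$ and $\eth\emin(a)$ can be timelike related, so no single Cauchy slice contains both, and a gluing of the type of Eq.~\eqref{eq:unionslice} retains only one of the two original slices in full. This is exactly why the paper's proof does not work with $\emax\cap\emin$ and $\emax\Cup\emin$ but with the slice-projected wedges $\emin\cap\smax$ and $(\emax'\cap\smin)'$, to which properties III and iii do apply, and then uses Conj.~\ref{conj:qfc} to relate those projected wedges back to $\emax$ and $\emin$. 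The bound you want, $\hmg(w'|\emax(a)')\le 0$, is the substance of the proof, and with your choice of $w$ it does not follow from the stated properties.

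The endgame is also not sound as written. Conjecture~\ref{conj:qfc} is a non-strict inequality, and no strict version is available in the paper; saturated conditional entropies across a ``nontrivial lightsheet'' therefore do not violate it, so your final contradiction does not close. In the paper's argument the pinched chain $0\geq\cdots\geq 0$ forces every step to saturate, and the conclusion is reached because saturation is only possible if the intermediate wedges are trivial deformations, i.e.\ $\emin\cap\smax=\emax$ and $(\emax'\cap\smin)'=\emin$, which immediately gives $\emax\subset\emin$. If you want to salvage your write-up, replace $w,u$ by the slice-projected wedges, carry out the entropy chase explicitly (it uses property III of $\emax$, property iii of $\emin$, Lemma~\ref{lem:ssa}, Conj.~\ref{conj:qfc}, and the chain rules in the pattern of the paper's displayed chains), and draw the conclusion from forced saturation rather than from a strictness of focusing you do not have.
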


\begin{proof}
    Since only one input wedge $a$ is involved, we suppress the argument of $\emax$ and $\emin$. Note that
    \begin{equation}
        ((\emax' \cap \smin)\Cup(\emax'\cap\emin))' = (\emax' \cap \smin)'\cap(\emax'\cap\emin)'~.
    \end{equation}
    By Discrete Max-Focusing (Conj.~\ref{conj:qfc}),
    \begin{align}
        \hmg[(\emax' \cap \smin)'\cap(\emax'\cap\emin)'|\emax] &\leq 0 \\
        \hmg[(\emin \cap \smax)'\cap(\emin\cap\emax')'|\emin'] &\leq 0~.
    \end{align}
    Then
    \begin{align*}
        0 &\geq\hmg[(\emax' \cap \smin)'\cap(\emax'\cap\emin)'|\emax] + \hmg[\emax|\emin \cap \smax] \\
        &\geq \hmg[(\emax' \cap \smin)'\cap(\emax'\cap\emin)'|\emin \cap \smax] \\
        &= -\hmingen[(\emin \cap \smax)'|((\emax' \cap \smin)'\cap(\emax'\cap\emin)')'] \\ 
        &\geq -\hmg[(\emin \cap \smax)'|(\emax' \cap \smin)\Cup(\emax'\cap\emin)]~.
    \end{align*}
   By  strong subadditivity of the generalized conditional max entropy (Lemma~\ref{lem:ssa}),
    \begin{align*}
        0&\geq -\hmg[(\emin \cap \smax)'\cap(\emin\cap\emax')'|\emax' \cap \smin] \\
        &\geq -\hmg[(\emin \cap \smax)'\cap(\emin\cap\emax')'|\emin'] -\hmg[\emin'|\emax' \cap \smin] \\
        &\geq 0~.
    \end{align*}
    This is a contradiction, unless $\emin \cap \smax = \emax$ and $(\emax' \cap \smin)'=\emin$. Hence $\emax\subset\emin$.
\end{proof}

\begin{thm}[Nesting of $\emin$]\label{thm:nesting}
For wedges $a$ and $b$,
\begin{equation}
    a\subset b \implies \emin(a)\subset \emin(b)~.
\end{equation}
Moreover, $\smin(a)$ can be chosen so that
\begin{equation}
    \smin(a)\supset\smin(b)~.
\end{equation}
\end{thm}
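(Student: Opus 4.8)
The plan rests on the observation that $\emin(a)$ is, by definition, contained in every element of $G(a)$ (Def.~\ref{def:emin}); so to prove $\emin(a)\subset\emin(b)$ it suffices to exhibit one wedge $g_0\in G(a)$ with $g_0\subset\emin(b)$. I claim $g_0\equiv\emin(b)\cap\tilde a'$ works. (The identical argument shows $g\cap\tilde a'\in G(a)$ for every $g\in G(b)$; I single out $g=\emin(b)$ because its property-iii slice $\smin(b)$ is what the ``moreover'' clause refers to.) The one piece of causal geometry needed at the outset is antitonicity of the fundamental complement: $a\subset b$ gives $b'\subset a'$, every timelike curve contained in $b'$ is contained in $a'$, so the set $C$ of Def.~\ref{def:cw} built from $b'$ is a subset of the one built from $a'$, whence $\tilde b\subset\tilde a$ and $\tilde a'\subset\tilde b'$.

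Properties i and ii of Def.~\ref{def:emin} for $g_0$ are then immediate. Property i: $g_0$ is a wedge (Rem.~\ref{wilem}) and $a\subset g_0\subset\tilde a'$ because $a\subset b\subset\emin(b)$ (property i for $\emin(b)\in G(b)$, valid by Theorem~\ref{lem:eminpropshare}) and $a\subset\tilde a'$ (as $\tilde a\subset a'$). Property ii: $g_0'=\emin(b)'\Cup\tilde a$ is antinormal, since $\emin(b)'$ is antinormal by property ii for $\emin(b)$, $\tilde a$ is antinormal by Lemma~\ref{lem:ta_anormal}, and a wedge union of antinormal wedges is antinormal (Theorem~46 of Ref.~\cite{Bousso:2024iry}). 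Since $g_0\subset\emin(b)$ trivially, once property iii is verified we obtain $\emin(a)\subset g_0\subset\emin(b)$.

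Property iii is the substance: I must produce a Cauchy slice $\Sigma'$ of $g_0'\cap\tilde a'$ with $\hmg(g_0'|h')\le 0$ for every wedge $h\ne g_0$, $g_0\subset h\subset\tilde a'$, $\eth h\subset\Sigma'$, $\eth h\setminus\eth g_0$ compact. Starting from the slice $\smin(b)$ of $\emin(b)'\cap\tilde b'$ given by property iii of $\emin(b)$, the plan is to build $\Sigma'$ by restricting $\smin(b)$ to $\tilde a'$ and re-extending it along the Cauchy horizons $H^\pm(g_0)$, exactly the gluing used in Eqs.~\eqref{eq:unionslice} and \eqref{eq:sminunion}. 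For a competitor $h$, set $h_b\equiv h\Cup\emin(b)$; one checks $h_b$ is an admissible competitor for $\emin(b)$ (a wedge with $\emin(b)\subset h_b\subset\tilde b'$ — the last step because $h,\emin(b)\subset\tilde b'$ and $\tilde b'$ is a wedge — with $\eth h_b$ on $\smin(b)$ modulo the gluing pieces and $\eth h_b\setminus\eth\emin(b)$ compact), so $\hmg(\emin(b)'|h_b')\le 0$. One then assembles $\hmg(g_0'|h')\le 0$ from this by reinstating the wedge union with $\tilde a$ via strong subadditivity (Lemma~\ref{lem:ssa}), paying for the new null pieces of $\Sigma'$ by Discrete Max-Focusing (Conj.~\ref{conj:qfc}) along the lightsheets off $\eth\tilde a^\pm$ and $\eth g_0$, and combining the contributions with the chain rule~\eqref{eq:chainMaxMaxMax} — the same logical skeleton as the property-III step of Theorem~\ref{thm:emaxaccessible} and the property-iii step of Theorem~\ref{lem:eminpropshare}. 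The ``moreover'' claim then follows by tracking that $\Sigma'$, and the property-iii slice of $\emin(a)'\supset g_0'$ extending it, can be chosen to contain $\smin(b)$.

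I expect property iii to be the main obstacle, and within it the delicate feature is that two things change simultaneously: the ambient spacetime shrinks, $\tilde b'\to\tilde a'$, while the conditioned wedge grows, $\emin(b)'\to\emin(b)'\Cup\tilde a=g_0'$. As a result a Cauchy slice of $\emin(b)'\cap\tilde b'$ does not simply restrict to one of $g_0'\cap\tilde a'$: one must genuinely re-extend it along Cauchy horizons and control the entropy across the added null pieces using focusing, and one must verify that $h_b$ really is a legal competitor for $\emin(b)$, disposing of the corner case $h_b=\emin(b)$ (i.e. $h\subset\emin(b)$) on the side. An alternative route via complementarity, Theorem~\ref{thm:maxmincomp}, rephrases the claim as $\emax(\tilde a)|_{a'}\supset\emax(\tilde b)|_{b'}$, which would follow from monotonicity of the restricted max-hologram under enlarging both its input ($\tilde b\subset\tilde a$) and its ambient spacetime ($b'\subset a'$); but the second monotonicity meets the same bookkeeping, so I would pursue the direct argument above.
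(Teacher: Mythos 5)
Your proposal takes essentially the same route as the paper: the paper's proof also considers $g=\emin(b)\cap\tilde a'$, verifies properties i and ii exactly as you do (via Theorem~\ref{lem:eminpropshare}, Lemma~\ref{lem:ta_anormal}, and antinormality of wedge unions), establishes property iii with the same glued Cauchy slice $\smin(b)\cup[H^+(\emin(b))\cap J^-(\Sigma_{\tilde a})]\cup[H^-(\emin(b))\cap J^+(\Sigma_{\tilde a})]\cup[\Sigma_{\tilde a}\cap\emin(b)]$ by applying property iii of $\emin(b)$ to a competitor built from $h$ and then invoking strong subadditivity and Discrete Max-Focusing in the order you outline, and it obtains the ``moreover'' clause by the same re-gluing of $\smin(a)$ along $H^\pm[\emin(b)]$. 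Your sketch is correct; the paper simply carries out the property-iii inequality chain explicitly.
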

\begin{proof}
     The proof differs through several details from the analogous proofs in Refs.~\cite{Bousso:2023sya, Bousso:2024iry} due to our modification of properties III and iii in the definition of $\emax$ and $\emin$.

Consider $g=\emin(b) \cap\tilde a'$. Note that $g\subset \emin(b)$.  We will show that $\emin(a) \subset g$ by proving $g\in G(a)$. The union of two antinormal wedges is antinormal by Theorem 46 of Ref.~\cite{Bousso:2024iry},
so $g'$ is antinormal and satisfies property ii. By definition, $g\subset \tilde a'$. By assumption, we have $a \subset b$ and Theorem \ref{lem:eminpropshare} implies that $a\subset \emin(b)$. Since $a\subset \tilde a'$, $g$ satisfies property i. Consider the following Cauchy slice of $g'$
\begin{equation}
  \Sigma' = \smin(b) \cup  [H^+(\emin(b))\cap J^-(\Sigma_{\tilde a})] 
  \cup [H^-(\emin(b))\cap J^+(\Sigma_{\tilde a})] \cup [\Sigma_{\tilde a} \cap \emin(b)]~
\end{equation}
where $\Sigma_{\tilde a}$ is a Cauchy slice of $\tilde a$.
Let $\tilde a'  \supset h\supset g$ be a wedge such that $\eth h \subset \Sigma'$.
By property iii of $\Sigma'_{min}$,
\begin{equation}
    \hmg(\emin(b)'| \Sigma_{min}(b)' \cap h') \leq 0 ~.
\end{equation}
Strong subadditivity of the generalized conditional max entropy (Lemma~\ref{lem:ssa})
implies
\begin{equation}
   \hmg(h'\Cup \Sigma_{min}(b)' | h')\leq 0
\end{equation}
By Discrete Max-Focusing, Conjecture~\ref{conj:qfc}, we have 
\begin{equation}
    \hmg(h'\Cup (\Sigma'\setminus\Sigma_{\tilde a })| h')\leq 0
\end{equation}
Since $h\subset \tilde a'$ we have 
\begin{equation}
   h'\Cup (\Sigma'\setminus \Sigma_{\tilde a}) = \Sigma'
\end{equation}
Thus we have 
\begin{equation}
    \hmg(\Sigma'| h')\leq 0
\end{equation}
Thus $g$ satisfies property iii with Cauchy slice $\Sigma'$.  

Suppose now that $\smin(a)\not\supset \smin(b)$. Then set $\tilde\Sigma = \smin(a)$ and redefine
\begin{equation}
    \smin(a)\equiv \smin(b) \cup \left( H^+[\emin(b)]\cap J^-(\tilde\Sigma)\right)
    \cup \left( H^-[\emin(b)]\cap J^+(\tilde\Sigma)\right)
    \cup [\tilde\Sigma\cap \emin(b)]~.
\end{equation}
This satisfies property iii for $\emin(a)$.
Thus we have $\emin(a) \subset g\subset \emin(b)$. 
% , by arguments that parallel those given in support of Eq.~\eqref{eq:sminunionworks}.
\end{proof}
\begin{cor}
    If $a\subset b$ and $\eth \emin(b)\setminus \emin(a)$ is compact, then 
    \begin{equation}
        \hmg[\emin(a)'|\emin(b)']\leq 0~.
    \end{equation}
\end{cor}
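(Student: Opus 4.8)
The plan is to bootstrap from the nesting theorem. Nesting gives $\emin(a)\subset\emin(b)$, hence $\emin(b)'\subset\emin(a)'$, so that $\hmg[\emin(a)'|\emin(b)']$ is a legitimate conditional entropy of a wedge given a sub-wedge; moreover $\smin(a)$ may be taken to contain $\smin(b)$. The idea is to route through the intermediate wedge built in the nesting proof: let $w:=\emin(b)\cap\tilde a'$, which that proof shows to lie in $G(a)$ with $\emin(a)\subset w\subset\emin(b)$; dualizing, $\emin(b)'\subset w'\subset\emin(a)'$, where $w'=\emin(b)'\Cup\tilde a$. Since $\tilde a\subset\emin(a)'$ (because $\emin(a)\subset\tilde a'$) and, by hypothesis, $\eth\emin(b)$ differs from $\eth\emin(a)$ only on a compact set, $w'$ agrees with $\emin(a)'$—and hence $\eth w$ agrees with $\eth\emin(a)$—away from a compact set, and likewise $w'$ agrees with $\emin(b)'$ away from a compact set.

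I would then split the target along $w'$ using the chain rule~\eqref{eq:chainMaxMaxMax},
\begin{equation*}
\hmg[\emin(a)'|\emin(b)']\ \le\ \hmg[\emin(a)'|w']\ +\ \hmg[w'|\emin(b)']~,
\end{equation*}
and bound the two pieces separately. The first term is $\le 0$ by Property iii of $\emin(a)$, witnessed by the Cauchy slice $\smin(a)$ furnished by the nesting proof and applied to the wedge $h=w$; the compactness of $\eth w\setminus\eth\emin(a)$ established above is exactly what licenses this choice of $h$.

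The second piece, $\hmg[w'|\emin(b)']=\hmg[\emin(b)'\Cup\tilde a\,|\,\emin(b)']\le 0$, is the crux. The deformation carrying $\emin(b)'$ to $\emin(b)'\Cup\tilde a$ is compactly supported (by the above) and outward; its newly created edge points lie either on $\eth\emin(b)$, where $\emin(b)'$ is antinormal by Corollary~\ref{thm:eminmarginal} (being on $\eth\emin(b)\setminus\eth b$), or on the causal horizon $\eth\tilde a$, where $\emin(b)'\Cup\tilde a\supset\tilde a$ is antinormal by Lemma~\ref{lem:ta_anormal}. I would realize this deformation as a composition—first along the lightsheet $L^\pm(\emin(b)')$ (which contains $H^\pm(\emin(b))$), then onto $\eth\tilde a$ along $L^\pm(\tilde a)$—controlling each stage by Discrete Max-Focusing (Conjecture~\ref{conj:qfc}) applied to $\emin(b)'$ and to $\tilde a$ respectively, and stitching the stages together with strong subadditivity (Lemma~\ref{lem:ssa}), exactly as in the proofs of Theorems~\ref{thm:emaxaccessible} and~\ref{thm:nesting}. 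Feeding the two bounds back into the chain rule closes the argument.

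I expect this last step to be the genuine obstacle: the newly created edge of $w'$ lives on the causal horizon $\eth\tilde a$ rather than on the Cauchy horizon of $\emin(b)$, so focusing must be carried along two distinct lightsheets and carefully glued. It is also precisely here that the compactness hypothesis is indispensable—without it the deformation need not be compactly supported and the statement genuinely fails, as one sees in vacuum AdS with $a=\CW(A)\subset\CW(B)=b$: there $\eth\emin(b)$ reaches conformal infinity, so the hypothesis is violated, and $\hmg[\emin(a)'|\emin(b)']$ reduces to $S(\rho_{\bar A})-S(\rho_{\bar B})>0$.
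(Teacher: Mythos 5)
Your argument is correct in outline but takes a genuinely different route from the paper. The paper simply imports the one-step proof of Corollary 28 of Ref.~\cite{Bousso:2023sya}: by Theorem~\ref{thm:nesting} one may choose $\smin(a)\supset\smin(b)$, so that $\emin(b)$ itself serves as the deformation $h$ in property iii of $\emin(a)$ (the compactness hypothesis is exactly what makes $h$ admissible), giving $\hmg[\emin(a)'|\emin(b)']\leq 0$ in a single application of Theorem~\ref{lem:eminpropshare}. You instead route through $w=\emin(b)\cap\tilde a'$ and split with the chain rule~\eqref{eq:chainMaxMaxMax}, which forces you to prove the extra inequality $\hmg[\emin(b)'\Cup\tilde a\,|\,\emin(b)']\leq 0$. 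That inequality is available in the paper's toolkit---it is the same mechanism as Eq.~\eqref{eq:y_cap_tb} in the proof of Theorem~\ref{thm:ssa}, using antinormality of $\emin(b)'$, Discrete Max-Focusing (Conj.~\ref{conj:qfc}), and the generalized second law along the causal horizon $H(\tilde a)$; note the second leg of the deformation runs along $H(\tilde a)$ and is controlled by the GSL, not along the lightsheets $L^\pm(\tilde a)$ as you describe. Your detour is arguably more careful than the bare citation, since property iii of Def.~\ref{def:emin} only admits $h\subset\tilde a'$, and $\emin(b)\subset\tilde a'$ is not guaranteed (one only has $\emin(b)\subset\tilde b'$ with $\tilde b\subset\tilde a$). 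Two caveats. First, applying property iii to $h=w$ requires $\eth w\setminus\eth\emin(a)$ compact, and the portion $\eth\tilde a\cap\emin(b)$ of $\eth w$ is not immediately controlled by the hypothesis on $\eth\emin(b)$; your claim that $w$ ``agrees with $\emin(a)$ away from a compact set'' needs an actual argument that $\emin(b)\setminus\cl\emin(a)$ has compact closure. Second, your closing example is wrong: in vacuum AdS with nested boundary regions and a pure global state, $S(\rho_{\bar A})-S(\rho_{\bar B})=S(\rho_A)-S(\rho_B)\leq 0$, so that configuration does not exhibit failure of the inequality; the compactness hypothesis is needed to license the use of property iii, not because the bound visibly fails there.
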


\begin{proof}
    The proof is identical to the proof of Corollary 28 in  Ref.~\cite{Bousso:2023sya}.
\end{proof}

\begin{thm}[No Cloning]\label{thm:nocloning}
   \begin{equation}
     a \subset \emin'(b)~\mbox{and}~b \subset \emax'(a)~\implies~
     \emax(a)\subset\emin'(b)~.
  \end{equation}
\end{thm}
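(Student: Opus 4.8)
The plan is to adapt the contradiction-by-entropy-chain argument of Theorem~\ref{thm:emaxemin} from the single pair $(\emax(a),\emin(a))$ to the cross pair $(\emax(a),\emin(b))$. Write $P\equiv\emax(a)$ and $Q\equiv\emin(b)$, let $\smax$ denote the Cauchy slice of $P\cap a'$ furnished by property III (Theorem~\ref{thm:emaxaccessible}), and let $\smin$ denote the Cauchy slice of $Q'\cap\tilde b'$ furnished by property iii (Theorem~\ref{lem:eminpropshare}). I would first record the consequences of the hypotheses: since $\emax(a)\supset a$ and $\emin(b)\supset b$, the assumptions $a\subset\emin'(b)$ and $b\subset\emax'(a)$ give $a\subset b'$ and $b\subset a'$, so $a$ and $b$ are spacelike; moreover $a\subset P\cap Q'$ and $b\subset Q\cap P'$. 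Now suppose, for contradiction, that $P\not\subset Q'$; since $P$ and $Q$ are wedges this says precisely that $P\cap Q'\subsetneq P$ and $Q\cap P'\subsetneq Q$, the two intersections being wedges by Remark~\ref{wilem}.

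Second, I would build the crossed wedges
\begin{equation}
    P'\cap\smin~,\qquad P'\cap Q~,\qquad Q\cap\smax~,
\end{equation}
together with the combinations $(P'\cap\smin)\Cup(P'\cap Q)$ and $(Q\cap\smax)\Cup(Q\cap P')$, mirroring the wedges in the proof of Theorem~\ref{thm:emaxemin}. Using that $P$ is a throat in $\tilde a'$ (Theorem~\ref{thm:emaxthroat}) and that $Q'$ is marginal away from $\eth b$ (Corollary~\ref{thm:eminmarginal} applied to $b$), and repositioning $\smax$ and $\smin$ along the Cauchy horizons of $P$ and $Q'$ (via strong subadditivity, Lemma~\ref{lem:ssa}) so that the edges of these combinations land on the appropriate lightsheets, Discrete Max-Focusing (Conjecture~\ref{conj:qfc}) yields two bounds of the schematic form $\hmg[\,\cdot\mid P\,]\le 0$ and $\hmg[\,\cdot\mid Q'\,]\le 0$. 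To these I would adjoin the bounds furnished by property III of $\smax$ and property iii of $\smin$, applied to intermediate wedges $h$ chosen to still contain $a$, respectively $b$ --- possible because $a\subset P\cap Q'$ and $b\subset Q\cap P'$. Then, exactly as in the proof of Theorem~\ref{thm:emaxemin}, the chain rules of Theorem~\ref{conj:chain}, the bound $\hmingen\le\hmg$ (cf.\ Definition~\ref{def:hmin_gen} and Theorem~\ref{conj:chain}), and strong subadditivity concatenate these inequalities into a chain $0\ge(\cdots)\ge 0$. Every inequality is then saturated, and saturation of the focusing steps forces $P\cap Q'=P$, i.e.\ $\emax(a)\subset\emin'(b)$, contradicting the assumption.

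The step I expect to be the main obstacle is the geometric one: arranging the crossed wedges and the intermediate wedges $h$ so that (i) each $h$ still contains the correct input region, as required by properties III and iii --- automatic in Theorem~\ref{thm:emaxemin}, where the single input lies inside both holograms, but here a genuine constraint --- and (ii) the new edge points of every wedge in the chain lie on the relevant lightsheet or Cauchy slice, so that Conjecture~\ref{conj:qfc}, Lemma~\ref{lem:ssa}, and Theorem~\ref{conj:chain} all apply. Verifying that the hypotheses (in particular the spacelike separation $a\subset b'$ that they imply) are exactly what is needed to glue $\smax$ and $\smin$ consistently across the four Cauchy horizons involved is where the content of the theorem lives; once these adjacency facts are in place, the entropy manipulation is the same formal argument as in the proof of Theorem~\ref{thm:emaxemin}.
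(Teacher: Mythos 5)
There is a genuine gap: your proposal is a strategy outline whose decisive steps are exactly the ones you defer. You propose to mimic the saturation/contradiction chain of Theorem~\ref{thm:emaxemin} for the cross pair $(\emax(a),\emin(b))$, but you never specify the crossed wedges precisely, never verify that the intermediate wedges $h$ meet the containment, edge-location ($\eth h\subset\Sigma\cup\eth a$, resp.\ $\eth h\subset\Sigma'$) and compactness requirements of properties III and iii, never check that the edges you deform actually lie on the lightsheets needed for Conjecture~\ref{conj:qfc}, and you assert without argument that saturation of the resulting chain forces $P\cap Q'=P$. (In Theorem~\ref{thm:emaxemin} saturation yields two specific set equalities, and it is not automatic that the analogous cross-pair chain closes to $0\ge\cdots\ge 0$ at all, since the two slices $\smax(a)$ and $\smin(b)$ are anchored to different inputs and different fundamental complements $\tilde a$, $\tilde b$.) You acknowledge this is ``where the content of the theorem lives,'' which is precisely the admission that the proof is not done.

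The paper's argument is both different and more direct: it defines the single competitor wedge $g=\emin(b)\cap\emax'(a)$ and shows $g\in G(b)$, whence minimality of $\emin(b)$ gives $\emin(b)\subset g\subset\emax'(a)$, i.e.\ $\emax(a)\subset\emin'(b)$ --- no contradiction or saturation analysis is needed. Property i uses the hypothesis $b\subset\emax'(a)$ together with $b\subset\emin(b)\subset\tilde b'$; property ii follows because $g'=\emin(b)'\Cup\emax(a)$ is a union of antinormal wedges (Theorems~\ref{thm:emaxaccessible} and \ref{lem:eminpropshare}); property iii is verified on the glued Cauchy slice built from $\smin(b)$ and $\smax(a)$ joined along $H^\pm[\emin(b)]$, using strong subadditivity (Lemma~\ref{lem:ssa}), Discrete Max-Focusing, and the chain rule Eq.~\eqref{eq:chainMaxMaxMax}. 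Note that even if your plan could be completed, you would still have to establish essentially these same glued-slice inequalities before running your chain, so the contradiction scaffolding adds difficulty (the throat/marginality inputs and a saturation argument) without removing any of the work. If you want to salvage your write-up, the shortest repair is to drop the contradiction and show directly that $\emin(b)\cap\emax'(a)$ is an element of $G(b)$.
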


\begin{proof}
Let 
\begin{equation}
    g=\emin(b)\cap \emax'(a)~.
\end{equation}
We will show that $g$ satisfies properties i-iii listed in Def.~\ref{def:emin}. This contradicts the definition of $\emin(b)$ unless $g=\emin(b)$, which is equivalent to the conclusion.

Property i: By assumption, $b\subset \emax'(a)$. By Theorem~\ref{lem:eminpropshare}, $b\subset \emin(b)\subset \tilde b'$. Hence $b\subset g\subset \tilde b'$.

Property ii: By Theorem~\ref{lem:eminpropshare}, $\emin(b)'$ is antinormal, and by Theorem~\ref{thm:emaxaccessible}, $\emax(a)$ is antinormal. By Theorem 46 of Ref.~\cite{Bousso:2024iry}, $g'$ is antinormal. 

Property iii: Let 
\begin{multline}
    \Sigma' = \smin(b) 
    \cup \left( H^+[\emin(b)] \cap J^-[\smax(a)]\right) \\
    \cup \left( H^-[\emin(b)] \cap J^+[\smax(a)]\right)
    \cup [\smax(a) \cap \emin(b)]~.
\end{multline}
This is a Cauchy slice of $g'$. Let $\tilde b \supset h\supset g$ with $\eth h\subset \Sigma'$. By property iii of $\smin(b)$ and property III of $\smax(a)$
\begin{align}
    \hmg[\smin(b)|h'\cap\smin(b)] & \leq 0~,\\
    \hmg[\smax(a)|h'\cap\smax] & \leq 0~.
\end{align}
Then by strong subadditivity of the generalized conditional max entropy (Lemma~\ref{lem:ssa})
,
\begin{align}
    \hmg[h'\Cup \smin(b)|h'] & \leq 0~,\\
    \hmg[\Sigma'|h' \Cup (\Sigma'\setminus \smax(a))] & \leq 0~.
\end{align}
By Discrete Max-Focusing (Conj.~\ref{conj:qfc}),
\begin{equation}
    \hmg[h' \Cup (\Sigma'\setminus \smax(a))|h'\Cup \smin] \leq 0~.
\end{equation}
Finally, by the chain rule Eq.~\eqref{eq:chainMaxMaxMax}, we see that
\begin{align*}
    \hmg[\Sigma'|h'] &\leq \hmg[h'\Cup \smin|h'] + \hmg[h' \Cup (\Sigma'\setminus \smax(a))|h'\Cup \smin] \\
    &\qquad + \hmg[\Sigma'|h' \Cup (\Sigma'\setminus \smax(a))] \\
    &\leq 0 ~.
\end{align*}
Hence $g$ satisfies property iii.
\end{proof}

\begin{thm}[Strong Subadditivity of the Generalized Max and Min Entropies of Entanglement Wedges]
\label{thm:ssa}
Suppressing $\Cup$ symbols where they are obvious, let $a$, $b$, and $c$ be mutually spacelike wedges, such that
\begin{align}\nonumber
    \emin(ab) &=\emax(ab)~,~\emin(bc)=\emax(bc)~,\\ \nonumber
    \emin(b) &=\emax(b)~,~\mbox{and}~\emin(abc)=\emax(abc)~.
\end{align}
Then (writing $e$ for $\emin=\emax$)
\begin{align}
    \hmg[e(bc)|e(b)] &\geq \hmg[e(abc)|e(ab)]~;\label{eq:maxssa}\\
    \hmingen[e(bc)|e(b)] &\geq \hmingen[e(abc)|e(ab)]\label{eq:minssa}~.
\end{align}
\end{thm}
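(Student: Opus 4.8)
The plan is to show that both inequalities collapse to a single invocation of strong subadditivity of $\hmg$ (Lemma~\ref{lem:ssa}), once one establishes the two ``reconnection'' identities
\begin{equation}\label{eq:reconnect}
    e(ab)\Cup e(bc)=e(abc)~,\qquad e(ab)\cap e(bc)=e(b)
\end{equation}
under the stated hypotheses (which let us freely identify $\emin$ and $\emax$ for each of $b,\,ab,\,bc,\,abc$). Granting \eqref{eq:reconnect}, Eq.~\eqref{eq:maxssa} follows by applying Lemma~\ref{lem:ssa} to the nested pair $e(b)\subset e(bc)$ with the spacelike wedge $c^\ast\equiv e(ab)\cap e(bc)'$ (which lies in $e(bc)'$ and is therefore spacelike to $e(bc)$): the lemma gives $\hmg[e(bc)\Cup c^\ast\,|\,e(b)\Cup c^\ast]\le\hmg[e(bc)|e(b)]$, and wedge algebra together with \eqref{eq:reconnect} identifies $e(bc)\Cup c^\ast=e(abc)$ and $e(b)\Cup c^\ast=e(ab)$. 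Likewise, by Definition~\ref{def:hmin_gen} Eq.~\eqref{eq:minssa} is equivalent to $\hmg[e(b)'|e(bc)']\le\hmg[e(ab)'|e(abc)']$, which follows from Lemma~\ref{lem:ssa} applied to the nested pair $e(abc)'\subset e(ab)'$ with the spacelike wedge $d^\ast\equiv e(bc)'\cap e(ab)$, using the complemented forms $e(ab)'\cap e(bc)'=e(abc)'$ and $e(ab)'\Cup e(bc)'=e(b)'$ of \eqref{eq:reconnect} to identify $e(ab)'\Cup d^\ast=e(b)'$ and $e(abc)'\Cup d^\ast=e(bc)'$.

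It remains to prove \eqref{eq:reconnect}. The inclusions ``$\subset$'' in the first identity and ``$\supset$'' in the second are immediate from nesting of $\emin$ (Theorem~\ref{thm:nesting}, used with $\emin=\emax$), together with $ab\Cup bc=abc$ and $b\subset ab\cap bc$. For the reverse inclusion $e(abc)\subset e(ab)\Cup e(bc)$, I would show $w_1\equiv e(ab)\Cup e(bc)$ is accessible from $abc$, i.e.\ $w_1\in F(abc)$, so $w_1\subset\emax(abc)=e(abc)$. Property~I holds since $w_1\supset ab\Cup bc=abc$, and since $e(ab)\subset\widetilde{(ab)}'\subset\widetilde{(abc)}'$ (because $ab\subset abc$ forces $\widetilde{(abc)}\subset\widetilde{(ab)}$) and likewise for $e(bc)$, whence $w_1\subset\widetilde{(abc)}'$; Property~II holds because $e(ab)$ and $e(bc)$ are antinormal away from $\eth(ab)$ and $\eth(bc)$ (Theorem~\ref{thm:emaxaccessible}), so $w_1$ is antinormal away from $\eth(abc)$ by Corollary~48 of Ref.~\cite{Bousso:2024iry}; and Property~III is verified by gluing $\smax(ab)$ and $\smax(bc)$ into a Cauchy slice as in Eq.~\eqref{eq:unionslice} and running the same chain of estimates (Lemma~\ref{lem:ssa}, Conjecture~\ref{conj:qfc}, chain rule~\eqref{eq:chainMaxMaxMax}) as in the proof of Theorem~\ref{thm:emaxaccessible}. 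For the reverse inclusion $e(ab)\cap e(bc)\subset e(b)$, I would analogously show $w_2\equiv e(ab)\cap e(bc)$ is accessible from $b$, so $w_2\subset\emax(b)=e(b)$: Property~II here uses that $e(ab)'$ and $e(bc)'$ are antinormal (property ii in Def.~\ref{def:emin}, satisfied by $\emin(ab)$ and $\emin(bc)$ via Theorem~\ref{lem:eminpropshare}) and the analogue of Theorem~46 of Ref.~\cite{Bousso:2024iry} for intersections, and Property~III uses a Cauchy slice glued from $\smin(ab)$ and $\smin(bc)$ along the pattern of Eq.~\eqref{eq:sminunion}.

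The main obstacle is Property~I of the second accessibility claim: one must show $e(ab)\cap e(bc)\subset\tilde b'$, even though $e(ab)$ and $e(bc)$ individually lie only in $\widetilde{(ab)}'$ and $\widetilde{(bc)}'$, which are strictly \emph{larger} than $\tilde b'$ (since $b\subset ab$ forces $\widetilde{(ab)}\subset\tilde b$). Equivalently, one needs $\tilde b\subset e(ab)'\Cup e(bc)'$. This is precisely where the hypothesis $\emin=\emax$ for all four regions and the complementarity theorem~\ref{thm:maxmincomp} must enter: from $\emin(ab)\subset\widetilde{(ab)}'$ one gets $\widetilde{(ab)}\subset e(ab)'$, and likewise $\widetilde{(bc)}\subset e(bc)'$, so $\widetilde{(ab)}\Cup\widetilde{(bc)}\subset e(ab)'\Cup e(bc)'$; the task is to upgrade this to all of $\tilde b$, exploiting that $e(ab)'=\emax(\widetilde{(ab)})|_{(ab)'}$ reaches well beyond $\widetilde{(ab)}$ and, by Corollary~\ref{thm:eminmarginal}, is a throat in $(ab)'$. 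If this step resists, a fallback is to carry out the whole argument with $w_2\cap\tilde b'$ in place of $w_2$, obtaining $e(ab)\cap e(bc)\cap\tilde b'=e(b)$ unconditionally, and then argue separately that the truncation to $\tilde b'$ does not affect the entropies, using that $\eth e(b)$ is already marginal or extremal by Corollary~\ref{thm:eminmarginal}. A more self-contained alternative would bypass \eqref{eq:reconnect} entirely and argue by contradiction in the style of the proof of Theorem~\ref{thm:emaxemin}: assuming the desired inequality fails, build intermediate wedges from $e(ab),e(bc),e(b),e(abc)$ and their Cauchy slices $\smax,\smin$, and derive a contradiction with maximality of $\emax$ (or minimality of $\emin$) using Discrete Max-Focusing, Lemma~\ref{lem:ssa}, and the chain rules of Theorem~\ref{conj:chain}. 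I expect everything other than the $\tilde b'$-containment to be routine given the machinery already assembled.
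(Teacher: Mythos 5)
There is a genuine gap, and it sits at the very foundation of your plan: the ``reconnection'' identities $e(ab)\Cup e(bc)=e(abc)$ and $e(ab)\cap e(bc)=e(b)$ are false in general, even under the hypotheses $\emin=\emax$ for all four regions. Take three adjacent boundary intervals in vacuum AdS$_3$ and use Eq.~\eqref{eq:bb}: every hologram involved is an ordinary RT wedge, so $\emin=\emax$ everywhere, yet the RT geodesics for $ab$ and for $bc$ cross in the bulk, so $e(ab)\Cup e(bc)$ (bounded by the outer portions of the crossing geodesics) is strictly \emph{smaller} than $e(abc)$, while $e(ab)\cap e(bc)$ (bounded by the inner portions) strictly \emph{contains} $e(b)$. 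This is exactly why every holographic SSA proof (Headrick--Takayanagi, Wall's maximin, and the present paper) proceeds by cut-and-reglue into \emph{candidate} wedges plus inclusions and extremality/minimality inequalities, never by equalities. Your accessibility argument cannot repair this: showing $w_1=e(ab)\Cup e(bc)\in F(abc)$ only yields $w_1\subset\emax(abc)$, which is the easy inclusion you already obtained from nesting, not the reverse inclusion you need (that would require $w_1\in G(abc)$, which fails in the example above); and $w_2=e(ab)\cap e(bc)$ cannot be accessible from $b$ precisely because $w_2\supsetneq e(b)=\emax(b)$ there --- concretely, property III fails when $h$ is taken near $e(b)$, since $\hmg(w_2|h)>0$. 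So the difficulty you flag (the $\tilde b'$-containment in Property I) is not the real obstruction; the identities themselves are the obstruction, and the claimed wedge-algebra steps $e(b)\Cup c^\ast=e(ab)$, $e(bc)\Cup c^\ast=e(abc)$ would also not follow even if they held, since $\Cup$ and $\cap$ do not distribute in the required way.

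The paper's proof takes the route your reconnection step tries to shortcut. It constructs an auxiliary wedge $x$ from a Cauchy slice of $e(ab)'$, Eq.~\eqref{eq:xdef}, arranged so that $\eth x$ is nowhere timelike-related to $\eth e(bc)$; nesting (Theorem~\ref{thm:nesting}) is then used only for the \emph{inclusions} $x\Cup e(bc)\subset e(abc)$ and $x\cap e(bc)\supset e(b)$. The gap between the recombined wedges and the true holograms is closed not by set identities but by entropy inequalities: Discrete Max-Focusing (Conj.~\ref{conj:qfc}), the generalized second law along $H(\tilde b)$, properties III and iii of $e(abc)$ and $e(b)$, strong subadditivity of $\hmg$ (Lemma~\ref{lem:ssa}), and the chain rules (Theorem~\ref{conj:chain}), yielding the intermediate bounds $\hmingen[x\cap e(bc)|e(b)]\geq 0$, $\hmg[e(abc)|x\Cup e(bc)]\leq 0$, and $\hmingen[e(ab)|x]\geq 0$, which are then chained into Eqs.~\eqref{eq:maxssa} and \eqref{eq:minssa}. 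Your fallback suggestion (a contradiction argument in the style of Theorem~\ref{thm:emaxemin}) is closer in spirit to this, but as written it is only a sketch; to complete the proof you would need to build something like the wedge $x$ and establish the inequality chain above rather than the false equalities.
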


\begin{proof}
Following~\cite{Bousso:2023sya}, we define a wedge $x$ by the Cauchy slice of its complement:
\begin{equation}\label{eq:xdef}
    \Sigma'(x) = \Sigma'[e(ab)] \cup
    \left( H^+[e(ab)] \cap J^-[\eth e(bc)]\right) \cup
    \left( H^-[e(ab)] \cap J^+[\eth e(bc)]\right)~.
\end{equation}
Note that $\eth x$ is nowhere to the past or future of $\eth e(bc)$. Therefore, there exists a single Cauchy slice that contains the edges of $x$, $e(bc)$, $x\cap e(bc)$, and $x\Cup e(bc)$. We note that $x\cap e(bc)=e(ab)\cap e(bc)$ and that $x\Cup e(bc) \subset e(ab)\Cup e(bc)$. By Theorem~\ref{thm:nesting},
\begin{equation}
    x\Cup e(bc) \subset e(abc)~~\mbox{and}~~x\cap e(bc) \supset e(b)~.
\end{equation}

By Theorem 46 of Ref.~\cite{Bousso:2024iry} and Lemma~\ref{lem:ta_anormal}, $x'\Cup e(bc)'\Cup \tilde b$ is antinormal. By Discrete Max-Focusing,
property iii of $e(b)$, and the chain rule Eq.~\eqref{eq:chainMinMinMin}, respectively,
\begin{align}
    &\hmingen[x\cap e(bc)\cap \tilde b'|e(b) \Cup (\Sigma'(b)\cap x\cap e(bc)\cap \tilde b')]\geq0 \\
    &\hmingen[e(b) \Cup (\Sigma'(b)\cap x\cap e(bc)\cap \tilde b')|e(b)]\geq 0 \\
    &\hmingen[x\cap e(bc)\cap \tilde b'|e(b)]\geq 0~, ~.
\end{align}
We obtain  $x\cap e(bc) \cap \tilde b'$  from  $x\cap e(bc)$ by \begin{itemize}
    \item Deforming $x\cap e(bc)$ along its future lightsheet wherever the edge of $x\cap e(bc) \cap \tilde b'$  lies on the $L^+(a)$. 
\item Similarly deforming $x\cap e(bc)$ along its past lightsheet.
\item On all other points on edge of $x\cap e(bc)$ that is not already on the edge of $x\cap e(bc) \cap \tilde b'$ we first deform  $x\cap e(bc)$ along $L^+(a)$ until we land on $H^+(\tilde b)$. We then deform along $H^+(\tilde b)$ towards the edge of $x\cap e(bc) \cap \tilde b'$. 
\end{itemize}
Since $x'\Cup e(bc)'$ is antinormal and $H(\tilde b)$ is a causal horizon, the Discrete Max-Focusing Conjecture~\ref{conj:qfc} and the generalized second law imply that 
\begin{equation}\label{eq:y_cap_tb}
    H_{min,gen}[x\cap e(bc) |x\cap e(bc)\cap \tilde b' ] \geq 0 ~. 
\end{equation}
By Eq.~\ref{eq:y_cap_tb} and the chain rule Eq.~\eqref{eq:chainMinMinMin} 
\begin{equation}\label{eq:SSAproof min chain}
    \hmingen[x\cap e(bc)|e(b)]\geq 0~.
\end{equation}

Although $x$ need not be antinormal, Eq.~\eqref{eq:xdef} ensures that $[\eth x\setminus \eth e(ab)]\cap\eth(x\Cup e(bc))=\varnothing$. Arguments analogous to the proof of Corollary 48 of Ref.~\cite{Bousso:2024iry} then imply that $x\Cup e(bc)$ is antinormal, except where its edge coincides with $\eth[e(abc)]$ and hence with $\Sigma[e(abc)]$. By Discrete Max-Focusing, property III of $e(abc)$, and the chain rule Eq.~\eqref{eq:chainMaxMaxMax},
\begin{align}
    &\hmg[\Sigma(abc)\setminus (x' \cap e(bc)')|x\Cup e(bc)]\leq0 \\
    &\hmg[e(abc)|\Sigma(abc)\setminus (x' \cap e(bc)')] \leq 0 \\
    &\hmg[e(abc)|x\Cup e(bc)]\leq 0~. \label{eq:SSAproof max chain}
\end{align}
Normality of $e(ab)$ and the QFC imply $\hmg[x'|e(ab)']\leq 0$, i.e.,
\begin{equation}\label{eq:SSAproof qfc}
    \hmingen[e(ab)|x]\geq 0~.
\end{equation}

We now prove Eq.~\eqref{eq:maxssa} as follows:
\begin{align}
    \hmg[e(bc)|e(b)] &\geq \hmg[e(bc)|x\cap e(bc)] + \hmingen[x\cap e(bc)|e(b)] \\
    & \geq \hmg[e(bc)|x\cap e(bc)] \\
    &\geq \hmg[x\Cup e(bc)|x] \\
    &\geq \hmg[e(abc)|x\Cup e(bc)] + \hmg[x\Cup e(bc)|x] \\
    &\geq \hmg[e(abc)|x] \\
    &\geq \hmg[e(abc)|e(ab)] + \hmingen[e(ab)|x] \\
    &\geq \hmg[e(abc)|e(ab)]~,
\end{align}
where we applied the chain rule Eq.~\eqref{eq:chainMaxMaxMin} in the first inequality; Eq.~\eqref{eq:SSAproof min chain} in the second; strong subadditivity of the generalized max entropy in the third; Eq.~\eqref{eq:SSAproof max chain} in the fourth; the chain rule Eq.~\eqref{eq:chainMaxMaxMax} in the fifth; the chain rule Eq.~\eqref{eq:chainMaxMaxMin} in the sixth, and Eq.~\eqref{eq:SSAproof qfc} in the final inequality.

The proof of Eq.~\eqref{eq:minssa} proceeds similarly:
\begin{align}
    \hmingen[e(abc)|e(ab)] &\leq \hmingen[e(abc)|e(ab)] + \hmingen[e(ab)|x]\\
    &\leq \hmingen[e(abc)|x]\\
    &\leq \hmg[e(abc)|x\Cup e(bc)] + \hmingen[x\Cup e(bc)|x] \\
    &\leq \hmingen[x\Cup e(bc)|x] \\
    &\leq \hmingen[e(bc)|x\cap e(bc)] \\
    &\leq \hmingen[e(bc)|x\cap e(bc)] + \hmingen[x\cap e(bc)|e(b)] \\
    &\leq \hmingen[e(bc)|e(b)]~,
\end{align}
where we applied Eq.~\eqref{eq:SSAproof qfc} in the first inequality; the chain rule Eq.~\eqref{eq:chainMinMinMin} in the second; the chain rule Eq.~\eqref{eq:chainMinMaxMin} in the third; Eq.~\eqref{eq:SSAproof max chain} in the fourth; strong subadditivity of the generalized min entropy,  in the fifth; Eq.~\eqref{eq:SSAproof min chain} in the sixth; and the chain rule Eq.~\eqref{eq:chainMinMinMin} in the final inequality.
\end{proof}
\begin{cor}
    One can eliminate the assumption that $\emin(b)=\emax(b)$, so long as $e(b)$ is replaced by $\emin(b)$ in the conclusions of the theorem. 
\end{cor}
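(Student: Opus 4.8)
The plan is to observe that the hypothesis $\emin(b)=\emax(b)$ is never actually used in the proof of Theorem~\ref{thm:ssa}; it is imposed only so that all four wedges appearing in the conclusion can be denoted uniformly by $e(\cdot)$. Every time ``$e(b)$'' is invoked in that proof, what is really being used is a property that $\emin(b)$ enjoys unconditionally: membership in $G(b)$ (Theorem~\ref{lem:eminpropshare}), which supplies property iii of Def.~\ref{def:emin} together with a Cauchy slice $\smin(b)$ of $\emin(b)'$; and nesting of $\emin$ (Theorem~\ref{thm:nesting}), which gives $\emin(b)\subset\emin(ab)$ and $\emin(b)\subset\emin(bc)$. (Recall also that $\emin(b)$ is a wedge, being an intersection of wedges; see Remark~\ref{wilem}.) I would therefore run through the proof of Theorem~\ref{thm:ssa} with the purely cosmetic substitution $e(b)\to\emin(b)$ and check that every step survives, keeping the remaining hypotheses $\emin(ab)=\emax(ab)$, $\emin(bc)=\emax(bc)$, $\emin(abc)=\emax(abc)$ in force.

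There are three points of contact. First, the geometric inclusion ``$x\cap e(bc)\supset e(b)$'' (with $x$ the auxiliary wedge of Eq.~\eqref{eq:xdef}) becomes $x\cap e(bc)=e(ab)\cap e(bc)\supset\emin(b)$, which follows from $\emin(b)\subset\emin(ab)\cap\emin(bc)$ (nesting) together with the still-assumed equalities $\emin(ab)=\emax(ab)$ and $\emin(bc)=\emax(bc)$; in the same way $\emin(b)\subset e(ab)$ and $\emin(b)\subset e(bc)$, which are the inclusions the chain rules need at their innermost slot. Second, the chain of min-entropy inequalities leading to Eq.~\eqref{eq:SSAproof min chain}, $\hmingen[x\cap e(bc)|e(b)]\geq0$, uses as its only $b$-specific inputs property iii of $\emin(b)$ evaluated on $\smin(b)$, antinormality of $\tilde b$ (Lemma~\ref{lem:ta_anormal}), the generalized second law on the causal horizon $H(\tilde b)$, and Discrete Max-Focusing — none of which mention $\emax(b)$ — so it goes through with conclusion $\hmingen[x\cap e(bc)|\emin(b)]\geq0$. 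Third, the two final strings of inequalities (for Eq.~\eqref{eq:maxssa} and Eq.~\eqref{eq:minssa}) use $e(b)$ only as the innermost wedge in the chain rules \eqref{eq:chainMaxMaxMin}, \eqref{eq:chainMinMaxMin}, and \eqref{eq:chainMinMinMin}, applied to the sandwich $e(bc)\supset x\cap e(bc)\supset\emin(b)$ re-established above; hence each inequality in those strings survives the substitution, and one arrives at $\hmg[e(bc)|\emin(b)]\geq\hmg[e(abc)|e(ab)]$ and $\hmingen[e(bc)|\emin(b)]\geq\hmingen[e(abc)|e(ab)]$, which is precisely the claim.

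The main obstacle is not conceptual but bookkeeping: one must verify that no occurrence of ``$e(b)$'' in the proof of Theorem~\ref{thm:ssa} secretly relied on the \emph{upper} bound $\emin(b)=\emax(b)$ — for instance through an inclusion of the form $\emax(b)\supset\cdots$, through antinormality of $\emax(b)$, or through identifying $e(b)$ with an element of the max family $F(b)$. A careful pass over the proof confirms that it does not: $\emin(b)$ enters only through the lower-bound inclusions coming from nesting and through property iii, both of which are intrinsic to $\emin$. Accordingly I would present the corollary as follows: the proof of Theorem~\ref{thm:ssa} applies verbatim after replacing $e(b)$ by $\emin(b)$ throughout, the hypothesis $\emin(b)=\emax(b)$ being needed only to rewrite the conclusion in terms of $e(b)$; the three checks above make this explicit.
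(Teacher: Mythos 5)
Your proposal is correct and follows exactly the route the paper intends (the corollary is stated without an explicit proof precisely because the theorem's argument uses $e(b)$ only through nesting, $\emin(b)\subset e(ab)\cap e(bc)=x\cap e(bc)$, and through property iii of $\emin(b)$ with its Cauchy slice $\smin(b)$, both of which hold without assuming $\emin(b)=\emax(b)$). Your bookkeeping check that no step invokes $\emax(b)$, its antinormality, or membership in $F(b)$ is the whole content of the corollary, so nothing further is needed.
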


\section{Large Fundamental Complement}
\label{app:largecomp}

In this Appendix we present a possible alternative definition of the fundamental complement. We will denote this alternative by $\hat a$, and we will it the \emph{large fundamental complement} because $\hat a \supset \tilde a$ (see Lemma~\ref{lem:ta_subset_ha}). The property $\hat a\subset a'$ will still hold.

With $\hat a$ replacing $\tilde a$ in Def.~\ref{def:accessible} of accessibility, and thus in the definitions of $\emin$ and $\emax$, entanglement wedges will generically become ``smaller.'' But all theorems, including complementarity, remain valid, as can be verified by substituting $\hat a$ for $\tilde a$ in their proofs.

However, the large fundamental complement appears less motivated to us on physical grounds, because $\hat a$ need not contain timelike curves that are infinite in both the future and past direction. It is this property --- that $\tilde a$ is a causal patch of a subset of conformal infinity --- that makes $\tilde a$ quite obviously uncontrollable by $a$ and justifies its exclusion from $M$ when one constructs $\emax(a)$ and $\emin(a)$.

\subsection{Definition} \label{sec:largedef}

\begin{defn} (Conformal Shadow) \label{def:conf_shadow}
The conformal shadow $\tilde \scri$ of a wedge $a$ is the subset of $\scri$ that is causally inaccessible from $a$: 
\begin{equation}
    \tilde \scri(a) \equiv  \scri \setminus [\cl I(a)]_{\bar M}~ .
\end{equation}
The subscript indicates that $\cl I(a)$ should be computed in $\bar M$. Equivalently,
\begin{equation}\label{eq:tsae}
    \tilde \scri(a) = \setint_{\scri} (\scri \cap [\cl (a')]_{\bar M}) ~.
\end{equation}
\end{defn}

\begin{defn} (Large Fundamental Complement) \label{def:largefundcomp}
The large fundamental complement $\hat a$ of a wedge $a$ is the double complement of the conformal shadow of $a$. More precisely,\footnote{If an external nongravitational quantum system $R$ is invoked, for example in order to purify the overall quantum state, then $R$ should be included with $\hat a$ if and only if $R$ is not included with $a$.}
\begin{equation}\label{eq:tildea}
    \hat a \equiv  (\tilde \scri(a)'_{\bar M}\cap M)'~.
\end{equation}
\end{defn}

\subsection{Properties} \label{sec:largeprop}

\begin{lem}\label{lem:pq}
    Let $s=\{p,q\}$, where $q\in I^-(p)$. Then $s'$ is a wedge. 
\end{lem}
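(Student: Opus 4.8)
The plan is to show directly that $s' = M \setminus \cl I(\{p,q\})$ is a wedge by verifying $s'' = s'$, i.e.\ $(s')'' = s'$. Since $s'$ is a complement wedge, Remark~\ref{wilem} already gives $s''' = s'$, so in fact there is nothing to prove about $s'$ being a wedge in the naive sense --- the real content of the lemma must be that the spacelike complement of the \emph{two-point set} $\{p,q\}$ is ``nondegenerate'' in the sense that it genuinely equals its double complement when we also view $\{p,q\}$ (not just $s'$) through the $''$ operation; the point is to establish $\{p,q\}' = (\{p,q\}'')'$, equivalently that $I(\{p,q\}) = I(\{p,q\}'')$. So first I would restate the goal as: show $I(\{p,q\}) = I(\{p,q\}^{\prime\prime})$, which is what makes $s'$ a bona fide wedge with a well-defined edge.

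The key geometric input is that $q \in I^-(p)$, so $p \in I^+(q)$ and the chronological diamond $I^+(q) \cap I^-(p)$ is a nonempty open set. First I would compute $I(\{p,q\}) = I^+(p) \cup I^-(p) \cup I^+(q) \cup I^-(q)$. Because $q \in I^-(p)$, we have $I^-(q) \subset I^-(p)$ and $I^+(p) \subset I^+(q)$, so this simplifies to $I(\{p,q\}) = I^+(q) \cup I^-(p)$. Then $s' = M \setminus \cl(I^+(q) \cup I^-(p))$. Next I would identify $s''$: a point $r \in s''$ iff $r \notin \cl I(s')$, i.e.\ iff $r$ is spacelike or null related to every point of $s'$. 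The candidate claim is that $s'' = \{p,q\}''$ is the ``causal diamond-like'' wedge whose edge is the common future/past boundary structure of $p$ and $q$; concretely I expect $s'' = J^+(q) \cap J^-(p)$ up to closure subtleties, or its double complement. I would verify $\{p,q\} \subset s''$ (immediate, since $p,q$ are each spacelike-or-null to all of $s'$ as $s'$ is defined by removing $\cl I$), and then verify that applying the complement once more recovers $s'$ by the standard fact $A''' = A'$ combined with showing $s''$ has no ``extra'' points whose removal-of-$\cl I$ would enlarge the complement --- this uses global hyperbolicity of $M$ and the openness of the chronological diamond to rule out pathologies.

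The main obstacle I anticipate is the closure/boundary bookkeeping: $s'$ is defined with $\cl I$, and one must be careful that $\cl I^+(q) \cup \cl I^-(p)$ behaves well, in particular that the relation ``spacelike or null to all of $s'$'' is genuinely symmetric and that $s'$ is an \emph{open} set (it is, being the complement of a closed set) large enough that its causal hull $\cl I(s')$ fills out everything except a genuine wedge. The nontriviality of the hypothesis $q \in I^-(p)$ is exactly what guarantees $s'$ is nonempty and ``fat'' enough in the transverse directions for this to work --- if $p$ and $q$ were timelike-infinitely separated or if the diamond degenerated, the argument for $s'' = \{p,q\}''$ could fail. I would handle this by invoking global hyperbolicity (so $J^\pm$ are closed and $J^+(q)\cap J^-(p)$ is compact) to control all closures, and by using the push-up property ($I^+ \circ J^+ = I^+$) to collapse the various causal and chronological sets into the clean form above. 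Once $I(\{p,q\}) = I(\{p,q\}'')$ is established, $s' = \{p,q\}'$ being a wedge follows from Definition~\ref{def:covwedge} together with Remark~\ref{wilem}.
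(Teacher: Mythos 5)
There is a genuine gap, and it sits exactly where the lemma's content lies. First, your opening appeal to Remark~\ref{wilem} is invalid: that remark proves $a'''=a'$ only for a \emph{wedge} $a$ (it uses $a''=a$), and $s=\{p,q\}$ is not a wedge. For a general set the identity $s'''=s'$ can fail outright — e.g.\ for a single point $s=\{p\}$ in Minkowski space one finds $\cl I(s')=M$, so $s''=\varnothing$ and $s'''=M\neq s'$. So "the standard fact $A'''=A'$" that you later invoke is precisely the statement to be proved, and using it is circular. Second, your "immediate" step $\{p,q\}\subset s''$ is false: it is true that $p,q\notin I(s')$, but generically $p\in\cl I(s')$ (points of $I(s')$ accumulate at $p$ from the spacelike directions just outside its light cone), so $p\notin s''$. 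In fact $s''$ is the open diamond $I^+(q)\cap I^-(p)$, which contains neither $p$ nor $q$. Third, and most importantly, the only nontrivial inclusion is never argued. The inclusion $s'\subset s'''$ is automatic (if $x\in s''$ and $y\in s'$ then $y\notin I(x)$, and $s'$ is open), so the risk is not that $s''$ has "extra" points — extra points only shrink $s'''$ — but that $s''$ is \emph{too small}, as in the single-point example. What must be shown is $\cl I(s'')\supset\cl I(\{p,q\})$, e.g.\ by proving (a) the open diamond $I^+(q)\cap I^-(p)$ is disjoint from $\cl I(s')$ (if a diamond point were a limit of $I(r_n)$ with $r_n\in s'$, then $r_n\in I^+(q)\cup I^-(p)$, contradicting $r_n\in s'$), and (b) $I\bigl(I^+(q)\cap I^-(p)\bigr)=I^+(q)\cup I^-(p)$, which uses the hypothesis $q\in I^-(p)$ to make the diamond nonempty and accumulate at both $p$ and $q$. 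Your proposal gestures at "ruling out pathologies" with global hyperbolicity and push-up but supplies neither (a) nor (b), so the proof is not complete as written.

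For comparison, the paper takes a different and more structural route: it constructs the codimension-2 surface $\gamma=\partial I^-(p)\cap\partial I^+(q)$, builds from $\partial I^-(p)$ and a Cauchy slice $\Sigma$ in the common past a new Cauchy slice split by $\gamma$ into $\Sigma_{\rm in}=\partial I^-(p)\cap I^+(q)$ and $\Sigma_{\rm out}$, and then identifies $s'=M\setminus\cl I^-(p)\setminus\cl I^+(q)=D[\Sigma_{\rm out}]$, invoking the fact that the domain of dependence of a partial Cauchy slice is a wedge. Your direct double-complement strategy can be made to work along the lines sketched above, but as it stands the two concrete claims you make are respectively circular and false, and the decisive inclusion is missing.
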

\begin{proof}
    Since $p$ is in the chronological future of $q$, the null hypersurfaces $\partial I^-(p)$ and $\partial I^+(q)$ intersect on a codimension-2 spacelike surface $\gamma$. Let $\Sigma$ be a Cauchy slice of $M$ to the past of both $p$ and $q$. Then $\gamma$ splits the Cauchy slice $\Sigma' = [\partial I^-(p)\cap J^+(\Sigma)] \cup [\Sigma\setminus I^-(p)]$ into the two partial Cauchy slices $\Sigma_{\rm in}=\partial I^-(p)\cap I^+(q)$ and $\Sigma_{\rm out}=\Sigma'\setminus \Sigma_{\rm in}\setminus \gamma$. The domain of dependence of any partial Cauchy slice is a wedge~\cite{Bousso:2023sya}. Therefore $s'=M \setminus \cl I^-(p) \setminus \cl I^+(q) = M\setminus \cl I(\Sigma_{\rm in}) =  D[\Sigma_{\rm out}]$ is a wedge. (See Sec.~III in Ref.~\cite{Bousso:2015qqa} for a detailed discussion of the final equality.)
\end{proof}

\begin{cor}
    Let $q\in I^-(p)$ and let $\eta$ be a causal curve from $p$ to $q$. Then $\eta'$ is a wedge.
\end{cor}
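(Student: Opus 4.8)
**Proof proposal for the Corollary (if $q\in I^-(p)$ and $\eta$ is a causal curve from $p$ to $q$, then $\eta'$ is a wedge).**

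The plan is to reduce the statement to Lemma~\ref{lem:pq} by showing that the spacelike complement of a causal curve $\eta$ joining $q$ to the chronologically later point $p$ coincides with the spacelike complement of the two-point set $\{p,q\}$. Concretely, I would prove $I(\eta)=I(\{p,q\})$, from which $\eta'\equiv M\setminus\cl I(\eta) = M\setminus \cl I(\{p,q\}) = \{p,q\}'$, and then invoke Lemma~\ref{lem:pq} directly. The inclusion $I(\{p,q\})\subset I(\eta)$ is immediate since $p,q\in\eta$. For the reverse inclusion, take any point $r\in\eta$; since $\eta$ is a causal curve from $p$ to $q$ and $q\in I^-(p)$, the point $r$ lies causally between them, so $q\in J^-(r)$ and $r\in J^-(p)$. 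Composing a causal curve with a timelike one yields a timelike curve (here using $q\in I^-(p)$ to supply the needed timelike segment): one gets $r\in I^+(q)$ and $r\in I^-(p)$, hence $r\in I(\{p,q\})$. Taking chronological futures and pasts of both sides and using that $I^\pm$ are monotonic and that $I^\pm\circ I^\pm = I^\pm$ (and $I^\pm\circ J^\pm\subset I^\pm$) then gives $I^+(\eta)\subset I^+(p)\cup I^+(q)$ and likewise for the past, so $I(\eta)\subset I(\{p,q\})$.

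The one subtlety to handle carefully is the case $r=p$ or $r=q$ themselves, and more generally points of $\eta$ that are not strictly between $p$ and $q$ in the chronological order — but since $\eta$ runs from $p$ to $q$ and these are its endpoints, every $r\in\eta$ satisfies $p\in J^+(r)\supset\{r\}$ and $q\in J^-(r)$, and combined with $q\ll p$ this forces $q\ll r\ll p$ whenever $r\neq p,q$, while for $r\in\{p,q\}$ the containment $r\in I(\{p,q\})$ is trivial once we note $q\in I^-(p)$. So $I(\eta)=I(\{p,q\})$ in all cases, and the Corollary follows.

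I expect the main (and only real) obstacle to be the bookkeeping around the endpoints and the composition-of-causal-curves argument: one must be slightly careful that a causal curve concatenated with a timelike curve is deformable to a timelike curve in a globally hyperbolic spacetime, which is standard (see Wald~\cite{Wald:1984rg}), and that no point of $\eta$ escapes the region $I^+(q)\cap I^-(p)$ except the endpoints. Once $I(\eta)=I(\{p,q\})$ is established, nothing further is needed: $\eta' = \{p,q\}'$ as sets, and Lemma~\ref{lem:pq} already shows the right-hand side is a wedge. The proof is therefore short; I would write it as the two-line reduction plus the verification of the set equality.
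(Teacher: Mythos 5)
Your reduction is essentially the paper's proof: the paper disposes of the corollary in one line by noting $I^-(\eta)\cap I^+(\eta)=I^-(p)\cap I^+(q)$ and citing Lemma~\ref{lem:pq}, and your set identity $I(\eta)=I(\{p,q\})$, hence $\eta'=\{p,q\}'$, is the same move. The identity is correct, and the mechanism that actually proves it is the one you mention only in passing: for any $r\in\eta$ one has $q\in J^-(r)$ and $r\in J^-(p)$, so push-up ($I^\pm\circ J^\pm\subset I^\pm$) gives $I^+(r)\subset I^+(q)$ and $I^-(r)\subset I^-(p)$; together with $p,q\in\eta$ and $q\ll p$ this yields $I(\eta)=I^+(q)\cup I^-(p)=I(\{p,q\})$, and Lemma~\ref{lem:pq} finishes the job.

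One intermediate assertion, however, is false and should be deleted: that every $r\in\eta$ lies in $I^+(q)\cap I^-(p)$, and in particular that $q\ll r\ll p$ whenever $r\neq p,q$. Causal betweenness plus $q\ll p$ does not force chronological betweenness. For a counterexample, let $\eta$ be a broken null geodesic from $p$ to $q$ in two-dimensional Minkowski space, say $q=(0,0)$, $p=(2,0)$ with the corner at $r=(1,1)$; then $q\in I^-(p)$, yet the corner point $r$ is only null-related to both endpoints, so $r\notin I^+(q)\cup I^-(p)$. Fortunately nothing in your argument needs $r\in I(\{p,q\})$: the set equality $I(\eta)=I(\{p,q\})$ only requires $I^\pm(r)\subset I(\{p,q\})$ for each $r\in\eta$, which the push-up step already delivers from the causal ($J$) relations alone. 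With that claim (and the corresponding ``subtlety'' paragraph) removed, your proof coincides with the paper's.
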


\begin{proof}
    Since $I^-(\eta)\cap I^+(\eta) = I^-(p)\cap I^+(q)$, the result follows immediately from Lemma~\ref{lem:pq}.
\end{proof}

\begin{cor}\label{cor:comp_open}
If $s$ is an open set in $M$, then $s'$ is a wedge.
\end{cor}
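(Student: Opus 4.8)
\textbf{Proof plan for Corollary~\ref{cor:comp_open}.}

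The plan is to reduce the claim to the preceding corollary by writing an open set as a union of chronological diamonds and then invoking the fact, established earlier, that an arbitrary union of wedges is a wedge. First I would observe that for any open set $s\subset M$ and any point $x\in s$, global hyperbolicity lets me find points $p_x,q_x$ with $q_x\in I^-(p_x)$ and $x\in I^-(p_x)\cap I^+(q_x)\subset s$; indeed, pick a small timelike geodesic through $x$ whose chronological diamond stays inside $s$ (possible because $s$ is open and the chronological diamonds $I^-(p)\cap I^+(q)$ form a basis for the manifold topology in a globally hyperbolic spacetime). Then $s=\bigcup_{x\in s}\bigl(I^-(p_x)\cap I^+(q_x)\bigr)$, a union of sets of the form appearing in Lemma~\ref{lem:pq}.

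Next I would use that $I(s)=I^+(s)\cup I^-(s)$ distributes over unions: $\cl I(s)=\cl\bigl(\bigcup_x I(\{p_x,q_x\})\bigr)$, hence $s'=M\setminus\cl I(s)$ can be compared with $\bigcap_x \{p_x,q_x\}'$. More precisely, I claim $s'=\bigcap_x \{p_x,q_x\}'$: the inclusion $s'\subset\{p_x,q_x\}'$ for each $x$ is clear since $\{p_x,q_x\}\subset \cl s$ so $I(\{p_x,q_x\})\subset I(s)$; conversely if a point lies in every $\{p_x,q_x\}'$ it is spacelike to every diamond in the cover, hence spacelike to their union $s$, hence in $s'$ — here one uses that $I^\pm(s)=\bigcup_x I^\pm(I^-(p_x)\cap I^+(q_x))$ because chronology is ``witnessed'' by a single diamond of the cover. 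By Lemma~\ref{lem:pq}, each $\{p_x,q_x\}'$ is a wedge, and by Remark~\ref{wilem} an intersection of wedges is a wedge (applied to an arbitrary index set, which works since $(\cap_x w_x)''=\cap_x w_x$ follows from $w_x''=w_x$ and monotonicity of $'$). Therefore $s'$ is a wedge.

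The main obstacle I anticipate is the careful verification of the set-theoretic identity $s'=\bigcap_x\{p_x,q_x\}'$, in particular the direction showing that a point spacelike to every diamond in the cover is spacelike to $s$ itself. This requires the statement that $I^\pm(\bigcup_x U_x)=\bigcup_x I^\pm(U_x)$ for the open cover $\{U_x\}$ of $s$ (true for $I^\pm$ of arbitrary sets, since a chronological relation to a union is realized by a chronological relation to one member) together with $I^\pm(s)=I^\pm(\bigcup_x U_x)$, which is immediate from $s=\bigcup_x U_x$. Once these are in hand, taking closures and complements is routine. A secondary, purely bookkeeping point is to confirm that Remark~\ref{wilem}'s argument that the intersection of two wedges is a wedge extends verbatim to infinite intersections, which it does because $'$ is inclusion-reversing and $''$ is a closure operator, so $\cap_x w_x\subset(\cap_x w_x)''\subset\cap_x w_x''=\cap_x w_x$.
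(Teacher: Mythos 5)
Your proposal is essentially the paper's own argument: the paper likewise reduces the claim to Lemma~\ref{lem:pq} by writing $s' = \cap_{p\in s}\,\{p,q_p\}'$ with $q_p\in s\cap I^-(p)$, and then cites the fact that an intersection of wedges is a wedge, so your Alexandrov-diamond cover and your explicit check that Remark~\ref{wilem} extends to arbitrary index sets are only cosmetic variations. The set-theoretic identity you single out as the main obstacle is exactly the step the paper asserts in a single displayed line without further justification, so your sketch matches the paper's proof in both structure and level of rigor.
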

\begin{proof}
    Let $p\in s$. Since $s$ is open, there exists a point $q_p\in s\cap I^-(p)$. Hence
    \begin{equation}
        s' = \cap_{p\in s} [M \setminus \cl I(p)] = \cap_{p\in s} [M \setminus \cl I(p) \setminus \cl I(q_p)] = \cap_{p\in s}(\{p,q_p\}')~.
    \end{equation}
    By Lemma~\ref{lem:pq}, $\{p,q\}'$ is a wedge. Since an intersection of wedges is a wedge~\cite{Bousso:2023sya}, $s'$ is a wedge.
\end{proof}

\begin{lem}
    $\hat a$ is a wedge in $M$.
\end{lem}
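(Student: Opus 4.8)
The claim is that $\hat a \equiv (\tilde\scri(a)'_{\bar M}\cap M)'$ is a wedge in $M$, i.e., that it equals its own double spacelike complement $\hat a''$ computed in $M$. The natural strategy is to exploit Corollary~\ref{cor:comp_open}: if I can realize $\hat a$ as $s'$ for some open set $s\subset M$, I am done. So the plan is to analyze the set $s\equiv \tilde\scri(a)'_{\bar M}\cap M$ and show it is open in $M$.

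\textbf{Step 1: Unpack the construction.} Write $s = \tilde\scri(a)'_{\bar M}\cap M = (\bar M\setminus \cl_{\bar M} I_{\bar M}(\tilde\scri(a)))\cap M$. Since $\bar M\setminus \cl_{\bar M} I_{\bar M}(\tilde\scri(a))$ is the complement of a closed set in $\bar M$, it is open in $\bar M$; intersecting with $M$ (which is open in $\bar M$, being the interior of the manifold-with-boundary $\bar M$) gives that $s$ is open in $\bar M$ and hence open in $M$. This is the whole content, modulo checking that $\tilde\scri(a)'_{\bar M}$ really is of the form "complement of the closure of a chronology set," which follows directly from Def.~\ref{def:sc} applied in $\bar M$. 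I should be careful that Def.~\ref{def:sc} and $I^\pm$ make sense in $\bar M$: $\bar M$ with $\bar g$ is a Lorentzian manifold-with-boundary and causal relations are well-defined there, so this is fine.

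\textbf{Step 2: Apply Corollary~\ref{cor:comp_open}.} Having shown $s$ is open in $M$, the corollary immediately gives that $s'$ is a wedge in $M$. But $\hat a = s'$ by Def.~\ref{def:largefundcomp}, so $\hat a$ is a wedge. The only subtlety is that Corollary~\ref{cor:comp_open} is stated for $s$ open in $M$ with the spacelike complement taken in $M$, which is exactly the situation after Step~1, so nothing further is needed.

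\textbf{Main obstacle.} The one genuine subtlety is making sure the topological claim "$M$ is open in $\bar M$" and "$\tilde\scri(a)'_{\bar M}$ is open in $\bar M$" are both clean. The former is part of the definition of a conformal completion (the conformal boundary $\partial_{\bar M}M$ is the boundary of $M$ in $\bar M$, so $M = \bar M\setminus \partial_{\bar M}M$ is open). The latter requires only that $\cl_{\bar M}$ of anything is closed, which is trivial. A secondary point worth a sentence: one should note that $s$ could a priori be empty (e.g., if $\tilde\scri(a)$ is all of $\scri$, or if $a$ is such that its chronological past/future in $\bar M$ covers everything), but $\varnothing = \varnothing''$ is a wedge, so the degenerate case is harmless. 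Thus the proof is genuinely short; essentially all the work was done in establishing Corollary~\ref{cor:comp_open}.
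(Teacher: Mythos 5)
Your proof is correct and follows exactly the same route as the paper: note that $\tilde\scri(a)'_{\bar M}$ and $M$ are open in $\bar M$, so their intersection is open in $M$, and then invoke Corollary~\ref{cor:comp_open}. The extra detail you supply (openness as complement of a closure, openness of $M$ in $\bar M$, the empty-set case) is just fine but adds nothing beyond the paper's two-line argument.
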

\begin{proof}
   $ \tilde \scri(a)'_{\bar M}$ and M are open in $\bar M$, so their intersection is an open subset of $M$. By Corollary \ref{cor:comp_open}, $\hat a$ is a wedge.
\end{proof}

\begin{lem}\label{lem:large_ta_in_ap}
For any wedge $a$, the large fundamental complement $\hat a$ is a subset of the complement $a'$:
\begin{equation}
    \hat a\subset a' ~.
\end{equation}
\end{lem}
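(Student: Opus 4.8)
The plan is to reduce the claim to a one‑line monotonicity statement together with an elementary causality argument on the conformal completion $\bar M$. Write $T\equiv \tilde\scri(a)'_{\bar M}\cap M$, so that $\hat a = T'$ by Definition~\ref{def:largefundcomp}, where the unsubscripted complement is taken in $M$. Since the spacelike complement in $M$ is inclusion‑reversing among subsets of $M$ (if $U\subset V\subset M$ then $\cl I(U)\subset \cl I(V)$, hence $V'\subset U'$), it suffices to prove the single inclusion $a\subset T$; then $\hat a = T'\subset a'$ follows at once. Because $a\subset M$ trivially, proving $a\subset T$ reduces to showing $a\subset \tilde\scri(a)'_{\bar M}$, i.e., unwinding Definition~\ref{def:sc} applied in $\bar M$,
\begin{equation}
    a\cap \cl_{\bar M} I_{\bar M}(\tilde\scri(a)) = \varnothing~.
\end{equation}

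To establish this, recall that a wedge $a$ is open in $M$ (it equals $M\setminus \cl I(a')$, the complement of a closed set) and hence open in $\bar M$, since $M$ is open in $\bar M$. Suppose for contradiction that some $q\in a$ lies in $\cl_{\bar M} I_{\bar M}(\tilde\scri(a))$. Then the open neighbourhood $a$ of $q$ must meet $I_{\bar M}(\tilde\scri(a))$, so there is a point $r\in a$ with, say, $r\in I^+_{\bar M}(x)$ for some $x\in\tilde\scri(a)$ (the past case is identical under time reversal). Consequently $x\in I^-_{\bar M}(r)\subset I^-_{\bar M}(a)\subset I_{\bar M}(a)\subset [\cl I(a)]_{\bar M}$. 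But by Definition~\ref{def:conf_shadow}, $\tilde\scri(a)$ is constructed precisely as $\scri$ with $[\cl I(a)]_{\bar M}$ removed, hence is disjoint from $[\cl I(a)]_{\bar M}$, contradicting $x\in\tilde\scri(a)$. Therefore no such $q$ exists, which is what we needed; the alternative description $\tilde\scri(a)=\setint_\scri(\scri\cap[\cl a']_{\bar M})$ could be used in place of the first form with an analogous argument if preferred.

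The only genuinely geometric input here is the openness of wedges combined with the standard fact that chronological futures and pasts are open sets, which is what powers the ``closure meets open set'' step; no focusing, entropy, or global‑hyperbolicity argument enters. Accordingly I do not expect a serious obstacle. The one point that needs care is purely bookkeeping: keeping straight which of the operations $I$, $\cl$, and $'$ are performed in $M$ versus in $\bar M$, and in particular confirming that $I_{\bar M}(a)\subset[\cl I(a)]_{\bar M}$ under the precise reading of the subscript in Definition~\ref{def:conf_shadow}. Under the natural reading in which $I(a)$ and its closure are both evaluated in $\bar M$, this inclusion is immediate; under a reading in which $I(a)$ is the chronological region in $M$, one uses openness of $M$ and continuity of causal curves to see that any $x\in\scri\cap I_{\bar M}(a)$ is a limit in $\bar M$ of points of $I_M(a)$, which again yields $x\in[\cl I(a)]_{\bar M}$.
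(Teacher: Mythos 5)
Your proof is correct and follows essentially the same route the paper has in mind: the paper simply declares the result ``immediate'' from the characterization of the conformal shadow, and your argument supplies the obvious details — $a$ is spacelike (in $\bar M$) to $\tilde\scri(a)$ because the shadow is by definition disjoint from $[\cl I(a)]_{\bar M}$ (using openness of the wedge $a$ to pass to the closure), and then antitonicity of the spacelike complement gives $\hat a=T'\subset a'$. No substantive difference from the paper's reasoning, just a fully spelled-out version of it.
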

\begin{proof}
    The result follows immediately from Eq.~\eqref{eq:tsae}.
\end{proof}
\begin{lem}\label{lem:ta_subset_ha}
For any wedge $a$ the fundamental complement is a subset of the large fundamental complement:
\begin{equation}
    \tilde a \subset \hat a ~.
\end{equation}
\end{lem}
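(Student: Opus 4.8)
The plan is to reduce the claim to a statement about individual timelike curves, and then to a causal‐structure fact in the conformal completion $\bar M$.

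First I would use that, by Definitions~\ref{def:cw} and \ref{def:fc}, $\tilde a=(a')_C=C''$, where $C$ is the set of points lying on timelike curves that are contained in $a'$ and are infinite in proper time to both the past and the future. Since $\hat a$ is a wedge ($\hat a=\hat a''$) and $s\subseteq t$ implies $s''\subseteq t''$, it suffices to prove $C\subseteq\hat a$; equivalently, $\tau\subseteq\hat a$ for every such curve $\tau$. Writing $s\equiv\tilde\scri(a)'_{\bar M}\cap M$ for the set of bulk points that are spacelike in $\bar M$ to the conformal shadow (Definition~\ref{def:conf_shadow}), and unpacking $\hat a=(\,\cdot\,)'$ via Definition~\ref{def:sc}, the goal becomes $\tau\cap\cl_M I_M(s)=\varnothing$.

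The crux is the following \emph{core claim}: for every $z\in\tau$,
\begin{equation}\label{eq:coreclaim}
  I^+_M(z)\subseteq I_{\bar M}(\tilde\scri(a))\qquad\text{and}\qquad I^-_M(z)\subseteq I_{\bar M}(\tilde\scri(a))~.
\end{equation}
Granting \eqref{eq:coreclaim}, fix $x\in\tau$. Because $\tau$ is past-infinite there is $q_-\in\tau\cap I^-_M(x)$, and because $\tau$ is future-infinite there is $q_+\in\tau\cap I^+_M(x)$; hence we may choose an open set $U\ni x$ with $U\subseteq I^+_M(q_-)\cap I^-_M(q_+)$, so that $I^+_M(U)\subseteq I^+_M(q_-)$ and $I^-_M(U)\subseteq I^-_M(q_+)$. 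Applying \eqref{eq:coreclaim} at $q_\pm$ gives $I_M(U)\subseteq I_{\bar M}(\tilde\scri(a))$, which is disjoint from $s\subseteq\bar M\setminus I_{\bar M}(\tilde\scri(a))$. Therefore $U\cap I_M(s)=\varnothing$, and since $U$ and $I_M(s)$ are both open this upgrades to $U\cap\cl_M I_M(s)=\varnothing$, i.e.\ $x\in\hat a$. Using \emph{both} past- and future-infiniteness of $\tau$ is essential here: the buffer points $q_\pm$ are exactly what convert the pointwise inclusion \eqref{eq:coreclaim} into the open-neighborhood statement demanded by the closure in the definition of $\hat a$.

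It remains to prove the core claim, and this is where I expect the real work to be. Given $w\in I^+_M(z)$, the half-curve $\tau_{\ge z}$ is future-inextendible in $M$ and of infinite proper length, hence accumulates on $\partial_{\bar M}M$ to the future; since $\tau\subseteq a'$, no point of $\tau$ is chronologically related in $M$ to $a$, which should force these limiting boundary points outside $\cl_{\bar M}I_{\bar M}(a)$, and hence (when they lie on $\scri$) inside $\tilde\scri(a)=\scri\setminus\cl_{\bar M}I_{\bar M}(a)$; one then connects $w$ to such a point along $\tau$ to conclude $w\in I_{\bar M}(\tilde\scri(a))$, arguing symmetrically in the past direction. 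The main obstacle is a careful treatment of the conformal boundary: one must handle curves whose endpoints in $\bar M$ sit at timelike or spacelike infinity rather than on $\scri$ (arguing, via \cite{Bousso:2025xyc}, that the exterior of the future causal horizon $\partial I^+_M(\tau_{\ge z})$ must reach $\scri^+$, with $\tau\subseteq a'$ confining that intersection to $\tilde\scri(a)$), and one must dispose separately of the degenerate case $\tilde\scri(a)=\varnothing$ (where $\hat a=\varnothing$, so one needs that $a'$ then contains no past- and future-infinite timelike curve, i.e.\ $\tilde a=\varnothing$). These are the only points at which genuine global causal structure, as opposed to formal manipulation of the definitions, is invoked.
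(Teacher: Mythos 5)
Your overall strategy is in essence the paper's own argument, unpacked: the paper's proof just records $\tilde\scri(a)=\scri|_{a'}\supset\scri|_{\tilde a}$ and concludes $\tilde a\subset\hat a$, the implicit content being precisely your ``core claim'' that past- and future-infinite timelike curves in $a'$ are chronologically anchored, in $\bar M$, to points of the conformal shadow. Your formal reduction ($\tilde a=C''$, monotonicity of the double complement, and the buffer-point/openness argument converting the core claim into $\tau\cap\cl I_M(s)=\varnothing$) is correct and is a useful explicit version of what the paper leaves implicit.

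The problem is in the core claim itself, which you rightly identify as the load-bearing step, and whose sketch fails as written because the two halves of $\tau$ are crossed. For $w\in I^+_M(z)$ you invoke the \emph{future} half-curve $\tau_{\ge z}$ and propose to ``connect $w$'' to its future limit point; but $w$ is only known to lie to the future of $z$ and in general bears no causal relation whatsoever to the future endpoint of $\tau$ (take $\tau$ Rindler-like in Minkowski space, with future endpoint at one generator of $\scri^+$, and $w$ a point near the antipodal part of $\scri^+$). The correct pairing is the opposite one: for $w\in I^+_M(z)$ use the \emph{past} half-curve $\tau_{\le z}$ and its past limit point $P^-$, concatenating $\tau$ from $P^-$ up to $z$ with a timelike segment from $z$ to $w$ to obtain $w\in I^+_{\bar M}(P^-)$; symmetrically, $I^-_M(z)$ is controlled by the future limit point. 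Beyond that fixable slip, the genuinely hard ingredient is acknowledged but not supplied: one must show that suitable limit points can be found \emph{on} $\scri$ (not merely at timelike or spatial infinity --- an inertial worldline in Minkowski ends at $i^\pm\notin\scri$, so this is the generic situation, not a side case) and \emph{inside} $\tilde\scri(a)$, and one must dispose of $\tilde\scri(a)=\varnothing$. This is exactly where the paper's implicit asymptotic assumptions enter (its Remark expressing $\tilde a$ through $\scri^\pm|_{a'}$ is stated only for the class of spacetimes it considers), so your level of rigor there matches the paper's one-liner; but since you structured the whole proof around the core claim, it needs to be both corrected in direction and actually established for the argument to close.
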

\begin{proof}
It follows from the definition of $\tilde \scri (a)$ that \begin{equation}
    \tilde \scri (a) =\scri|_{a'} \supset \scri|_{\tilde a} ~.
\end{equation}
This implies $\hat a \supset \tilde a$.
\end{proof}
\begin{lem}\label{lem:large_ta_anormal}
    For any wedge $a$, the large fundamental complement $\hat a$ is antinormal.
\end{lem}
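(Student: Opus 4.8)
The plan is to mirror the proof of Lemma~\ref{lem:ta_anormal}, exploiting the fact that $\hat a$ is, by Eq.~\eqref{eq:tildea}, the bulk causal wedge of the conformal shadow: $\hat a=(\tilde\scri(a)'_{\bar M}\cap M)'$ has exactly the form of $\CW(B)$ with $B$ replaced by $\tilde\scri(a)$. So the strategy is to show that the two Cauchy horizons bounding $\hat a$ are, respectively, a future and a past causal horizon, and then deduce antinormality from the Generalized Second Law~\cite{Bekenstein:1972tm,Wall:2010jtc,Bousso:2025xyc}. This is the same argument already used for $\CW(B)$ in the proof of Properties~II and ii in Sec.~\ref{sec:ads}; the content of the present lemma is that it still works when the subset of $\scri$ in question is the shadow rather than a boundary wedge.

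Concretely, the steps are: (1) Write $u\equiv\tilde\scri(a)'_{\bar M}\cap M=M\setminus\cl I(\tilde\scri(a))$, with the closure and chronology computed in $\bar M$, so that $\hat a=u'=M\setminus\cl I(u)$ in $M$; unwinding this double complement, show that $\partial\hat a$ is contained in the union of the portions lying in $M$ of $\partial I^-(\tilde\scri(a)\cap\scri^+)$ and $\partial I^+(\tilde\scri(a)\cap\scri^-)$, both computed in $\bar M$. (2) Identify the first of these with $H^+(\hat a)$ and the second with $H^-(\hat a)$: they lie in $I^+(\hat a)$ and $I^-(\hat a)$ respectively because the corresponding parts of the shadow lie to the future, resp.\ past, of the relevant bulk points. (3) Observe that for any $S\subset\scri^+$ the generators of $\partial I^-(S)\cap M$ are future-inextendible in $M$ (they run off to $\scri^+$), so this is a future causal horizon in the sense required by Wall's form of the GSL, and likewise $\partial I^+(S)\cap M$ for $S\subset\scri^-$ is a past causal horizon. (4) Apply the GSL to conclude that $\hat a$ is future-nonexpanding at each $p\in\eth\hat a$ (using that $\eth\hat a$ lies in the closure of the future causal horizon $H^+(\hat a)$) and past-nonexpanding there as well; hence $\hat a$ is antinormal. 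As in Lemma~\ref{lem:ta_anormal}, the conclusion is in the discrete sense of Def.~\ref{def:fne} and requires no smoothness of $\eth\hat a$. We may use freely that $\hat a$ is a wedge, which was already established above.

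An equivalent route, paralleling Lemma~\ref{lem:ta_anormal} even more closely, is available whenever $\tilde\scri(a)$ meets both $\scri^+$ and $\scri^-$: one shows that $\hat a=\Cup_{p,q}\{p,q\}''$ over all $p\in\tilde\scri(a)\cap\scri^+$ and $q\in\tilde\scri(a)\cap\scri^-$, each $\{p,q\}''$ being a wedge by Lemma~\ref{lem:pq} and its corollary and being the exterior of a past and a future causal horizon (the relevant pieces of $\partial I^+(q)$ and $\partial I^-(p)$), hence antinormal by the GSL. Then $\hat a$ is antinormal by Theorem~46 of Ref.~\cite{Bousso:2024iry}. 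This is the literal analogue of the $\tilde a=\Cup_\tau\tau''$ decomposition, with the past- and future-infinite timelike curves $\tau$ replaced by causal curves joining $\scri^-$ to $\scri^+$ within the shadow.

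The main obstacle is Steps~(1)--(3): confirming that the double-complement construction genuinely produces Cauchy horizons that are boundaries of pasts/futures of subsets of $\scri$. The subtlety is that $\tilde\scri(a)$ is only an \emph{open} subset of $\scri$ and need not be a wedge on $\scri$, so one cannot simply invoke the boundary-region statement of Sec.~\ref{sec:ads}; the interior operation in Eq.~\eqref{eq:tsae} and the intermediate restriction to $M$ must be tracked carefully. One must also dispose of the degenerate cases: $\hat a=\varnothing$ or $\hat a=M$ (trivially antinormal), and configurations in which $\tilde\scri(a)$ meets only one of $\scri^\pm$, so that only one Cauchy horizon is present in $M$ and nonexpansion in the other direction is vacuous on $\eth\hat a$. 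All of these are handled exactly as in the proof of Lemma~\ref{lem:ta_anormal} and in the AdS theorem, so the argument requires no input beyond the GSL, Theorem~46 of Ref.~\cite{Bousso:2024iry}, and Lemma~\ref{lem:pq}.
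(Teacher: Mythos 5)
Your main argument (steps 1--4) is essentially the paper's own proof: one observes that $\eth \hat a$ lies on the intersection of a future and a past causal horizon---the boundaries in $M$ of the past/future of subsets of $\scri$---and then invokes the Generalized Second Law to conclude that $\hat a$ is both future- and past-nonexpanding, i.e.\ antinormal. The subtlety you flag about $\tilde\scri(a)$ being merely an open subset of $\scri$ is not an obstacle, since the cited fact that a future causal horizon is the boundary of the past of \emph{any} subset of $\scri$ already covers this case, so the alternative pairwise wedge-union route is not needed.
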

\begin{proof}
    The definition of $\hat a$ implies that 
   $ \eth \hat a$ lies on the intersection of a future and past causal horizon. (A future causal horizon is the boundary of the past of any subset of $\scri$~\cite{Bousso:2025xyc}.) By the Generalized Second Law of thermodynamics~\cite{Bekenstein:1972tm, Wall:2010jtc, Bousso:2025xyc}, $a$ is past and future nonexpanding.
\end{proof}

\begin{thm}\label{thm:large_ta_empty}
    If $\hat a= \varnothing$, then $\emax(a)=\emin(a)=M$.
\end{thm}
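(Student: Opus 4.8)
The plan is to mirror, essentially verbatim, the proof of Theorem~\ref{thm:ta_empty}, since that proof only used $\tilde a=\varnothing$ through the two consequences ``$M\in F(a)$'' and ``$M\in G(a)$, and no other $g\supset a$ lies in $G(a)$''. With $\hat a$ replacing $\tilde a$ throughout the definitions of accessibility, $F(a)$, and $G(a)$ (as announced at the start of this Appendix), the hypothesis $\hat a=\varnothing$ plays exactly the role that $\tilde a=\varnothing$ played before, so the same argument should go through with cosmetic changes.

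Concretely, first I would check that $M\in F(a)$ under the $\hat a$-modified Def.~\ref{def:accessible}: condition~I becomes $a\subset f\subset \hat a'$, and since $\hat a=\varnothing$ we have $\hat a'=M$, so $a\subset M\subset M$ holds; condition~II holds because $M$ is trivially antinormal (there are no edge points $p\in\eth M\setminus\eth a$ to check, or one invokes Lemma~\ref{lem:large_ta_anormal}); condition~III holds because for any wedge $h\subset M$ one has $\hmg(M|h)\le 0$ — indeed $\A(M)=0$ and $\hmax(M|h)\le 0$ since conditioning on a subsystem of a pure (global) state gives nonpositive conditional max-entropy. Since $\emax(a)$ is the wedge union of all members of $F(a)$ and $M\in F(a)$, we get $\emax(a)=M$.

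Next, for $\emin(a)$ I would either invoke Theorem~\ref{thm:emaxemin} ($\emax(a)\subset\emin(a)$), which with $\emax(a)=M$ forces $\emin(a)=M$; or argue directly as in the bracketed remark of Theorem~\ref{thm:ta_empty}: $M\in G(a)$ because condition~i is $a\subset g\subset\hat a'=M$, condition~ii holds since $M'=\varnothing$ is trivially antinormal, and condition~iii is vacuous since $M'\cap\hat a'=\varnothing$; meanwhile any wedge $g\supsetneq a$ other than $M$ fails condition~iii because $\hmg(g'|M')>0$ (the area/entropy of a nonempty edge is strictly positive while $M'$ has vanishing generalized entropy), so $G(a)=\{M\}$ and hence $\emin(a)=\cap_{g\in G(a)}g=M$.

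I do not anticipate a genuine obstacle here; the only subtlety is bookkeeping, namely making sure that every place the proof of Theorem~\ref{thm:ta_empty} used ``$\tilde a$'' is legitimately replaced by ``$\hat a$'' in the $\hat a$-version of the definitions, and that the needed auxiliary facts (Lemma~\ref{lem:large_ta_anormal} for antinormality, Theorem~\ref{thm:emaxemin} for the $\emin$ half) have themselves been established for the $\hat a$-modified definitions — which the opening of this Appendix asserts. If one wants to be fully self-contained one could also note $\hat a'=M$ explicitly as the single input that makes $M$ satisfy condition~I/i, since by Remark~\ref{wilem} $\varnothing'=(\varnothing'')'=M$. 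So in the write-up I would simply say: the proof is identical to that of Theorem~\ref{thm:ta_empty}, with $\tilde a$ replaced by $\hat a$ throughout, using $\hat a=\varnothing\implies\hat a'=M$.
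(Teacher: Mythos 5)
Your proof is correct and follows exactly the paper's route: the paper itself proves Theorem~\ref{thm:large_ta_empty} by simply declaring the proof identical to that of Theorem~\ref{thm:ta_empty}, with $\hat a$ playing the role of $\tilde a$ (so that $\hat a=\varnothing$ gives $\hat a'=M$ and $M\in F(a)$, $G(a)=\{M\}$). Your write-up just makes explicit the substitutions and auxiliary facts that the paper leaves implicit.
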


\begin{proof}
    The proof is the same as for Theorem~\ref{thm:ta_empty}.
\end{proof}

\begin{lem} \label{large_tta_empty}
      Let $\hat a$ be the large fundamental complement of $a$ in $M$, and let $\left. \widehat{(\hat a)}\right|_{a'}$ be the fundamental complement of $\hat a$ in $a'$. Then 
      \begin{equation}
          \left. \widehat{(\hat a)}\right|_{a'} = \varnothing. 
      \end{equation}
\end{lem}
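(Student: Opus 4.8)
The statement to prove is Lemma~\ref{large_tta_empty}: the large fundamental complement of $\hat a$, computed in the spacetime $a'$, is empty. The plan is to mimic the (very short) argument that worked for the ordinary fundamental complement in Lemma~\ref{tta_empty}, but adapted to the conformal-shadow definition of $\hat a$. The key point is that $\hat a$ was \emph{defined} to contain the entire double-complement of the conformal shadow $\tilde\scri(a)$, i.e. roughly ``everything in $M$ that can be causally reached from the part of $\scri$ not controlled by $a$.'' So when we pass to the smaller spacetime $a'$ and look at the part of $a'$ \emph{not} in $\hat a$, that leftover region should be causally disconnected from $\scri|_{a'}$ altogether, and hence its own conformal shadow (computed in $a'$) is empty.

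\textbf{Key steps, in order.} First I would unwind the definitions: by Def.~\ref{def:largefundcomp}, $\hat a = (\tilde\scri(a)'_{\bar M}\cap M)'$, and by Eq.~\eqref{eq:tsae}, $\tilde\scri(a)=\scri|_{a'}$ (the interior in $\scri$ of $\scri\cap\cl(a')$). Second, I would observe that $\hat a$ is by construction a wedge whose edge lies on the intersection of the past horizon of $\scri^-|_{a'}$ and the future horizon of $\scri^+|_{a'}$ (this is essentially the content of Lemma~\ref{lem:large_ta_anormal}); equivalently, $\hat a \supset I^+(\scri^-|_{a'})\cap I^-(\scri^+|_{a'})$ intersected with $M$, appropriately closed up. Third, I would pass to the spacetime $a'$ and compute $\tilde\scri(\hat a)|_{a'}$, the conformal shadow of $\hat a$ in $a'$: by definition this is $\setint_{\scri|_{a'}}\!\big(\scri|_{a'}\cap[\cl(\hat a{}'|_{a'})]_{\bar{a'}}\big)$. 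The claim is that this set is empty. The reason is that $\hat a$ already contains every point of $a'$ that lies to the causal future of any point of $\scri^-|_{a'}$ and to the causal past of any point of $\scri^+|_{a'}$; hence the complement of $\hat a$ in $a'$ contains no point from which one can send a causal signal both from $\scri^-|_{a'}$ and to $\scri^+|_{a'}$. Therefore the portion of $\scri|_{a'}$ that is causally inaccessible from $\hat a{}'|_{a'}$ is actually all of $\scri|_{a'}$ or something with empty interior --- but since $\hat a{}'|_{a'}$ touches neither $\scri^+|_{a'}$ nor $\scri^-|_{a'}$ in the way needed, one finds $\cl(\hat a{}'|_{a'})\cap\scri|_{a'}$ has empty interior in $\scri|_{a'}$. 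Consequently $\tilde\scri(\hat a)|_{a'}=\varnothing$, and then Eq.~\eqref{eq:tildea} applied inside $a'$ gives $\widehat{(\hat a)}|_{a'} = (\varnothing'_{\bar{a'}}\cap a')' = (a')'|_{a'}=\varnothing$.

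\textbf{Main obstacle.} The delicate part is the third step: carefully checking that the conformal shadow of $\hat a$ inside $a'$ really has empty interior on $\scri|_{a'}$, rather than merely that $\hat a$ is ``large.'' One must be careful that $\scri|_{a'}$ (null infinity of the sub-spacetime $a'$) is the relevant boundary, and that closures and interiors are taken in the conformal completion $\bar{a'}$ of $a'$, not of $M$; there is a genuine risk that new pieces of ``infinity'' appear at the Cauchy horizons of $a$ when one restricts to $a'$. I would handle this by noting that global hyperbolicity of $a'$ ensures its conformal completion adds only the genuine null infinity $\scri|_{a'}\subset\scri$, and that every generator of $\scri^\pm|_{a'}$ lies in $\cl I^\pm(\hat a)$ by construction of $\hat a$; thus no open subset of $\scri|_{a'}$ survives in the shadow. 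Once that is pinned down, the rest is a direct unwinding of definitions exactly parallel to Lemma~\ref{tta_empty}.
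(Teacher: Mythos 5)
Your proof hinges entirely on the claim in your third step, that $\tilde\scri(\hat a)|_{a'}=\varnothing$, i.e.\ that $\scri|_{a'}\subset \cl I(\hat a)$, which you justify only ``by construction of $\hat a$.'' This is a genuine gap: Def.~\ref{def:largefundcomp} builds $\hat a$ as a double spacelike complement of the conformal shadow, and nothing in that construction guarantees that $\hat a$ causally reaches every point of $\scri|_{a'}$. In fact the claim is false in general. Whenever $\hat a=\varnothing$ while $\scri|_{a'}=\tilde\scri(a)\neq\varnothing$ --- for instance a bounded wedge $a$ in a Big Bang cosmology with Minkowski-like future null infinity, which is precisely the situation of the right panel of Fig.~\ref{fig:cosmology_examples}, where the paper states $\hat a=\varnothing$ --- one has $\cl I(\hat a)=\varnothing$, so the conformal shadow of $\hat a$ computed in $a'$ is \emph{all} of $\scri|_{a'}$, not empty, and your final computation never gets off the ground. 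Your supporting remark that $\hat a\supset I^+(\scri^-|_{a'})\cap I^-(\scri^+|_{a'})$ conflates $\hat a$ with $\tilde a$: that intersection is the characterization relevant to the \emph{small} fundamental complement, and it only bounds $\hat a$ from below; it says nothing about whether points of $\scri|_{a'}$ (in particular when one of $\scri^\pm|_{a'}$ is absent, or when the two pieces share no common causal patch) lie in $\cl I(\hat a)$. So the ``main obstacle'' you flag is not a technicality about completions of $a'$; it is the substantive assertion your proof needs and does not (and cannot, in general) establish.

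The paper's proof avoids this entirely and never claims the shadow of $\hat a$ vanishes. It argues: (i) by Eq.~\eqref{eq:tsae}, $\scri|_{a'}=\tilde\scri(a)$; (ii) therefore $\tilde\scri(\hat a)|_{a'}\subset\tilde\scri(a)$, and since the hat construction is monotonic in its shadow, $\widehat{(\hat a)}\big|_{a'}\subset\hat a$; (iii) by the $a'$-version of Lemma~\ref{lem:large_ta_in_ap}, $\widehat{(\hat a)}\big|_{a'}$ also lies in the complement of $\hat a$ within $a'$; a set contained in both $\hat a$ and its complement must be empty. If you want to keep your route you would have to prove the strictly stronger statement $\scri|_{a'}\subset\cl I(\hat a)$, which fails in the examples above; the viable fix is to replace your step 3 by the two-containment argument just described.
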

\begin{proof}
By Eq.~\eqref{eq:tsae}, null infinity of the spacetime $a'$ is identical to the conformal shadow of $a$: $\scri|_{a'}= \scri\cap \cl  (a') = \tilde\scri(a)$. Then we have $\tilde \scri(a) \supset \tilde\scri(\hat a)|_{a'} $. This implies that $\hat a \supset  \left. \widehat{(\hat a)}\right|_{a'} $. By Lemma \ref{lem:large_ta_in_ap}, we also have $\hat a'|_{a'} \supset  \left. \widehat{(\hat a)}\right|_{a'}$ where $\hat a'|_{a'}$ is the complement of $\hat a$ in $a'$. This can only be true if $\left. \widehat{(\hat a)}\right|_{a'} = \varnothing$.
\end{proof}

\subsection{Differences Between the Large and Regular Fundamental Complement}\label{sec-largediff}

A key difference between $\tilde a$ and $\hat a$ is that the latter need not contain timelike curves that are both past- \emph{and} future-infinite; \emph{or} is enough. There are two important classes of spacetimes where this makes a difference for $\emax$ and $\emin$: universes with a Big Bang, or with a Big Crunch (but not both) need not be trivially reconstructible (though some are anyway). And there is now a larger class of input wedges from which global de Sitter space is \emph{not} trivially reconstructible. Examples are shown in Fig.~\ref{fig:ds-examples-large_fc} -\ref{fig:cosmology_examples}. 

\begin{figure}[H]
\centering
    \includegraphics[width=1\linewidth]{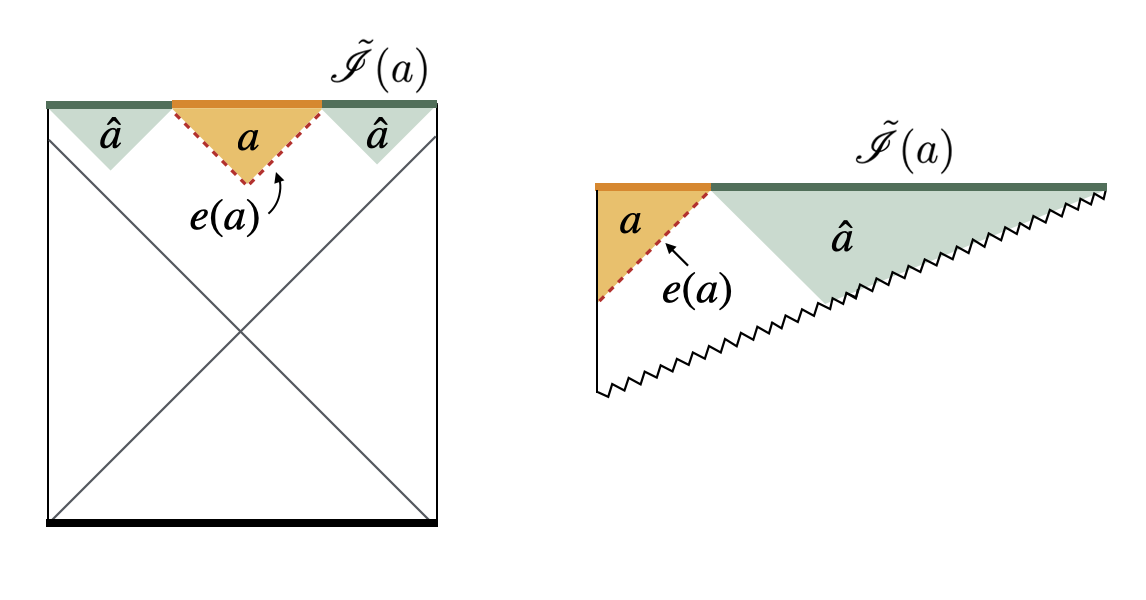}
    \caption{Examples in asymptotically de Sitter spacetimes where $\hat a=a'\neq\varnothing$ whereas $\tilde a=\varnothing$. The conformal shadow of $a$ is shown (thick green lines). The right figure illustrates that a Big Bang cosmology can have nonvanishing $\hat a$, and hence is not trivially reconstructible.}
    \label{fig:ds-examples-large_fc}
\end{figure}

\begin{figure}[H]
    \centering
    \includegraphics[width=1\linewidth]{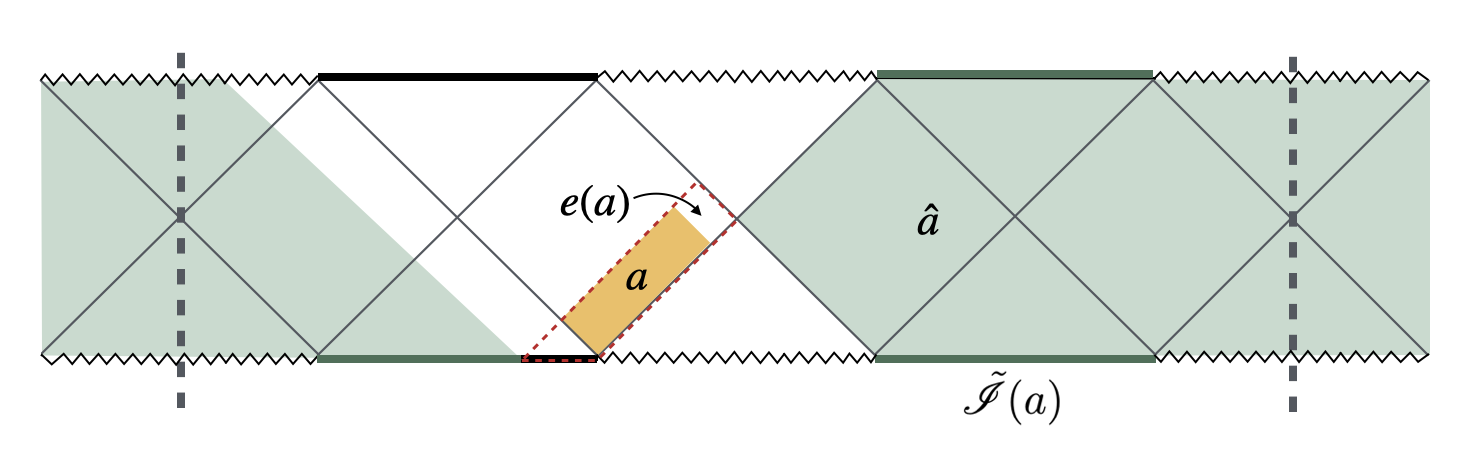}
    \caption{Schwarzschild-de Sitter spacetime; the wedge $a$ lies between the black hole and cosmological horizons. Comparison with Fig.~\ref{fig:dS-sch-ds} shows that $\hat a$ is larger than $\tilde a$, resulting in a smaller $e(a)$.}
    \label{fig:dS-sch-ds-large_fc}
\end{figure}

The analogue of Corollary~\ref{cor:trivial} is now the following statement: suppose that the spacetime $M$ contains no future-infinite timelike curve, or that $M$ contains no past-infinite timelike curve. (Loosely speaking this means that $M$ starts with a Big Bang, or ends with a Big Crunch.) Let $a\subset M$ be a wedge. Then $\emax(a)=\emin(a)=M$. Examples are shown Fig.~\ref{fig:cosmology_examples}. 

 \begin{figure}[H]
     \centering
     \includegraphics[width=1\linewidth]{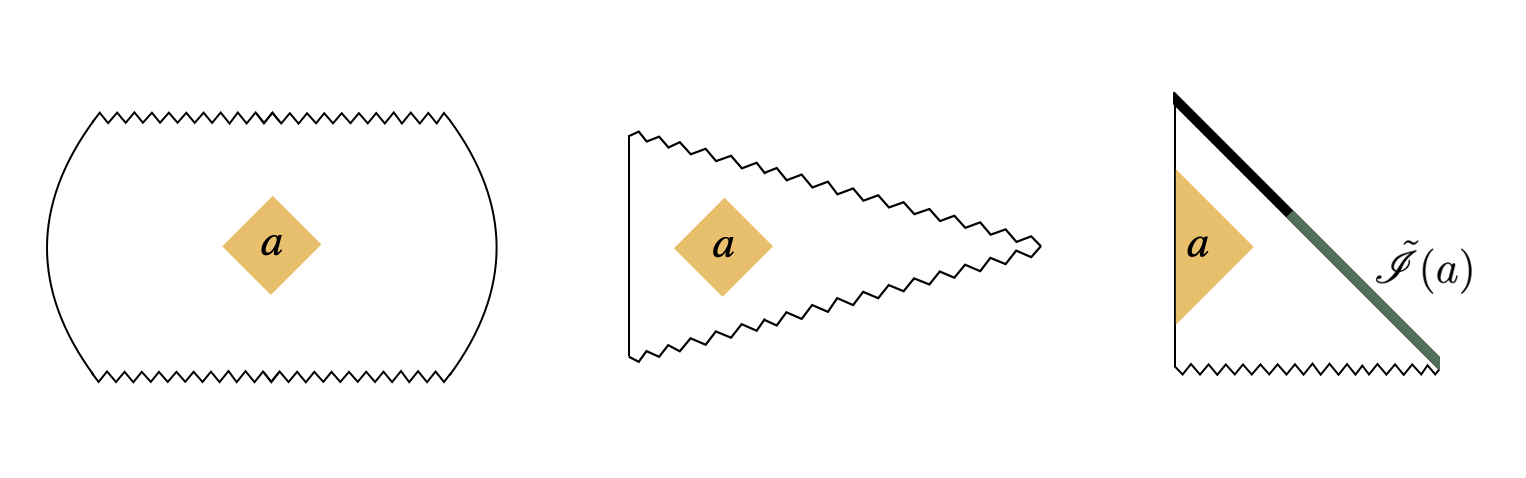}
     \caption{
    A sufficient condition for $\hat a=\varnothing$, and hence for $e(a)=M$, is the presence of both a Big Bang and a Big Crunch singularity. (Recall that for $\tilde a=\varnothing$, one of the two suffices.) This does not require the universe to be spatially closed (left); for example, an open Friedmann-Robertson-Lema\^itre-Walker cosmology with negative cosmological constant also recollapses (middle). Nor is the presence of both singularities a necessary condition for $\hat a=\varnothing$. The figure on the right shows an example of a spatially flat expanding universe with $\hat a=\varnothing$, $e(a)=M$.}
     \label{fig:cosmology_examples}
 \end{figure}

\bibliographystyle{JHEP}
\bibliography{fundamental_comp}
\end{document}